\documentclass[10pt]{article}
\usepackage[utf8]{inputenc}
\usepackage{subfiles}
\usepackage[english]{babel}
\usepackage[a4paper,top=3 cm,bottom=3 cm,inner=2.5 cm,outer=2.5 cm]{geometry} 
\usepackage{graphicx}
\usepackage{bm}
\usepackage{mathtools}
\usepackage{amsmath}
\usepackage{amsfonts}
\usepackage{subcaption}
\usepackage{mathbbol}
\usepackage{enumitem}
\usepackage{cancel}
\DeclareMathOperator{\Tr}{Tr}
\setlist[itemize]{label=\textbullet}
\usepackage{float} 
\usepackage[table]{xcolor} 
\usepackage{booktabs} 
\usepackage[utf8]{inputenc} 
\usepackage{hyperref}
\usepackage{amsmath, amssymb, amsthm} 
\usepackage{eurosym}
\usepackage{tikz}
\newtheorem{thm}{Theorem}[section] 
\newtheorem{lem}[thm]{Lemma}
\newtheorem{prop}[thm]{Proposition} 

\newtheorem{defn}{Definition} [section]
\newtheorem{rem}{Remark}[section]

\theoremstyle{definition}

\usepackage{wasysym}
\hypersetup{pdfborder={0 0 0}} 
\addtolength{\skip\footins}{2 em} 
\usepackage{url} 
 
\usepackage{titlesec}
\setcounter{secnumdepth}{4}

\titleformat{\paragraph}
{\normalfont\normalsize\bfseries}{\theparagraph}{1em}{}
\titlespacing*{\paragraph}
{0pt}{3.25ex plus 1ex minus .2ex}{1.5ex plus .2ex}

\titleformat{\subparagraph}
{\normalfont\normalsize\bfseries}{\theparagraph}{1em}{}
\titlespacing*{\paragraph}
{0pt}{3.25ex plus 1ex minus .2ex}{1.5ex plus .2ex}

\usepackage{mathrsfs}
\usepackage{leftidx}
\usepackage{braket}
\usepackage{physics}

\newcommand{\di}{\text{d}}
\newcommand{\en}[1]{\omega_{\textbf{#1}}}

\newcommand\numberthis{\addtocounter{equation}{1}\tag{\theequation}}

\usepackage{csquotes}
\usepackage[
backend=biber,
giveninits=true,
style=alphabetic,
sorting = nty,
isbn = false,
url = false,
maxbibnames = 4,
]{biblatex}
\addbibresource{bibliografia.bib}

\begin{document}
\par    
\bigskip  
\noindent 
\LARGE{\bf Perturbative Construction of Equilibrium States for Interacting Fermionic Field Theories}

\vspace*{0.3cm}
\noindent
\large{\bf Semiclassical Maxwell Equation and the Debye Screening Length}
\bigskip \bigskip
\par 
\rm 
\normalsize 
 
\large
\noindent 
{\bf Stefano Galanda$^{1,a}$}\\
\par
\small

\noindent$^1$ Dipartimento di Matematica, Universit\`a di Genova - Via Dodecaneso, 35, I-16146 Genova, Italy. \smallskip
\smallskip

\noindent E-mail: 
$^a$stefano.galanda@dima.unige.it

\normalsize
${}$ \\ \\
 {\bf Abstract} \ \
In this paper, we aim to extend to interacting massive and massless fermionic theories the recent perturbative construction of equilibrium states developed within the framework of perturbative algebraic quantum field theory on Lorentzian spacetime. 
We analyze the case of interactions which depend on time by a smooth switch-on function and on space by a suitably bounded function that multiplies an interaction Lagrangian density constructed with the field of the theory.
The construction is achieved by first considering the case of compact support and, in a second step, by removing the space cutoff 
with a suitable limit (adiabatic limit). 
As an application, we consider a Dirac field interacting with a classical stationary background electromagnetic potential, and we compute at first perturbative order (linear response) the expectation value of the conserved current on the equilibrium state for the interacting theory. 
The resulting expectation value is written as a convolution, in the space coordinates, between the electromagnetic potential and an integral kernel which, at vanishing conjugate momentum, gives the inverse of the square Debye screening length at finite temperature. 
The corresponding Debye screening effect is visible
in the backreaction treated semiclassically of this current on the classical background electromagnetic potential sourced by a classical external current.
\bigskip
${}$

\section{Introduction}
To understand the thermodynamical properties of a system with respect to a certain dynamics, the starting point is the description of its equilibrium configurations. More specifically, an equilibrium configuration is a stationary configuration to which it is possible to associate certain macroscopic quantities like the temperature. In describing systems with finitely many degrees of freedom, on a finite dimensional Hilbert space $\mathcal{H}$, an equilibrium configuration with respect to a given time evolution (dynamics) generated by a Hamiltonian $H$ is represented by a Gibbs state at inverse temperature $\beta$. These, are positive of unit trace and hermitian matrices of the kind $\rho_{\beta} = \mathcal{N} e^{-\beta H}$, where $\mathcal{N}$ is a normalization factor.
They implicitly define a positive normalized and linear functional (state) that acts on any $A \in \mathcal{B}(\mathcal{H})$ as $\left \langle A \right \rangle_{\beta} \coloneqq \Tr(\rho_{\beta} A)$. A careful analysis of the analyticity properties of Gibbs states led to a characterization of equilibrium configurations, first done by Kubo, Martin, and Schwinger (KMS) \cite{Kubo, MartinSchwinger}, via properties of the associated functionals with respect to the considered dynamics (see Definition \ref{def: KMScond} for further details). However, as a property of states, the KMS condition defines equilibrium in the most general setting. Indeed, given a $*$-dynamical system $(\mathfrak{A},\tau_t)$ consisting of a $*$-algebra $\mathfrak{A}$ and a one parameter family of $*$-automorphisms $\tau_t$ on $\mathfrak{A}$, a state $\omega: \mathfrak{A} \to \mathbb{C}$ is defined to be of equilibrium at inverse temperature $\beta$ with respect to $\tau_t$ if it satisfies the KMS condition with parameter $\beta$ also known as $\beta$-KMS condition.\\
The stability of the equilibrium configurations under perturbations of the dynamics, naturally arises when one aims at studying open systems, equilibrium systems in thermal contact, more generally systems involving interactions among the constituents. Indeed, denoting by $\tau_t^V$ the perturbed dynamics, a $\beta$-KMS state $\omega^{\beta}$ with respect to $\tau_t$ is in general not of equilibrium with respect to $\tau_t^V$. In quantum statistical mechanics where $\mathfrak{A}$ is a $C^*$-algebra, it was shown that if a certain clustering condition holds, see e.g. \cite{BratelliKishimotoRobinson, BratteliRobinson}, in the limit $t \to \infty$ the expectation value $\omega^{\beta}(\tau_t^V(A))$ converges to a specific $\beta$-KMS state for $\tau_t^V$ denoted $\omega^{\beta,V}(A)$. The latter, first defined by Araki in \cite{Araki}, is a state that satisfies the KMS condition for $\tau_t^V$.\\
The corresponding existence of equilibrium states for interacting quantum field theories, namely when $\mathfrak{A}$ is only a $*$-algebra, suffers in the traditional approaches of spurious infrared divergences at the higher loop, see e.g. \cite{Steinmann}. Nevertheless, for an interacting massive scalar field theory on Minkowski spacetime, these issues were circumvented by Fredenhagen and Lindner \cite{FredenhagenLindnerKMS_2014, Lindner} generalizing the mentioned work of Araki in the setting of $*$-dynamical system. In this setup, in order to define the interacting dynamics, it is crucial the existence of the Bogoliubov map. This defines an embedding of the algebra of the interacting theory as formal power series into the algebra of smeared Wick polynomials of the free theory (see Equation \eqref{eq: Bogol}). It is thanks to this map that $\tau_t^V$ is defined via pullback of the free dynamics $\tau_t$ with respect to the corresponding Bogoliubov map (see Equation \eqref{def: InterDynam}).\\
In the works of Fredenhagen and Lindner it is stressed, with the caveat that the equilibrium state for the interacting theory and the dynamics are defined just as formal power series, how the existence of equilibrium configurations needs to be carefully studied when the interactions have arbitrary spatial support. Furthermore, in \cite{FaldinoEquilibriumpAQFT}, it was shown that also the stability of the constructed interacting KMS states is crucially related to the spatial support of the interaction. Indeed, the corresponding limit $t \to \infty$ of $\omega^{\beta}(\tau_t^V(A))$ is shown to converge to the $\beta$-KMS state with respect to $\tau_t^V$ when the support of the perturbation $V$ is spatially compact otherwise, in general, counterexamples can be constructed.\\
Extending the result beyond the massive scalar field arises as a natural quest. Equilibrium states at positive temperature were constructed for massless scalar fields in \cite{DragoHackPin} using an idea similar to that of the thermal mass for interacting $\phi^4$-theories. The latter is the idea to use the free KMS two-point function in the definition of Wick polynomials to gain a mass term in the interacting Lagrangian that, by a partial series resummation, is included in the free theory.\\

In this paper, we extend the results of \cite{FredenhagenLindnerKMS_2014, Lindner} to the case of $\mathfrak{A}$ the $*$-algebra describing the Wick polynomials of a free fermionic field theory on Minkowski spacetime, including the massless case treated without the need of the thermal mass argument. This is achieved, as fermionic theories possess a better infrared behavior both at positive and zero temperature as a consequence of two key features: the Fermi as opposed to the Bose factor and the additional derivatives in the two-point functions \eqref{eq: 2puntiKMS},\eqref{eq: 2puntiGround}.\\
Nonetheless, the construction of equilibrium states for interacting theories, especially fermionic at zero temperature, was non-perturbatively studied in various models in different spacetime dimensions relying on Renormalization Group techniques reviewed in \cite{BenfattoGallavotti, Mastropietro, Salmhofer}. In $d=2$ spacetime dimensions we mention the infrared Gross-Neveu model \cite{GawedzkiKupiainen, FMRS} and the Thirring model \cite{BenfattoFalcoMastropietro}. While, in $d=4$ spacetime dimension, the infrared massive QED \cite{MastropietroQED4}. However, compared to these works, the perturbative results of \cite{FredenhagenLindnerKMS_2014} produce well-defined non-trivial interacting equilibrium states for the massive scalar field, also at positive temperature, both in the infrared and ultraviolet limits. The latter, compared to the aforementioned results, introduces the cutoffs in position space. A space cutoff $h \in \mathcal{C}_0^{\infty}(\mathbb{R}^3)$ and a time cutoff $\chi \in \mathcal{C}^{\infty}(\mathbb{R})$ are used in the definition of the interaction. After the construction of the equilibrium states, the dependence on the cutoffs is discussed. The space cutoff is proven, for the mentioned theories, to be removable at the level of expectation values showing the existence of the limit $h \to 1$ at each perturbative order. The time cutoff is shown to be arbitrary in the family of switching functions $\mathcal{J}_{\epsilon} \coloneqq \left\{ \chi \in \mathcal{C}^{\infty}(\mathbb{R}) : \chi = 0 \,\, \mathrm{on} \,\, (-\infty, -2\epsilon) \, , \, \chi = 1 \,\, \mathrm{on} \,\, [-\epsilon, +\infty) \right\}$ for $\epsilon > 0$, in the sense that a different choice $\chi'$, even for different $\epsilon' \neq \epsilon$, leads to the exact same equilibrium state.\\
Despite our formalism being perturbative, we lack a systematic non-perturbative quantization procedure for general theories, we can treat almost any kind of interacting theory in four spacetime dimensions. Indeed, the interacting Lagrangian $\mathcal{L}_I$ entering the total action of the theory needs just to be invariant under spacetime translations up to possibly a sufficiently regular function of the space coordinates playing the role of an external potential. For example the case $\mathcal{L}_I = (\overline{\psi}\gamma^{\mu} \psi)^n$ or $\mathcal{L}_I = \overline{\psi}\cancel{P}(\mathbf{x}) \psi$, where $P(\mathbf{x})$ is smooth function describing a possible external potential (see Theorem \ref{thm: 1.1} for the general case) Moreover, the dependence on the spacetime background can directly be investigated. Nevertheless, the noteworthy existence of non-perturbative results in various models should be compared with the perturbative calculations capturing in this way any kind of non-perturbative feature.\\

In tackling the construction of equilibrium states for interacting fermionic theories, we need to take into account that, in dimension greater than $2$, the canonical quantization formalism doesn't allow to describe a theory with an interaction that is switched on in time \cite{Powers}. Therefore, to avoid the restriction to a Cauchy surface, we will make use of the time-slice property. This asserts that any observable on the whole Minkowski spacetime can be written as a sum of an observable vanishing on-shell and another supported in a small neighborhood (the time-slice) of an arbitrary Cauchy surface. Therefore, the construction can be performed restricting on the slice instead of on the sharper Cauchy surface. However, due to the fermionic statistics, in order to prove the validity of the time-slice property we lack the causal commutation between arbitrary observables. For this reason, following \cite{DutschBook, BrunettiDuestschRejznerFredenhagenFermions}, we turn the statistic into that of bosons using a deformation of the algebraic structure. Such a deformation, referred to in literature also as $\eta$\textit{-trick} \cite{ItzyksonZuber}, is done by an appropriate "juxtaposition" with Grassmann numbers. Nevertheless, such a juxtaposition needs to be physically irrelevant. Therefore, as it is proper of the algebraic formalism, we discuss in this paper a simple but innovative way to remove such additional degrees of freedom by a prescription on states over the twisted algebraic structure.\\

The main result is summarized here in compact form (see Theorem \ref{thm: 1} and  \ref{thm: 2} for the complete statement).
\begin{thm}\label{thm: 1.1}
Let $\omega^{\beta}$ be the equilibrium state at inverse temperature $0 < \beta \leq \infty$ with respect to the free dynamics $\tau_t$ on the $*$-algebra $\mathfrak{A}$ of smeared Wick polynomials of the free Dirac field on Minkowski spacetime. Consider a perturbation of the action of the free theory given by a potential $V \in \mathfrak{A}$ of the form:
\begin{equation*}
    V = \int_{\mathbb{M}} \di^4x  \chi(t) h(\mathbf{x}) \mathcal{L}_I.
\end{equation*}
Here $h \in \mathcal{C}_0^{\infty}(\mathbb{R}^3)$ is a space cutoff, $\chi$ a switch-on function as in \eqref{eq:accad} and $\mathcal{L}_I$ is the interacting Lagrangian that can either be of the kind $\mathcal{L}_{I} = (\overline{\psi} \psi)^n$ or $\mathcal{L}_I = (\overline{\psi} \cancel{P}(\mathbf{x}) \psi)^{n}$ for $n \in \mathbb{N}$ and $P: \mathbb{R}^3 \to \mathbb{R}$ so that $V \subset \mathscr{F}_{\mu c}$ (see \eqref{eq: microcausal}). Then, there exists a state on $\mathfrak{A}$ of equilibrium $\omega^{\beta,V}$ with respect to the perturbed dynamics $\tau_t^V$ (see \eqref{def: InterDynam}) at each perturbative order. Finally, $\omega^{\beta,V}$ has a convergent adiabatic limit ($h \to 1$) at each fixed perturbative order.
\end{thm}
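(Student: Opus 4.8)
The plan is to follow the strategy of Fredenhagen--Lindner, adapting each step to the fermionic setting via the $\eta$-trick, and to split the argument into (i) construction of the interacting KMS state for spatially compact $h$, and (ii) removal of the space cutoff at each perturbative order. For the first part, the starting point is Araki's formula: having defined the interacting dynamics $\tau_t^V$ by pullback of $\tau_t$ along the Bogoliubov map, one writes the candidate interacting equilibrium state as
\begin{equation*}
\omega^{\beta,V}(A) = \frac{\omega^\beta\bigl( \mathcal{U}(\beta)\, A \bigr)}{\omega^\beta\bigl( \mathcal{U}(\beta) \bigr)},
\end{equation*}
where $\mathcal{U}(t)$ is the (formal) cocycle intertwining $\tau_t$ and $\tau_t^V$, expanded as a time-ordered exponential of the interaction picture perturbation $K = \frac{d}{ds}\big|_{s=0}\,$(switched interaction). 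Concretely $\mathcal{U}(\beta)$ is a series whose $m$-th term is an integral over a simplex in imaginary time of nested commutators/products of $\tau_{i u_j}(K)$. The key analytic input one needs here is that $\tau_t$ acts analytically in a strip on the relevant Wick polynomials so that $\tau_{iu}(K)$ makes sense for $u\in[0,\beta]$ — this is where the explicit form of the KMS and ground-state two-point functions \eqref{eq: 2puntiKMS}, \eqref{eq: 2puntiGround}, and in particular their better ultraviolet/infrared behaviour due to the Fermi factor and the extra derivatives, is used to control the relevant integrals. One then verifies directly that $\omega^{\beta,V}$ so defined satisfies the $\beta$-KMS condition for $\tau_t^V$ order by order: this is the standard computation showing that the cocycle relation $\mathcal{U}(t+s)=\mathcal{U}(t)\,\tau_t(\mathcal{U}(s))$ together with the $\beta$-KMS property of $\omega^\beta$ for $\tau_t$ forces the KMS identity for the perturbed pair. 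All of this must be run on the twisted (bosonised) algebra, and at the end one invokes the state prescription discussed in the paper to descend to $\mathfrak{A}$; one should check the prescription is compatible with the KMS condition, which it is because the twisting is by even elements.

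For the adiabatic limit the essential point is the decay in the \emph{spatial} separation of the truncated (connected) correlation functions. Here the plan is to adapt the Fredenhagen--Lindner argument: after expanding $\omega^{\beta,V}(A)$ to fixed order $N$ in the coupling, each contribution is an integral over time-simplices and over the spatial variables $\mathbf{x}_1,\dots,\mathbf{x}_k$ carrying the cutoff factors $h(\mathbf{x}_1)\cdots h(\mathbf{x}_k)$, with the integrand built from the two-point functions evaluated at the corresponding spacetime points. One shows that the connected part of the integrand — which is what survives after the standard combinatorial rearrangement that cancels the denominator $\omega^\beta(\mathcal{U}(\beta))$ — decays fast enough in the spatial differences $|\mathbf{x}_i - \mathbf{x}_j|$ that the integral converges absolutely as each $h \uparrow 1$. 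The decay comes from two sources: the smooth temporal switch $\chi$ restricting the time integrations, and the clustering of the free KMS/ground state, again improved in the fermionic case by the derivatives in \eqref{eq: 2puntiKMS}, \eqref{eq: 2puntiGround}. For $\mathcal{L}_I = (\overline\psi\psi)^n$ translation invariance makes the clustering estimates clean; for $\mathcal{L}_I = (\overline\psi\cancel{P}(\mathbf{x})\psi)^n$ one additionally needs $P$ to be regular and at most polynomially bounded so the extra spatial factors do not destroy integrability — this is exactly the hypothesis ensuring $V\subset\mathscr{F}_{\mu c}$.

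The main obstacle I expect is precisely the uniform control needed for the $h\to 1$ limit: one must show that the bad, volume-growing pieces (those proportional to positive powers of $\int h$) cancel between numerator and denominator, leaving only connected terms with genuine spatial decay, and then that this decay beats the growth of the remaining integration volume. In the scalar case this is the technical heart of \cite{FredenhagenLindnerKMS_2014, FaldinoEquilibriumpAQFT}, and in the fermionic case one has the extra bookkeeping of Grassmann signs and of the $\eta$-trick variables to carry through every estimate, while checking that reintroducing the physical state prescription commutes with taking the limit. A secondary but real difficulty is making the imaginary-time analyticity of $\tau_t$ on the relevant Wick polynomials precise in the massless case without the thermal-mass crutch: one has to verify that the fermionic two-point functions, despite being distributions, pair with the (smooth, compactly supported in time) test data coming from $\chi$ and $h$ to give functions analytic in the KMS strip, and that the resulting bounds are uniform in the order of the nested integrations. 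Once these two points are secured, assembling the full statement — existence of $\omega^{\beta,V}$ at each order, the KMS property, and the convergent adiabatic limit — is a matter of combining the pieces as in the cited works.
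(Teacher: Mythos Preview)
Your plan is correct and mirrors the paper's own structure: Theorem~\ref{thm: stati} builds the interacting KMS/ground state for compact $h$ via exactly the Araki-type cocycle formula on the bosonised algebra that you describe (with Proposition~\ref{prop: statiGrass} handling the descent to $\mathfrak{A}$), and Theorems~\ref{thm: 1} and~\ref{thm: 2} then establish the adiabatic limit by the connected-function expansion \eqref{eq: espansionestato} together with the spatial clustering estimates of Propositions~\ref{thm: KMS} and~\ref{thm: ground} (exponential for $m>0$, cubic-per-edge polynomial for $m=0$, obtained via the Paley--Wiener and contour arguments of Lemma~\ref{lem: 1} and the iterated-convolution Lemma~\ref{lem: 2}). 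The only slip is in the external-potential case: integrability is driven by the propagator decay and graph connectedness, and $P$ must be \emph{bounded} (indeed, in the massless ground state with $n=1$ it must decay like $(1+|\mathbf{x}|)^{-1}$), not merely polynomially bounded.
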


As an application of the presented construction of equilibrium states we perform a computation in \textit{linear response theory}. The latter, in the form of the Kubo formula, was proven to be valid for interacting fermions in the non-perturbative context of quantum many body theory in the adiabatic limit \cite{BachmannDeRoeckFraas, MonacoTeufel} also at positive temperature \cite{GreenblattLangeMarcelliPorta}. Here, compared to the aforementioned works, despite the computation being perturbative, we study the linear response of a system in equilibrium at positive temperature already in the continuum. Therefore, in our formalism, the independence of the linear response from the switching process is already implicit in the construction of the interacting state $\omega^{\beta,V}$.\\
In particular, we are interested in computing the Debye length relevant to the study of plasmas. A plasma, in QFT, is described by a charged field at equilibrium interacting with an external electromagnetic field\footnote{Similar results are obtained also for non-Abelian gauge theories like QCD.}. The first interesting question that can be addressed is how the electromagnetic field is affected by the presence of a Dirac field at thermal equilibrium. For a stationary external field, the result is an exponential screening of the field with characteristic decay length $\lambda_D$ called \textit{Debye screening length} \cite{DebyeHuckel}, effectively interpreted as if the photon propagator acquires mass (\textit{Debye mass}). In particular, as the Debye length is experimentally measured in the case of various ion gases, the literature on it is very rich. In Thermal Field Theory, $\lambda_D$ is obtained by computing the corrections to the photon propagator at high temperatures $T$ and considering the ultrarelativistic limit (negligible mass for the Dirac field) \cite{LeBellac, KaputsaGale, Altherr}. Similar analyses are performed in the Constructive Field Theory approach \cite{BrydgesFederbush, Brydges, BrydgesKeller, FrohlichPark} and in \cite{BlaizotIancuParwani, ArnoldYaffe, Schneider, Rebhan}, see also further citations therein.\\

In this paper, our analysis starts from the computation of the expectation value of the conserved current associated with the Dirac field on the state $\omega^{\beta,V}$ of equilibrium concerning the QED coupling. The electromagnetic potential $A^{\mu}$ is assumed to be a componentwise bounded function possibly not smooth but with singular behavior determined via a specific condition on its wavefront set (see discussion after Equation \eqref{eq: couplQED}). To compute this expectation value in linear response (first perturbative order), we adopt a more general approach compared to the aforementioned works of Thermal Field Theory. Indeed, we perturbatively expand both the thermal equilibrium states and the observables. As shown in \cite{JoaoNicNic}, this approach reduces to the real-time and imaginary-time formalism in specific limits. The reason for this choice is the impossibility of using real-time or imaginary-time formalism alone to compute the screening of an electromagnetic field supported everywhere due to the arising of secular growths in the perturbations \cite{GalandaSangalettiPin}. As discussed in \cite{LandsmanVanWert} and proved in \cite{FaldinoEquilibriumpAQFT}, the integration over Schwinger-Keldysh contour can be simplified in either a real or an imaginary integral only in specific cases. Within our approach, the spurious infrared divergences arising in the real-time formalism, manifesting as secular effects, are absent. Here, as we aim at considering a possibly infinitely extended stationary electromagnetic field interacting with a Dirac field at arbitrary temperature $T$ and mass $m$, we have to adopt this more general approach.\\

The result that we get is summarized in the following proposition:

\begin{prop}
Let $\omega^{\beta,V}$ be the equilibrium state constructed above with interaction
\begin{equation*}
    V = \int_{\mathbb{M}} \di^4 x \chi(t) h(\mathbf{x}) A_{\mu}(\mathbf{x})\overline{\psi}(x) \gamma^{\mu} \psi(x).
\end{equation*}
Then, denoting by $R_V(j^{\mu})$ the interacting observable associated with the conserved current of the free Dirac field, at first perturbative order $\omega^{\beta,V}(R_V(j^{\mu}))$ is a convolution of $A^{\mu}$ with something that at zeroth order in the conjugate momentum reduces to a Dirac delta distribution:
\begin{equation*}
    \omega^{\beta,V}(R_V(j^{\mu})(\mathbf{x})) = \left(A^{\mu} \ast m_D^2 \delta\right)(\mathbf{x})
\end{equation*}
where:
\begin{equation}\label{eq: DebyeCostanti}
    m_D^2 = \frac{4e^2 m^2 \hbar c}{(2 \pi)^2 \epsilon_0} \sum_{n=0}^{\infty} (-1)^{n} K_2\big( (n+1) \beta m c^2\big),
\end{equation}
is the square Debye mass with $K_2(z)$ the modified Bessel functions of second kind and index $2$.
\end{prop}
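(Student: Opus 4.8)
The plan is to reduce the claimed formula to the evaluation of the static one-loop polarisation tensor of the free Dirac field and then to carry out the resulting momentum integral explicitly. First I would record that the free thermal expectation value vanishes, $\omega^{\beta}(j^{\mu})=0$: at vanishing chemical potential the equilibrium state of the free Dirac field is invariant under charge conjugation and spatial rotations, so neither $j^0$ nor $j^i$ can acquire a nonzero value at zeroth order in $e$. Hence $\omega^{\beta,V}(R_V(j^{\mu}))$ starts at first order, and I would write this first-order term through the linear-response formula for interacting equilibrium states underlying the construction of $\omega^{\beta,V}$ (in the spirit of \cite{FredenhagenLindnerKMS_2014, Lindner, DragoHackPin} together with the contour analysis of \cite{JoaoNicNic, GalandaSangalettiPin}). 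It is the sum of two pieces: the contribution of the retarded product $R_V$ at first order, i.e.\ a retarded current--current commutator of $\overline{\psi}\gamma^{\mu}\psi$ smeared against $\chi(y^0)h(\mathbf{y})A_{\nu}(\mathbf{y})$, and the first-order correction to the interacting KMS state, i.e.\ a Kubo/Schwinger--Keldysh correlation of the same densities. Keeping both pieces is essential: their sum is free of the secular growth that afflicts the purely retarded (real-time) expression when $A$ is spatially unbounded, consistently with $\omega^{\beta,V}$ being a genuine $\beta$-KMS state for $\tau_t^V$ (see \eqref{def: InterDynam}).

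Next I would exploit stationarity. Since $A_{\nu}$ depends only on $\mathbf{y}$ and $\omega^{\beta}$ is time-translation invariant, the $y^0$-integration can be performed, and the KMS relation collapses the mixed real/imaginary-time contour into the zero-frequency part of the current--current correlator. The first-order term thus becomes a convolution in the spatial variables alone, $\int \di^3 y\, h(\mathbf{y})A_{\nu}(\mathbf{y})\,\Pi^{\mu\nu}(\mathbf{x}-\mathbf{y})$, where $\Pi^{\mu\nu}$ is the static polarisation tensor built from the thermal two-point functions \eqref{eq: 2puntiKMS}. The adiabatic limit $h\to1$ is then supplied by (the proof of) Theorem \ref{thm: 1.1}. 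Passing to the conjugate (spatial momentum) variable, $\widehat{\Pi}^{\mu\nu}(\mathbf{k})$, and Taylor expanding in $\mathbf{k}$, the zeroth-order coefficient $\widehat{\Pi}^{\mu\nu}(\mathbf{0})$ is a constant; using current conservation (the Ward identity for the chosen renormalisation) this constant carries the index structure that renders the inverse Fourier transform proportional to $\delta(\mathbf{x})$, so that $\omega^{\beta,V}(R_V(j^{\mu})(\mathbf{x}))=(A^{\mu}\ast m_D^2\delta)(\mathbf{x})$ up to higher-gradient corrections, with $m_D^2=\widehat{\Pi}^{00}(\mathbf{0})$.

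It then remains to compute $\widehat{\Pi}^{\mu\nu}(\mathbf{0})$ as a one-loop fermion bubble at vanishing external momentum and frequency. Inserting \eqref{eq: 2puntiKMS} and taking the Dirac trace $\Tr[\gamma^{\mu}(\cdots)\gamma^{\nu}(\cdots)]$, the temperature-independent (vacuum) part is either UV-finite in the static limit or removed by the renormalisation condition, while the thermal part is UV-finite and reduces to $-4e^2\int\frac{\di^3p}{(2\pi)^3}\,\partial_{E_p}n_F(E_p)$ with the appropriate $m^2/E_p$ weight produced by the trace, where $n_F$ is the Fermi--Dirac factor and $E_p=\sqrt{\mathbf{p}^2+m^2}$. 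I would then expand $n_F(E)=\sum_{n\geq1}(-1)^{n+1}e^{-n\beta E}$, integrate by parts in the radial variable, substitute $E_p=mc^2\cosh\theta$, and recognise $\int_0^{\infty}e^{-z\cosh\theta}\cosh(2\theta)\,\di\theta=K_2(z)$; this gives $m_D^2\propto m^2\sum_{n=0}^{\infty}(-1)^n K_2\big((n+1)\beta mc^2\big)$, which is exactly \eqref{eq: DebyeCostanti} after restoring $\hbar$, $c$ and $\epsilon_0$.

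I expect the main obstacle to lie in the first two steps rather than in the final integral: one must assemble the first-order interacting expectation value from its retarded and state-correction contributions and then verify that the outcome is genuinely stationary (independent of the switch-on function $\chi$) and a bona fide spatial convolution with a kernel decaying fast enough to admit the limit $h\to1$ — precisely the point where the perturbative algebraic machinery (the time-slice property, the Bogoliubov map, and the cocycle/KMS construction) and the decay estimates established for Theorem \ref{thm: 1.1} are indispensable. A secondary technical issue is renormalising the current--current loop without breaking current conservation, and justifying the exchange of the Fermi--Dirac series with the momentum integration, which is controlled by the exponential decay of $K_2$.
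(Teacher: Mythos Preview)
Your proposal is correct and follows essentially the same route as the paper: expand both the Bogoliubov map and the state correction at first order, exploit stationarity to obtain a spatial convolution with a polarisation-type kernel, pass to Fourier and evaluate at vanishing external momentum (where the zero-temperature and spatial-component contributions drop out), and compute the remaining Fermi-factor integral $-\tfrac{4e^2}{(2\pi)^3}\int d^3\mathbf{p}\,\partial_{\omega_p}n_F(\omega_p)$ by series-expanding the Fermi factor and recognising the modified Bessel function $K_2$. The only cosmetic differences are that the paper states the explicit intermediate expressions $I^{\mu},J^{\mu},\hat{\mathscr{F}}_{\mu},\hat{\mathscr{B}}_{\mu}$ rather than invoking the polarisation-tensor language, shows the vanishing of the spatial components at $\tilde{\mathbf p}=0$ by direct computation rather than via the Ward identity, and uses the substitution $\omega_p=mx$ together with a Gradshteyn--Ryzhik identity in place of your hyperbolic substitution.
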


The above square Debye mass is a result valid at any inverse temperature $\beta$ of the equilibrium configuration and mass $m$ of the Dirac field. Indeed, in the ultrarelativistic limit (massless Dirac field) and (or) at high temperature, the results mentioned in the above literature are recovered by our result \eqref{eq: DebyeCostanti}. To the best of our knowledge, this is the first generalization of the results of Thermal Field Theory in this direction.\\
Furthermore, in support of the interpretation of \eqref{eq: DebyeCostanti} as the Debye mass, we compute the influence on the electromagnetic potential of the coupling with the Dirac field. However, instead of evaluating loop corrections to the photon propagator, we directly solve the \textit{semiclassical} Maxwell equations:
\begin{equation}\label{eq: SemiclIntro}
    \Delta A^{\mu}(\mathbf{x}) = -\left\langle j^{\mu}_{\mathrm{q}}(\mathbf{x})\right\rangle_{\beta,V} - j^{\mu}_{\mathrm{class}}(\mathbf{x}),
\end{equation}
where $j^{\mu}_{\mathrm{class}} \in \mathcal{C}^{\infty}_0(\mathbb{M}, T\mathbb{M})$ is a stationary classical source and $\left\langle j^{\mu}_{\mathrm{q}}\right\rangle_{\beta,V}$ a quantized source given by the above computed expectation value. The solution, valid for any $\beta$ and $m$, is a convolution between the classical source and a correction that at zeroth order in the conjugate momentum is a Yukawa type potential with characteristic length $\lambda_D$ given by the inverse of $m_D$. Therefore, the classical solution undergoes a backreaction that at zeroth order in the external momentum consists in a screening with characteristic length given by the square root of the inverse of \eqref{eq: DebyeCostanti}.\\

The organization of the paper is as follows. In the next Section, we shortly review the perturbative approach to interacting theories in AQFT and discuss how the algebras are deformed turning the fermionic statistics of its elements into a bosonic one. In Section \ref{sec: SuppComp} the construction of equilibrium states for spatially compact interactions is performed, while in Section \ref{sec: LimAd} their adiabatic limit is studied. Finally, Section \ref{sec: DebyeLeng} is devoted to the application of the proven results to solve the semiclassical Maxwell equations computing the screening of the electromagnetic field due to its propagation in the thermal Dirac field bath. Finally, the technical proofs together with further necessary technical results are collected in the Appendix.

${}$ \\ \\ \\
{\bf  Acknowledgments}
I am grateful to the referee for many valuable hints that improved this version of the paper. Moreover, I thank the National Group of Mathematical Physics (GNFM-INdAM) for the support. My research is supported by a PhD scolarship of the Math Department of University of Genoa awarded with the MIUR Excellence Department Project 2023-2027, CUP\textunderscore$\,$D33C23001110001.\\
I am grateful to Nicola Pinamonti for his support and guidance during the writing of this paper. I also thank Tommaso Bruno (to whom I am grateful for the help in proving Lemma \ref{lem: 2}), Andrea Bruno Carbonaro, Edoardo D'Angelo, Nicolò Drago, Paolo Meda, Simone Murro and Gabriel Schmid for the useful discussions.

\section{Perturbative Algebraic Quantum Field Theory for fermionic Fields}
We consider theories propagating on Minkowski spacetime $(\mathbb{M} = \mathbb{R} \times \Sigma, \mathbf{g})$ with metric signature $(-,+,+,+)$. Moreover, we are concerned with the quantization of field theories whose Lagrangian density is factorized:
\begin{equation*}
    \mathcal{L} = \mathcal{L}_F + \mathcal{L}_I
\end{equation*}
where the factors are respectively the free ($\mathcal{L}_F = \overline{\psi}(i\cancel{\partial} - m)\psi$) and interacting Lagrangian density describing massive ($m > 0$) or massless ($m = 0$) interacting Dirac fields.\\
In what follows we present how interacting quantum field theories are perturbatively treated in the algebraic setting. The interested reader is referred to the original works \cite{BrunettiFredenhagen00, BFV03, HollandsWald2001, HW02, HW05, FredenhagenRejzner2012} and to the recent detailed reviews contained in the books \cite{DutschBook, KasiaBook}. This approach combines the techniques of renormalization of perturbative interacting theories \cite{EpsteinGlaser} with the axiomatic algebraic approach to quantum field theories \cite{HaagKastler, HaagLQP, BuchFredAQFT}. This is an approach to quantized field theories on Lorentzian background spacetime that puts the focus on the observables, represented as elements of algebras, assigned to open regions of spacetime. Locality is encoded in (anti-)commutation relations between the observables of algebras associated to different regions of spacetime. Once the algebras of the free theory have been constructed, those of the interacting theory are mapped into the $*$-algebra of the free theory as formal power series in the coupling constant. States are constructed in a second step assigning positive, normalized and linear functional over the algebras. In addition, fixing a state, the \textit{GNS theorem} allows to represent the algebra of observables as operators on a Hilbert space. One of the advantages of this formulation, is that renormalization turns out to be automatically independent from the choice of a specific state.

\subsection{Algebra of Fermionic Functionals}
Following \cite{DutschBook}, we introduce the algebras describing anti-commuting Fermi fields. Let us consider the vector space $\Gamma(\mathbb{M}, \mathbb{C}^4)$ of smooth sections of the vector bundle constructed over Minkowski spacetime $\mathbb{M}$ with fibers $\mathbb{C}^4$. On it, we assume to have a fiberwise irreducible complex representation $\pi$ of the Clifford algebra $Cl_{1,3}$ as elements in $GL(4,\mathbb{C})$ satisfying:
\begin{equation*}
    [\pi(l^{\mu}), \pi(l^{\nu})]_+ = \pi(l^{\mu}) \pi(l^{\nu}) + \pi(l^{\nu}) \pi(l^{\mu}) = -2 \mathbf{g}^{\mu \nu} \mathbb{1}_{4 \times 4},
\end{equation*}
for $\{l^{\mu}\}_{\mu = 0,1,2,3}$ the generators of the Clifford algebra. In particular, the following representation is chosen:
\begin{equation*}
    \pi(l^0) = \gamma^{0} \coloneqq \begin{pmatrix}
    0 & -\mathbb{1}_{2\times 2}\\
    -\mathbb{1}_{2\times 2} & 0
    \end{pmatrix} \quad \pi(l^i) = \gamma^{i} \coloneqq \begin{pmatrix}
    0 & -\sigma^i\\
    \sigma^i & 0
    \end{pmatrix}
\end{equation*}
where the $\{ \gamma^{\mu} \}_{\mu = 0,1,2,3}$ are referred to as Dirac matrices and $\sigma^i \in GL(2,\mathbb{C})$ are the Pauli matrices. Furthermore, each fiber, is equipped with a non-degenerate Lorentz invariant sesquilinear form $\mathbb{C}^4 \times \mathbb{C}^4 \ni (v_1(x), v_2(x)) \mapsto \overline{v_1} v_2(x) \in \mathbb{C}$ where:
\begin{equation*}
    \overline{v}(x) \coloneqq v^{\dagger}(x) \gamma^0,
\end{equation*}
with $v^{\dagger}(x) \in (\mathbb{C}^4)^*$ the adjoint of $v(x) \in \mathbb{C}^4$ with respect to the standard product of $\mathbb{C}^4$. The fiberwise definition of $\overline{v}(x)$ extends to $\overline{v} \in \Gamma(\mathbb{M}, (\mathbb{C}^4)^*)$. In particular, elements of $\Gamma(\mathbb{M}, \mathbb{C}^4)$ and $\Gamma(\mathbb{M}, (\mathbb{C}^4)^*)$ are considered to be independent.\\
Finally, with the aid of the Dirac matrices, given $v \in \Gamma(\mathbb{M}, \mathbb{C}^4)$, we introduce the fiberwise defined notation $\cancel{v} = \gamma^{\mu} v_{\mu}$.\\
$X = \Gamma(\mathbb{M}, \mathbb{C}^4) \oplus \Gamma(\mathbb{M}, (\mathbb{C}^4)^*)$ is referred to as \textit{configuration space} and the choice is motivated by the considered theory. Indeed, the following definitions, generalize to sections of general vector bundles.
\begin{defn}
We call \textbf{fermionic functionals} on the configuration space $X$, any multi-linear functional on the exterior algebra $\bigwedge X \coloneqq \bigoplus_{n=0}^{\infty} \bigwedge^n X$. Defined equivalently by a sequence $F = (F_n)_{n \in \mathbb{N}_0}$ of alternating $n$-linear forms on $X$ with:
\begin{equation*}
    F(v_1 \wedge \cdots \wedge v_n) = F_n(v_1, \ldots, v_n), \hspace{20pt} F(\mathbb{1}_{\bigwedge X}) = F_0 \in \mathbb{R}.
\end{equation*}
We denote by $\mathscr{F}(X)$ the class of fermionic functionals and by $F_n \in \mathscr{F}^{n}(X)$ a fermionic functional of order $n$.
\end{defn}
As the configuration space considered is fixed, we will abbreviate $\mathscr{F}(X)$ simply by $\mathscr{F}$. The space of fermionic functionals $\mathscr{F}$ canonically carries a notion of product. For $F,G \in \mathscr{F}$:
\begin{align}
    (F &\cdot G)_{n}(v_1, \ldots, v_n) \nonumber\\
    &= \sum_{\sigma \in S_n} \mathrm{sign}(\sigma) \sum_{k=0}^{n}\frac{1}{k!(n-k)!} F_k(v_{\sigma(1)}, \ldots, v_{\sigma(k)}) G_{n-k}(v_{\sigma(k+1)}, \ldots, v_{\sigma(n)}) \, , \label{eq: pointwise product}
\end{align}
that is the antisymmetric analogue of the pointwise product. The support of a generic functional $F \in \mathscr{F}$ on $\mathbb{M}$ is defined:
\begin{align*}
    &\mathrm{supp} \, F \coloneqq \\
    &\{ x \in \mathbb{M} | \, \forall \, U \, \mathrm{neighborhood} \,\, \mathrm{of} \,\, x ,\,\, \exists n \in \mathbb{N}, v_1, \ldots, v_n \in X \,\, \mathrm{with} \,\, \mathrm{supp} (v_i) \subset U \,\, \mathrm{s.t.} \,\, F_n(v_1, \ldots, v_n) \neq 0 \}.
\end{align*}
Moreover, as the topological dual of the exterior product of $X$, elements of $\mathscr{F}$ are all compactly supported functionals demanded to be Fréchet differentiable in the following sense
\begin{defn}
Let $F \in \mathscr{F}^n(X)$, $h \in X^{\otimes n -1}$, $\overrightarrow{h} \in X$. The \textbf{left derivative} of $F$ at $h$ in the direction of $\overrightarrow{h}$ is defined for every integer $n \geq 0$
\begin{align*}
    \left\langle \overrightarrow{h} , F^{(1)}(h) \right\rangle &= F\left(\overrightarrow{h} \wedge h\right) \,  \quad \mathrm{for} \,\, n > 0,\\
    F^{(1)} &= 0 \quad F \in \mathscr{F}^{0}(X).
\end{align*}
Instead, the \textbf{right derivative} of $F$ at $h$ in the direction of $\overrightarrow{h}$ is defined for every integer $n \geq 0$
\begin{align*}
    \left\langle F^{(1)}(h), \overrightarrow{h} \right\rangle &= F\left(h \wedge \overrightarrow{h}\right) \,  \quad \mathrm{for} \,\, n > 0,\\
    F^{(1)} &= 0 \quad F \in \mathscr{F}^{0}(X),
\end{align*}
from which:
\begin{equation*}
    \left\langle \overrightarrow{h} , F^{(1)}(h) \right\rangle = (-1)^{n-1} \left\langle F^{(1)}(h), \overrightarrow{h} \right\rangle.
\end{equation*}
The definitions naturally extend to $\mathscr{F}$.
\end{defn}
Let us also introduce a \textbf{degree map} $\mathrm{deg}: \mathscr{F} \to \{0,1\}$ that associates $0$ (respectively $1$) if the order $n$ of the alternating $n$-linear form defining the fermionic functional is even (respectively odd).\\
We restrict our attention to a particular subset of $\mathscr{F}$ denoted $\mathscr{F}_{\mu c}$. This is the set of \textbf{microcausal} fermionic functionals,  over the configuration space $X$:
\begin{equation}\label{eq: microcausal}
    \mathscr{F}_{\mu c} \coloneqq \{ F \in \mathscr{F} | \mathrm{WF}(F^{(n)}) \cap (\overline{V}^{+n} \cup \overline{V}^{-n}) = \emptyset \,\,\,\, \forall n \in \mathbb{N} \},
\end{equation}
where $\mathrm{WF}$ denotes the wave front set \cite{BrouderSmoothIntro}, and $\overline{V}^{\pm} \subset T^* \mathbb{M}$ is the set of all points $(x_1, \ldots, x_d, p^1, \ldots, p^d)$ with all covectors $p^i$ being non-zero, future ($+$) or past ($-$) pointing and either time-like or light-like.\\
Two important subsets of $\mathscr{F}_{\mu c}$ are the local functionals $\mathscr{F}_{loc}$, those for which all functional derivatives are supported on the total diagonal, and the regular functionals $\mathscr{F}_{reg}$, those with smooth functionals as functional derivatives. Finally, the fermionic functionals are endowed with a $\ast$-involution defined for any $F \in \mathscr{F}_{\mu c}$ by:
\begin{equation*}
    F^* \coloneqq F^{\dagger}.
\end{equation*}

Once we endow $\mathscr{F}_{\mu c}$ with the antisymmetric pointwise product defined in \eqref{eq: pointwise product} and the above $\ast$-involution, we obtain the off-shell algebra of classical observables for a formally classical Dirac field propagating on Minkowski spacetime $\mathbb{M}$. The generators are of the form:
\begin{equation}\label{eq: campi lineari}
    \Psi_f(\psi) \coloneqq \int_{\mathbb{M}} \di^4x \, \overline{f}(x) \psi(x)\,, \quad \overline{\Psi}_g\left(\overline{\psi}\right) \coloneqq \int_{\mathbb{M}} \di^4x \, \overline{\psi}(x) g(x) 
\end{equation}
for any $f , g \in \Gamma_0(\mathbb{M}, \mathbb{C}^4)$ and $\psi \in \Gamma(\mathbb{M}, \mathbb{C}^4)$.\\\\

The quantization of the algebra of classical observables is achieved by deforming the product introducing a quantization parameter $\hbar$. To this end, we need to introduce some objects: 
\begin{enumerate}
    \item The map $\mathcal{M}: \mathscr{F}_{\mu c} \otimes \mathscr{F}_{\mu c} \to \mathscr{F}_{\mu c}$ that associates to a tensor product of microcausal functionals their antisymmetric pointwise product \eqref{eq: pointwise product}.
    \item A shortening notation for the left and right Fréchet derivative:
    \begin{equation*}
        \frac{\delta F}{\delta \psi(x)} = F^{(1)}(x)
    \end{equation*}
    and correspondingly for the right derivative $\frac{\delta_r F}{\delta \psi} = (-1)^{\mathrm{deg}(F)} \frac{\delta F}{\delta \psi}$, where $\psi \in \Gamma(\mathbb{M}, \mathbb{C}^4)$ denotes a generic configuration (same notation is used for the Fréchet derivatives with respect to the adjoint configurations $\overline{\psi}$).
    \item The two-point function of a Hadamard state $\omega$. For the sake of completeness, let us recall that Hadamard states are the class of physically relevant quasifree states, see \cite{KW91, Wald1995} for the original paper and a general review, and \cite{FewsterRainer} for precise physical motivations. They have a specific microlocal characterization originally introduced for scalar fields in \cite{Ra96} (see also \cite{BrunettiFredenhagenKohler}) and extended to vector valued fields, further proving that equilibrium states are Hadamard, in \cite{SahlmannVerch2, SahlmannVerch}. Moreover, Hadamard states for the Dirac field are weak bisolutions of the differential operator $\cancel{D} = i \cancel{\partial} - m$, see \cite{Gerardbook,CapoferriMurro, DappiaggiHackPin} for the discussion of Hadamard states for $\cancel{D}$ on globally hyperbolic spacetime, and are connected to the Pauli-Jordan function $\mathscr{S}(x,y)$ by:
    \begin{equation*}
        i\mathscr{S}(x,y) = H^+_{\omega}(x,y) + H^-_{\omega}(x,y),
    \end{equation*}
    where $H^{\pm}_{\omega}(x,y)$ are the two-point functions associated to the Hadamard state $\omega$. Different Hadamard states are distinguished by different choices of smooth functions $R_{\omega} \in \mathcal{C}^{\infty}(\mathbb{M} \times \mathbb{M}, \mathbb{C})$ defined as:
    \begin{align*}
        H^+_{\omega}(x,y) - \frac{i}{2}\mathscr{S}(x,y) &\coloneqq R_{\omega}(x,y)\\
        H^-_{\omega}(x,y) - \frac{i}{2}\mathscr{S}(x,y) &\coloneqq -R_{\omega}(x,y).
    \end{align*}
\end{enumerate} 
Combining everything, choosing a Hadamard state $\omega$, we define for any $F,G \in \mathscr{F}_{\mu c}$ the quantized product:
\begin{equation}\label{eq: DefProd}
    F \star_{\omega} G \coloneqq \mathcal{M}e^{
\hbar \!\!\int\!\! \di^4x \di^4y \, \left( H^+_{\omega}(x,y) \frac{\delta_r}{\delta \psi (x)}
\otimes  \frac{\delta}{\delta \overline{\psi} (y)}
+
H^-_{\omega}(x,y) \frac{\delta_r}{\delta \overline{\psi} (y)}
\otimes  \frac{\delta}{\delta \psi (x)}\right)
 }
F\otimes G,
\end{equation}
that it is well defined by the wave front set properties of the involved distributions \cite{HormanderI, SahlmannVerch}. Moreover, the support of a product of functionals satisfies $\mathrm{supp}(F \star_{\omega} G) \subset \mathrm{supp}(F) \cup \mathrm{supp}(G)$. We denote the quantized algebra by $\mathfrak{A}_{\omega} \coloneqq (\mathscr{F}_{\mu c}, \star_{\omega}, \ast)$ and by $\mathfrak{A}_{\omega}(\mathcal{O})$ the subalgebra of functionals supported in the spacetime region $\mathcal{O} \subset \mathbb{M}$.\\
The algebra $\mathfrak{A}_{\omega}$ is not canonically defined as it depends on the choice of the Hadamard two-point function. Nonetheless, any other $\mathfrak{A}_{\omega'}$ is $\ast$-isomorphic to $\mathfrak{A}_{\omega}$. Indeed, the $\ast$-isomorphism is realized by the map $\alpha_{\omega' - \omega}: \mathfrak{A}_{\omega} \to \mathfrak{A}_{\omega'}$ defined as
\begin{align*}
    &\alpha_{\omega' - \omega}(F) \coloneqq\\
    &\exp\bigg( \frac{1}{2} \int_{\mathbb{M} \times \mathbb{M}} \di^4x \di^4y \, \bigg[ (-1)^n (R_{\omega'} - R_{\omega})(x,y) \frac{\delta^2}{\delta \psi(x) \delta\overline{\psi}(y)} + (-1)^{n+1} (R_{\omega'} - R_{\omega})(x,y) \frac{\delta^2}{\delta\overline{\psi}(y) \delta \psi(x)} \bigg] \bigg) F(\psi, \overline{\psi}).
\end{align*}
The latter, for any $F_1,F_2 \in \mathfrak{A}_{\omega}$ , satisfies:
\begin{equation*}
    \alpha_{\omega' - \omega}(F_1) \star_{\omega'} \alpha_{\omega' - \omega}(F_2) = \alpha_{\omega' - \omega}(F_1 \star_{\omega} F_2)\, , \quad \alpha_{\omega' -\omega}(F_1^*) = \alpha_{\omega' - \omega}(F_1)^*.
\end{equation*}
Therefore, $\mathfrak{A}_{\omega}$ should be interpreted as a realization of the algebra of quantized Dirac fields on Minkowski spacetime whose elements are the normal-ordered functionals with respect to the chosen Hadamard state (see \cite[Theorem $2.6.3.$]{DutschBook} and \cite{HackVerch}). The product automatically implements free anti-commutation relations among linear Dirac fields \eqref{eq: campi lineari}. Finally, as it is customary in the algebraic framework, states are defined as linear, positive and normalized functionals over $\mathfrak{A}_{\omega}$. In particular, the corresponding Hadamard state $\omega$ on $\mathfrak{A}_{\omega}$ associated to the two point functions $H^+_{\omega}, H^-_{\omega}$ consistently gives:
\begin{equation*}
    H^+_{\omega}(x,y) = \omega\big(\Psi(x) \star_{\omega} \overline{\Psi}(y)\big)\, , \quad H^-_{\omega}(x,y) = \omega\big(\overline{\Psi}(y) \star_{\omega} \Psi(x)\big). 
\end{equation*}
In the following we will denote by $\mathfrak{A}$ the general algebra realized in $\mathfrak{A}_{\omega}$ by a corresponding choice of Hadamard state.

\subsection{Free Dynamics and Equilibrium States}
In order to specify equilibrium configurations, we need to define a $*$-dynamical system, namely a $*$-algebra with a one parameter of $*$-automorphisms on it. For a generator $\Psi_{f} \in \mathscr{F}_{\mu c}$, we define $\tau_t(\Psi_{f})$ for $t \in \mathbb{R}$ by its action on the associated kernel:
\begin{equation*}
    \tau_t(\Psi_{f})(x) = \Psi_{f}(x - te^0).
\end{equation*}
and analogously for $\overline{\Psi}_f$. The definition of $\tau_t$ on a generic $F \in \mathcal{F}_{\mu c}$ is obtained demanding $\tau_t$ to be an automorphism on $\mathfrak{A}$. The support of the functionals becomes
\begin{equation*}
    \mathrm{supp}(\tau_t(F)) := \{ x \in \mathbb{M} | (x - t \, e^0) \in \mathrm{supp}(F) \},
\end{equation*}
where $e^0$ is the timelike unit vector in $T\mathbb{M}$ tangent to the time function of Minkowski. Moreover, we demand that it is extended to a $\ast$-automorphism over a realization $\mathfrak{A}_{\omega}$ with respect to a time translation invariant Hadamard state $\omega$ by:
\begin{equation*}
    \tau_t(F \star_{\omega} G) = \tau_t(F) \star_{\omega} \tau_t(G)\, , \quad \forall F,G \in \mathfrak{A}_{\omega}.
\end{equation*}
We call $\tau_t$ \textbf{free dynamics}. It is not strictly necessary to consider $\omega$ time translation invariant. Indeed, if it is not the case the action of the dynamics in a different realization $\mathfrak{A}_{\omega'}$ becomes more cumbersome as it needs to account for the time translation of the two-point functions. Taking this into account, in the following we assume to work in realizations $\mathfrak{A}_{\omega}$ with respect to time translation invariant Hadamard states. Under these assumptions, equilibrium states are defined as follow.
\begin{defn}
Let $\omega^{\infty}$ be a state on a realization of the dynamical algebra $\mathfrak{A}_{\omega}$ with dynamics $\tau_t$. Then, $\omega^{\infty}$ is of equilibrium with respect to $\tau_t$ and it is referred to as ground (vacuum) state if:
\begin{equation}\label{eq: condground}
    -i\partial_t \omega^{\infty}(F^* \star_{\omega} \tau_t(F))|_{t=0} \geq 0 \quad \forall F \in \mathfrak{A}_{\omega}.
\end{equation}
\end{defn}
\begin{defn}\label{def: KMScond}
Let $0 < \beta < \infty$ and $\omega^{\beta}$ be a state on a realization of the dynamical algebra $\mathfrak{A}_{\omega}$ with dynamics $\tau_t$. Then, $\omega^{\beta}$ is a thermal equilibrium or KMS (Kubo-Martin-Schwinger) state, at inverse temperature $\beta$ with respect $\tau_t$, if the functions:
\begin{equation*}
    (t_1, \ldots, t_n) \mapsto \omega^{\beta}(\tau_{t_1}(F_1) \star_{\omega} \cdots \star_{\omega} \tau_{t_n}(F_n)) \, , \quad F_1, \ldots, F_n \in \mathfrak{A}_{\omega}
\end{equation*}
have analytic continuation to the region:
\begin{equation*}
    \{(z_1, \ldots, z_n) \in \mathbb{C}^n \, : \, 0 < \Im(z_j) - \Im(z_i) < \beta, \quad 1 \leq i < j \leq n\},
\end{equation*}
are bounded and continuous on its closure and fulfil the boundary conditions:
\begin{align*}
    \omega^{\beta}(\tau_{t_1}(F_1) \star_{\omega} \cdots \star_{\omega} \tau_{t_{k-1}}(F_{k-1}) \star_{\omega} \tau_{t_{k} + i \beta}(F_{k}) \star_{\omega} \cdots \star_{\omega} \tau_{t_n + i \beta}(F_n))\\
    = \omega^{\beta}(\tau_{t_{k}}(F_{k}) \star_{\omega} \cdots \star_{\omega} \tau_{t_n}(F_n) \star_{\omega} \tau_{t_1}(F_1) \star_{\omega} \cdots \star_{\omega} \tau_{t_{k-1}}(F_{k-1})).
\end{align*}
\end{defn}
The definition of the ground state via condition \eqref{eq: condground} is equivalent to the analiticity of the function $\mathbb{C} \ni z \mapsto \omega^{\infty}(F \tau_z(G))$ for $\Im(z) > 0$ and boundedness for $\Im(z) \geq 0$, see \cite[Proposition $5.3.19$]{BratteliRobinson}.\\
Equivalent characterizations for the KMS condition can also be found in \cite{BratteliRobinson}.

\subsection{Grassmann Algebras, Bosonization and Introduction of Interactions}
As explained in the introduction, in order to discuss interacting Dirac fields we need to deform the introduced fermionic algebra. As the deformation will be performed using Grassmann numbers, we start recalling the definition of a Grassmann algebras.
\begin{defn}
A Grassmann algebra $\mathscr{G}_N$ with $N$ generators $\{ \eta_1, \dots \eta_N\}$, is an algebra, with unit $\mathbb{1}$, an associative product and a $\mathbb{K}$-linear space (with $\mathbb{K}$ a field either $\mathbb{R}$ or $\mathbb{C}$) such that:
\begin{equation*}
    \eta_{\mu} \eta_{\nu} + \eta_{\nu} \eta_{\mu} = 0 \hspace{15pt} \forall \mu,\nu = 1, \dots, N.
\end{equation*}
Any element $\alpha \in \mathscr{G}_N$ can be written as:
\begin{equation} \label{eq: Gdec}
    \alpha = c_0(\alpha) + \sum_{k \geq 1} \sum_{\mu_1, \dots, \mu_k} c_{\mu_1,\dots,\mu_k}(\alpha) \eta_{\mu_1} \cdots \eta_{\mu_k},
\end{equation}
for $c_0(\alpha), c_{\mu_1,\dots,\mu_k}(\alpha) \in \mathbb{K}$.
\end{defn}
Analogously as we did for fermionic functionals a degree map, with a little notation abuse, is introduced $\mathrm{deg}: \mathscr{G}_N \to \{0,1 \}$. With the aid of a Grassmann algebra, we deform the fermionic algebra as follows.
\begin{defn}
Given a real Grassmann algebra $\mathscr{G}_N$ and a realization of the fermionic Algebra $\mathfrak{A}$, we call the correspondingly \textbf{bosonized fermionic algebra}, denoted $\mathfrak{A}^{\mathscr{G}_N} \subset \mathscr{G}_N \otimes \mathfrak{A}$, the unital (unit $\mathbb{1} \otimes \mathbb{1}$) associative algebra generated by elements obtained via the prescription that for any $F \in \mathfrak{A}$:
\begin{equation*}
    \mathfrak{A}^{\mathscr{G}_N} \ni \Tilde{F} \coloneqq \left\{ \begin{aligned}&\eta \otimes F \qquad \mathrm{if} \quad \mathrm{deg}(F) = \mathrm{odd}\\
    &\mathbb{1} \otimes F \qquad \mathrm{if} \quad \mathrm{deg}(F) = \mathrm{even}\end{aligned}\right.
\end{equation*}
where $\eta \in \mathscr{G}_N$ is of odd degree. The associative product on it, for $\alpha_1, \alpha_2 \in \mathscr{G}_N$ and $F_1,F_2 \in \mathfrak{A}$, is given by:
\begin{equation*}
    (\alpha_1 \otimes F_1) \star (\alpha_2 \otimes F_2) = (-1)^{\mathrm{deg}(\alpha_2) \mathrm{deg}(F_1)} (\alpha_1 \alpha_2) \otimes (F_1 \star F_2).
\end{equation*}
Finally, $\mathfrak{A}^{\mathscr{G}_N}$ carries an induced $\ast$-involution:
\begin{equation*}
    \big( (\eta_{\mu_1} \cdots \eta_{\mu_n}) \otimes F \big)^* \coloneqq (\eta_{\mu_n} \cdots \eta_{\mu_1}) \otimes F^*,
\end{equation*}
for all $\mu_{1}, \ldots, \mu_{n} \in \{1, \ldots, N\}$ and $\eta_{\mu_i} \in \mathscr{G}_N$ generators.
\end{defn}
In particular, for the sake of completeness, the support of a bosonized functional is defined to be the support of the corresponding fermionic functional:
\begin{equation}\label{def: bosonization}
    \mathrm{supp}(\eta_{\mu} \otimes F) = \mathrm{supp}(F).
\end{equation}
The elements of $\mathfrak{A}^{\mathscr{G}_N}$ are called \textit{bosonized}, as they now fulfil a Bose statistics. The choice of Grassmann algebra $\mathscr{G}_N$ that was made, will not affect the construction of interacting theories. This non-trivial fact was proven in \cite{BrunettiDuestschRejznerFredenhagenFermions} where the action of $\mathfrak{A}^{\mathscr{G}_N}$ is shown to be functorial in the sense that all operations commute with respect to homomorphisms between finite dimensional Grassmann algebras.\\
As mentioned in the introductory section, the Grassmann degrees of freedom do not have any physical relevance. For this reason, we choose to encode their removal in the definition of state functionals on $\mathfrak{A}^{\mathscr{G}_N}$.
\begin{prop}\label{prop: statiGrass}
Let $\Tilde{\omega}: \mathfrak{A}^{\mathscr{G}_N} \to \mathbb{C}$ be defined for any $\alpha \otimes F \in \mathfrak{A}^{\mathscr{G}_N}$ as:
\begin{equation*}
    \Tilde{\omega}(\alpha \otimes F) \coloneqq \omega^{\mathscr{G}_N}(\alpha) \omega(F)
\end{equation*}
where $\omega: \mathfrak{A} \to \mathbb{C}$ is a state over the fermionic algebra, while $\omega^{\mathscr{G}_N}$ is a functional on the Grassmann algebra defined by the action on generators: 
\begin{align*}
    \omega^{\mathscr{G}_N}(\mathbb{1}) &\coloneqq 1\\
    \omega^{\mathscr{G}_N}(\eta_{\sigma(1)} \cdots \eta_{\sigma(k)}) &= (-1)^{\mathrm{sgn}(\sigma)} \omega^{\mathscr{G}_N}(\eta_1 \cdots \eta_k)\\
    &\coloneqq (-1)^{\mathrm{sgn}(\sigma)}.
\end{align*}
Here, $\sigma$ is any permutation of the $k$ $(\leq N)$ generators and $\mathrm{sgn}$ denotes the sign of the permutation. Then, $\Tilde{\omega}$ is a state on $\mathfrak{A}^{\mathscr{G}_N}$.
\end{prop}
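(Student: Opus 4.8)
The plan is to verify the three defining properties of a state directly: linearity, normalization, and positivity. Linearity is immediate, since $\tilde\omega$ is defined by bilinear extension of the product $\omega^{\mathscr{G}_N}(\alpha)\,\omega(F)$ and both $\omega^{\mathscr{G}_N}$ and $\omega$ are linear on their respective factors. Normalization is equally quick: by construction $\tilde\omega(\mathbb{1}\otimes\mathbb{1}) = \omega^{\mathscr{G}_N}(\mathbb{1})\,\omega(\mathbb{1}) = 1\cdot 1 = 1$. So the entire content of the statement is positivity, i.e. $\tilde\omega(\tilde A^* \star \tilde A)\ge 0$ for every $\tilde A \in \mathfrak{A}^{\mathscr{G}_N}$.

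For positivity, first I would reduce to a convenient normal form. A general element of $\mathfrak{A}^{\mathscr{G}_N}$ is a finite sum $\tilde A = \sum_I \eta_I \otimes F_I$, where $I$ ranges over ordered multi-indices in $\{1,\dots,N\}$, $\eta_I$ denotes the ordered monomial $\eta_{i_1}\cdots\eta_{i_k}$, and $F_I \in \mathfrak{A}$ (with $\mathrm{deg}(F_I)$ matching the parity of $|I|$, by the bosonization prescription). Using the $\ast$-involution $(\eta_I\otimes F_I)^* = \eta_I^{\mathrm{rev}}\otimes F_I^*$ and the twisted product rule, one computes $\tilde A^*\star \tilde A = \sum_{I,J} (-1)^{\mathrm{deg}(\eta_J)\mathrm{deg}(\eta_I^{\mathrm{rev}})}\,(\eta_I^{\mathrm{rev}}\eta_J)\otimes (F_I^* \star F_J)$. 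Applying $\tilde\omega$ picks out $\sum_{I,J} (-1)^{\ast}\,\omega^{\mathscr{G}_N}(\eta_I^{\mathrm{rev}}\eta_J)\,\omega(F_I^*\star F_J)$. The key combinatorial observation is that $\omega^{\mathscr{G}_N}(\eta_I^{\mathrm{rev}}\eta_J)$ vanishes unless the monomials $\eta_I^{\mathrm{rev}}$ and $\eta_J$ together exhaust no repeated generators and, in fact, unless $I=J$ as sets — more precisely, since $\omega^{\mathscr{G}_N}$ is only nonzero on squarefree ordered monomials and returns a sign, the product $\eta_I^{\mathrm{rev}}\eta_J$ must be (a reordering of) a squarefree monomial, forcing $I$ and $J$ to be disjoint or equal; a short case analysis shows the surviving contributions, after collecting all signs (the one from the twisted product, the one from reversing $\eta_I$, and the one from $\omega^{\mathscr{G}_N}$'s permutation sign), reduce the double sum to $\sum_I \omega(F_I^*\star F_I)$ plus cross terms that cancel in conjugate pairs. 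Since $\omega$ is a state on $\mathfrak{A}$, each $\omega(F_I^*\star F_I)\ge 0$, giving $\tilde\omega(\tilde A^*\star\tilde A)\ge 0$.

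The main obstacle is bookkeeping the signs so that the cross terms ($I\neq J$) genuinely drop out and the diagonal terms come out with a $+$ sign rather than an unwanted $-$. This is where the precise form of the twisted product $(-1)^{\mathrm{deg}(\alpha_2)\mathrm{deg}(F_1)}$, the reversal in the $\ast$-involution, and the antisymmetry $\eta_\mu\eta_\nu=-\eta_\nu\eta_\mu$ must all conspire correctly; I expect that the cleanest route is to observe that $\eta_I^{\mathrm{rev}}\eta_I = 0$ whenever $|I|\ge 2$ contains a repeat — but it does not repeat, so instead one uses that for a squarefree $I$ of length $k$ one has $\eta_I^{\mathrm{rev}} = (-1)^{k(k-1)/2}\eta_I$, and that the parity matching $\mathrm{deg}(F_I)\equiv |I|\ \mathrm{mod}\ 2$ from the bosonization rule is exactly what makes the product-twist sign combine with $(-1)^{k(k-1)/2}$ to yield $+1$ on the diagonal. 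For the off-diagonal terms, pairing $(I,J)$ with $(J,I)$ and using $\omega(F_J^*\star F_I) = \overline{\omega(F_I^*\star F_J)}$ together with the symmetry of the accumulated Grassmann sign under this swap shows the imaginary parts cancel and the real parts either cancel or are absorbed; alternatively, one notes $\tilde A^*\star\tilde A$ is already $\ast$-self-adjoint so $\tilde\omega$ of it is real, and a Cauchy–Schwarz / GNS-type argument on the $\mathfrak{A}$ factor finishes positivity. Either way, once the signs are pinned down the result follows, and I would present the sign computation as the single lemma-like step and leave the rest as routine.
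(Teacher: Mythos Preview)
Your sign-tracking approach has the combinatorics exactly backwards, and this is a genuine gap rather than a bookkeeping nuisance. You write that ``the surviving contributions \dots\ reduce the double sum to $\sum_I \omega(F_I^*\star F_I)$ plus cross terms that cancel,'' but in fact the diagonal terms with $|I|\ge 1$ are the ones that vanish: the product $\eta_I^{\mathrm{rev}}\eta_I = \eta_{i_k}\cdots\eta_{i_1}\eta_{i_1}\cdots\eta_{i_k}$ contains $\eta_{i_1}^2=0$, so it is zero as an element of $\mathscr{G}_N$ before $\omega^{\mathscr{G}_N}$ is even applied. Your own parenthetical (``$\eta_I^{\mathrm{rev}}\eta_I=0$ whenever $|I|\ge 2$ contains a repeat --- but it does not repeat'') misses this: $\eta_I^{\mathrm{rev}}\eta_I$ always repeats every index of $I$, once from $\eta_I^{\mathrm{rev}}$ and once from $\eta_I$. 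The identity $\eta_I^{\mathrm{rev}}=(-1)^{k(k-1)/2}\eta_I$ that you invoke only makes this clearer: it gives $\eta_I^{\mathrm{rev}}\eta_I=(-1)^{k(k-1)/2}\eta_I^2=0$. Consequently the only diagonal term that survives is $I=J=\emptyset$, contributing $\omega(F_\emptyset^*\star F_\emptyset)$; what remains are the off-diagonal terms with $I\cap J=\emptyset$, and these neither cancel in conjugate pairs nor have a definite sign in any obvious way. Your Cauchy--Schwarz fallback does not rescue this either, since there is no positive diagonal left to dominate the cross terms.

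The paper's proof takes the opposite route and is essentially one line: it observes that for $\alpha\in\mathscr{G}_N$ different from $\mathbb{1}$ one has $\alpha\,\alpha^*=0$, so that on a simple tensor $\alpha\otimes F$ the quantity $\tilde\omega\big((\alpha\otimes F)^*\star(\alpha\otimes F)\big)$ is either $0$ (when $\alpha\neq\mathbb{1}$) or $\omega(F^*\star F)\ge 0$ (when $\alpha=\mathbb{1}$, forcing $F$ even). In other words, the nilpotency of the Grassmann sector is the mechanism, not an obstacle, and positivity is inherited entirely from the $\mathbb{1}\otimes(\cdot)$ part. You should reorganize your argument around this observation rather than trying to assemble positivity out of surviving diagonal pieces that are not there.
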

\begin{proof}
Linearity and normalization immediately follow from the given definition. Positivity is a consequence of the fact that, for any $\alpha \in \mathscr{G}_N$ different from the unit and of a specific degree, it holds: $\alpha  \alpha^*= 0$. Therefore:
\begin{equation*}
    \Tilde{\omega}(\alpha\alpha^* \otimes F \star F^*) = 0
\end{equation*}
while, for $F$ of even order we have elements of the form $\mathbb{1} \otimes F$ and therefore:
\begin{equation*}
    \Tilde{\omega}(\mathbb{1} \otimes F \star F^*) \geq 0
\end{equation*}
by $\omega: \mathfrak{A} \to \mathbb{C}$ being a state.
\end{proof}
On $\mathfrak{A}^{\mathscr{G}_N}$ it is possible to discuss interacting fermionic theories. In order to do it, we introduce time ordered products following the original papers \cite{EpsteinGlaser, BrunettiFredenhagen00, HW02, HW05}, see also \cite{DutschBook} for a complete review. We define the time ordered product as a map on local functionals $T_{n,\mathscr{G}_N}:\big(\mathfrak{A}^{\mathscr{G}_N}|_{\mathscr{G}_N \otimes \mathscr{F}_{loc}}\big)^{\otimes n} \to \mathfrak{A}^{\mathscr{G}_N}|_{\mathscr{G}_N \otimes \mathscr{F}_{loc}}$, via the following prescription:
\begin{equation}
    T_{n,\mathscr{G}_N}\big( \alpha_1 \otimes F_1, \ldots, \alpha_n \otimes F_n \big) \coloneqq (\alpha_n \cdots \alpha_1) \otimes T_n\big( F_1, \ldots, F_n \big).
\end{equation}
The time ordered product $T_{n,\mathscr{G}_N}$ satisfies several properties reported in the above cited references. For later purposes, here we only mention some:
\begin{itemize}
    \item Invariance under permutations of $(\alpha_1 \otimes F_1), \ldots, (\alpha_n \otimes F_n) \in \mathscr{G}_N \otimes \mathscr{F}_{loc}$.
    \item The \textit{Causality axiom}, namely for any $(\alpha_1 \otimes F_1), \ldots, (\alpha_n \otimes F_n) \in \mathscr{G}_N \otimes \mathscr{F}_{loc}$:
    \begin{equation}
         T_n\big( F_1, \ldots, F_n \big) = \mathrm{sgn}(\sigma) \, T_k\big( F_{\sigma (1)}, \ldots, F_{\sigma (k)} \big) \star T_{n-k}\big( F_{\sigma (k+1)}, \ldots, F_{\sigma (n)}  \big),
    \end{equation}
    whenever $J^+\big(\mathrm{supp}(F_{\sigma (1)}) \cup \ldots \cup \mathrm{supp}(F_{\sigma (k)})\big) \cap J^{-}\big(\mathrm{supp}(F_{\sigma (k+1)}) \cup \ldots \cup \mathrm{supp}(F_{\sigma (n)})\big) = \emptyset$, $\sigma$ denotes the permutation and $\mathrm{sgn}$ its sign (including the degree of fermionic 
    functionals).
    \item  The representation of the time ordered product as:
    \begin{equation*}
        T_{n,\mathscr{G}_N}\big( \alpha_1 \otimes F_1, \ldots, \alpha_n \otimes F_n \big) = (\alpha_n \cdots \alpha_1) \otimes \big( F_1 \star_{F} \cdots \star_F F_n \big)
    \end{equation*}
    where $\star_F$ is defined as in \eqref{eq: DefProd} replacing the Pauli-Jordan function $\mathscr{S}$ with the fundamental solution of $\cancel{D}$ denoted as $\mathscr{S}_F$ called Feynman propagator of $\cancel{D}$. This representation is unique up to renormalization freedoms, appearing in the extension of powers of $\mathscr{S}_F$.
\end{itemize}
It is at this step that the theory, if needed, is perturbatively renormalized. The time ordered product is inductively constructed so that at each inductive step the possibly involved ill-defined products of distributions are extended non-uniquely to well defined distributions. This is equivalent to renormalization as, the non-unique extensions, are parametrized exactly by the renormalization freedoms \cite{EpsteinGlaser}.\\\\
With the aid of the time ordered product on the bosonized algebra, for any:
\begin{equation*}
    \Tilde{F} = \sum_{j=1}^{N} \lambda_j (a_j \otimes F_j) \, , \qquad a_j \in \{ \mathbb{1}, \eta_j\} \, , \quad F_j \in \mathscr{F}_{loc}\, , \quad \lambda_j \in \mathbb{R} \, , \quad N < \infty
\end{equation*}
the associated $S$-matrix is defined:
\begin{equation}\label{eq: S matrix}
    S(\Tilde{F}) \coloneqq \mathbb{1} + \sum_{n=1}^{\infty} \frac{i^n}{n! \hbar^n}T_{n, \mathscr{G}_N}\big( \Tilde{F}^{\otimes n} \big)
\end{equation}
where $j_1, \ldots, j_n \in \{1, \ldots, N\}$. The necessity of the bosonization manifests as now $S$-matrices are generators of $n$-th order time ordered products:
\begin{equation*}
    (a_{j_1} \cdots a_{j_n}) \otimes T_n(F_{j_1} \otimes \cdots \otimes F_{j_n}) = \frac{\hbar^n}{i^n}\frac{\partial^n}{\partial \lambda_{j_1} \cdots \partial \lambda_{j_n}}\bigg|_{\lambda_1 = \ldots = \lambda_{N} = 0} S(\Tilde{F}).
\end{equation*}
The so defined $S$-matrix has inverse $S(\Tilde{F})^{-1} = S(\Tilde{F}^*)^*$ and satisfies, as a consequence of the analogous property for time ordered products, the causal factorization property:
\begin{equation}\label{eq: Causal Fact}
    S(\Tilde{F}_{1} + \Tilde{F}_{2} + \Tilde{F}_{3}) = S(\Tilde{F}_{1} + \Tilde{F}_{2}) \star S(\Tilde{F}_{2})^{\star-1}\star S(\Tilde{F}_{2} + \Tilde{F}_{3})
\end{equation}
for $\Tilde{F}_{1}, \Tilde{F}_{2}, \Tilde{F}_{3}$ even functionals such that $\mathrm{supp} \Tilde{F}_{1} \cap J^{-}(\mathrm{supp} \Tilde{F}_{3}) = \emptyset$.\\\\

The interactions considered in this paper are elements in $\mathfrak{A}^{\mathscr{G}_N}$, for $\chi \in \mathcal{C}^{\infty}_0(\mathbb{R})$ and $h \in \mathcal{C}_0^{\infty}(\Sigma)$, defined by the potential:
\begin{equation}\label{eq: formaV}
   \Tilde{V} \coloneqq \mathbb{1} \otimes V = \mathbb{1} \otimes \lambda \bigg( \int_{\mathbb{M}} \chi(t) h(\mathbf{x}) \mathcal{L}_I \di^4x \bigg),
\end{equation}
where it is assumed that $\mathcal{L}_I$ is invariant under spacetime translation, up to a possible smooth bounded function. Namely, $\mathcal{L}_{I} = (\overline{\psi} \psi)^n$ or $\mathcal{L}_I = (\overline{\psi} \cancel{P}(\mathbf{x}) \psi)^{n}$ for $n \in \mathbb{N}$ and $P: \mathbb{R}^3 \to \mathbb{R}$ so that $V$ is a microcausal functional. We notice that in this case, juxtaposition by any Grassmann variable is not needed as, in order for the full Lagrangian to inherit the continuous and discrete symmetries of the free Lagrangian, $\mathcal{L}_I$ must be of even degree.\\
Accordingly, interacting observables for an interacting Dirac theory with interaction $V$ are defined. Namely, in the sense of formal power series, exists a map (called \textit{Bogoliubov map}) $R_{\Tilde{V}}: \mathfrak{A}^{\mathscr{G}_N} \to \mathfrak{A}^{\mathscr{G}_N}[[\lambda]]$ that maps observables of the interacting theory into formal power series in the coupling constant $\lambda$ in $\mathfrak{A}^{\mathscr{G}_N}$. The Bogoliubov map has the following form:
\begin{equation}\label{eq: Bogol}
    R_{\Tilde{V}}(\Tilde{F}) \coloneqq -\frac{i}{\lambda} \frac{d}{ds}\bigg|_{s=0} S(\Tilde{V})^{-1} \star S(\Tilde{V} + s \Tilde{F}) = S(\Tilde{V})^{-1} \star (T_{\mathscr{G}_N}(\Tilde{F}, S(\Tilde{V}))),
\end{equation}
where again, by the evenness of the interaction, Grassmann variables enter just in the possible bosonization of the observable $\Tilde{F}$. The set of all interacting observables is a subalgebra of $\mathfrak{A}^{\mathscr{G}_N}$ defined as:
\begin{equation*}
    \mathfrak{A}_I^{\mathscr{G}_N}(\mathcal{O}) \coloneqq \big[ \{ R_{\Tilde{V}}(\Tilde{F}) | \mathrm{supp}(\Tilde{F}) \subset \mathcal{O} \} \big],
\end{equation*}
where the square brackets denote the algebra generated by the elements in them.
Notice that, by the causal factorization property, $\mathfrak{A}_I^{\mathscr{G}_N}(\mathcal{O})$ is not sensitive to changes of $\chi$ to $\chi'$ when $\text{supp} \big((\chi-\chi')h \big)\cap J^{-}(\mathcal{O}) = \emptyset$. As a consequence, as long as the observables are supported after the complete switch on, we may extend $\chi$ to:
\begin{equation}\label{eq:accad}
    \chi(t) = \left\{ \begin{aligned}
                          &0 \qquad \text{for $t < -2\epsilon$}\\
                          &1 \qquad \text{for $t \geq  -\epsilon$}
                      \end{aligned}\right.
\end{equation}
for any $\epsilon \in \mathbb{R}^+$. In what follows, we refer to this as the \textbf{smooth switch-on function}.\\\\
Finally, as the aim is constructing equilibrium states for interacting theories, we define (perturbatively) the dynamics on $\mathfrak{A}_I^{\mathscr{G}_N}$ induced by the full Lagrangian. The latter is defined as the pullback of the free dynamics with respect to the Bogoliubov map as:
\begin{equation}\label{def: InterDynam}
    \tau_t^{\Tilde{V}} R_{\Tilde{V}}(\Tilde{F}) \coloneqq R_{\Tilde{V}}(\tau _t \Tilde{F})\, , \quad \forall t \in \mathbb{R},
\end{equation}
where $\tau_t$ is the extension of the free dynamics on $\mathfrak{A}^{\mathscr{G}_N}$. The given definition of $\tau_t^{\Tilde{V}}$, makes it an automorphism on $\mathfrak{A}_I^{\mathscr{G}_N}$ describing the interacting dynamics. The authors of \cite{FredenhagenLindnerKMS_2014} noticed that the defined automorphism in \eqref{def: InterDynam} is equivalent to the existence of a unitary $U_V(t)$ for $t \in \mathbb{R}$ intertwining the free and the interacting dynamics (Theorem $1$ of \cite{FredenhagenLindnerKMS_2014}). The first crucial point is to avoid the infrared divergences introducing the space cutoff $h \in \mathcal{C}^{\infty}_0(\mathbb{R}^3)$ and time switching function $\chi$ and discuss their dependence and removal at the level of expectation values. The second point is to avoid the ultraviolet divergences by the aid of the time-slice axiom. The latter states that the algebra of observables localized within an arbitrarily small time interval suffice to determine any other observable of the theory with arbitrary localization up to terms vanishing on-shell. Namely, denoting $\Sigma_{\epsilon} \coloneqq \{ (t,\mathbf{x}) : -\epsilon < t < \epsilon \}$, for any Hadamard state $\omega$ and $F \in \mathfrak{A}$ observable for the scalar theory:
\begin{equation*}
    \omega\left(R_{V}(F)\right) = \omega\left(R_{V}(F_{\epsilon})\right)
\end{equation*}
for some $F_{\epsilon} \in \mathfrak{A}(\mathcal{O})$ with $\mathcal{O} \subset \Sigma_{\epsilon}$. This was proven to hold for interacting scalar fields, in this formal renormalized perturbative framework, on globally hyperbolic spacetimes in \cite{ChilianFredenhagen} allowing to avoid divergences arising from the restriction to an initial data Cauchy surface, as discussed in the introduction.\\

The cited work focuses on interacting scalar QFT, but the just mentioned statement directly generalizes to our case. Indeed, following the proof in \cite{ChilianFredenhagen}, the time-slice axiom holds also in the fermionic setting by mean of the performed twist on the algebraic structure ensuring the validity of the causal factorization property of $S$-matrices \eqref{eq: Causal Fact} and their normalization $S(0) = \mathbb{1}$.\\

As a consequence, there exist a $U_{\Tilde{V}}(t) \in \mathfrak{A}^{\mathscr{G}_N}_I$ unitary for $t \in \mathbb{R}$ that satisfies:
\begin{equation*}
    \tau_t^{\Tilde{V}}(\Tilde{F})= U_{\Tilde{V}}(t) \star \tau_t(\Tilde{F}) \star U_{\Tilde{V}}(t)^{-1}.
\end{equation*}
Such a unitary, satisfies the cocycle condition $U_{\Tilde{V}}(t+s) = U_{\Tilde{V}}(t) \star \tau_t (U_{\Tilde{V}}(s))$ and has a perturbative definition in terms of a formal power series:
\begin{equation*}
    U_{\Tilde{V}}(t) = \mathbb{1} + \sum_{n=1}^{\infty} i^n \int_0^t \di t_1 \int_0^{t_1} \di t_2 \cdots \int_0^{t_{n-1}} \di t_n \tau_{t_n}(\Tilde{K}) \star \cdots \star \tau_{t_1}(\Tilde{K})
\end{equation*}
in its generator:
\begin{equation}\label{eq: generatore}
    \Tilde{K} = - R_{\Tilde{V}}\bigg( \mathbb{1} \otimes \bigg(\lambda \int_{\mathbb{M}} \di^4x \, h(\mathbf{x}) \Dot{\chi}(t) \mathcal{L}_I \bigg) \bigg),
\end{equation}
where $\Dot{\chi}(t)$ denotes the derivative of the switch-on function $\chi(t)$. As a consequence, the interacting dynamics itself admits a perturbative expansion in terms of the free one:
\begin{equation*}
    \tau_t^{\Tilde{V}}(\Tilde{F}) = \tau_t(\Tilde{F}) + \sum_{n\geq 1}i^n 
\int_{t S_n}[ \tau_{t_1}(\Tilde{K}),[\dots ,[\tau_{t_n}(\Tilde{K}),\tau_t(\Tilde{F}) ]\dots  ] ]_{\star}
\di t_1 \dots \di t_n,
\end{equation*}

\section{Equilibrium States for Interacting Fermi Fields}\label{sec: SuppComp}
In this section we present the construction of KMS states for interacting Fermi fields. Here, following the ideas in \cite{FredenhagenLindnerKMS_2014} with all technical distinctions discussed, we present the construction for spatially compact interactions. The removal of the space cutoff, $h \to 1$, is devoted to the next section.
\begin{thm}\label{thm: stati}
Let $\Tilde{\omega}^{\beta}$ be the extension to $\mathfrak{A}^{\mathscr{G}_N}$, the bosonized algebra of smeared Wick polynomials, of a KMS state (inverse temperature $0 < \beta < \infty$) or a ground state ($\beta = \infty$) with respect to $\tau_t$ on the fermionic algebra with two point functions:
\begin{align}
    \omega_2^{\beta,\pm}(x,y) &= \frac{1}{(2 \pi)^3} \int \frac{\di^3\mathbf{p}}{2 \omega_p} \bigg( \frac{(-\gamma^0 \omega_p - \gamma^i p_i + m)e^{-i\omega_p (t_x - t_y)}}{(1 + e^{\mp\beta \omega_p})} - \frac{(\gamma^0 \omega_p - \gamma^i p_i + m)e^{i\omega_p (t_x - t_y)}}{(1 + e^{\pm\beta \omega_p})}  \bigg) e^{i \mathbf{p} (\mathbf{x}- \mathbf{y})} \label{eq: 2puntiKMS}\\
    \omega_2^{\infty,\pm}(x,y) &= \frac{1}{(2 \pi)^3} \int \frac{\di^3\mathbf{p}}{2 \omega_p} (-\gamma^0 \omega_p - \gamma^i p_i \pm m)e^{\mp i\left(\omega_p (t_x - t_y) - \mathbf{p} (\mathbf{x}- \mathbf{y})\right)} \label{eq: 2puntiGround}.
\end{align}
Then, the following holds in the sense of formal power series:
\begin{itemize}
    \item For $\Tilde{F}_1, \ldots, \Tilde{F}_n \in \mathfrak{A}^{\mathscr{G}_N}(\mathcal{O})$ for $\mathcal{O} \subset \Sigma_{\epsilon}$ and $r,t_1, \ldots, t_n,s \in \mathbb{R}$ the function:
    \begin{equation*}
        G_{\Tilde{F}_1, \cdots, \Tilde{F}_n}(r,t_1, \ldots, t_n,s) = \frac{\Tilde{\omega}^{\beta}(U_{\Tilde{V}}(r)^{-1} \star \tau_{t_1}^{\Tilde{V}}(\Tilde{F}_1) \star \cdots \star \tau_{t_n}^{\Tilde{V}}(\Tilde{F}_n) \star U_{\Tilde{V}}(s))}{\Tilde{\omega}^{\beta}(U_{\Tilde{V}}(s-r))}
    \end{equation*}
    can be extended to a continuous function on the closure, and is analytic in the interior, of the strip:
    \begin{equation*}
        \mathfrak{I}_{n+2}^{\beta} \coloneqq \{ (z_1, \ldots, z_{n+2}) \in \mathbb{C}^n : 0 < \big|\Im(z_i) - \Im(z_j)\big| < \beta, \quad 1 \leq i < j \leq n +2 \}.
    \end{equation*}
    \item For $0 < \beta < \infty$ the linear functional $\Tilde{F} \mapsto \Tilde{\omega}^{\beta, \Tilde{V}}(\Tilde{F}) \coloneqq G_{\Tilde{F}}(-i\beta /2, 0, i\beta /2)$ is a KMS state with respect to $\tau_t^{\Tilde{V}}$
    \item If for $\beta = \infty$ the limit $\lim_{\beta' \to \infty} G_{\Tilde{F}_1, \cdots, \Tilde{F}_n}(-i\beta'/2, \ldots, i\beta'/2)$ exists uniformly on any compact set in $\mathfrak{I}_{n}^{\infty}$ for all $\Tilde{F}_1, \ldots, \Tilde{F}_n \in \mathfrak{A}^{\mathscr{G}_N}((\mathcal{O}))$, then $\Tilde{\omega}^{\infty, \Tilde{V}}(\Tilde{F}) = \lim_{\beta' \to \infty} G_{\Tilde{F}}(-i\beta'/2, 0, i\beta'/2)$ is a ground state with respect to $\tau_t^{\Tilde{V}}$
\end{itemize}
\end{thm}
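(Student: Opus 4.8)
The plan is to transcribe the Fredenhagen--Lindner construction \cite{FredenhagenLindnerKMS_2014, Lindner} to the bosonized fermionic algebra $\mathfrak{A}^{\mathscr{G}_N}$, every identity being read as one of formal power series in $\lambda$, so that at each fixed order only finitely many terms occur. The first step is to rewrite $G_{\Tilde{F}_1,\dots,\Tilde{F}_n}$ purely in terms of the free data. Inserting the intertwining relation $\tau_t^{\Tilde{V}}(\Tilde{F})=U_{\Tilde{V}}(t)\star\tau_t(\Tilde{F})\star U_{\Tilde{V}}(t)^{-1}$ and using repeatedly the cocycle identity $U_{\Tilde{V}}(t+s)=U_{\Tilde{V}}(t)\star\tau_t(U_{\Tilde{V}}(s))$, both numerator and denominator of $G$ collapse to free correlators of the shape $\Tilde{\omega}^{\beta}\big(\tau_{s_1}(\Tilde{A}_1)\star\cdots\star\tau_{s_m}(\Tilde{A}_m)\big)$, where each $\Tilde{A}_j$ is one of the $\Tilde{F}_i$ or a copy of the relative-Hamiltonian generator $\Tilde{K}$ of \eqref{eq: generatore}. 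All these objects are well defined: $h$ has compact spatial support and $\Dot{\chi}$ compact time support, so the spatial integrals and the time integrals defining $U_{\Tilde{V}}$ run over compact sets, while the $\star_{\omega}$-products are defined by microcausality together with the Hadamard property of $\Tilde{\omega}^{\beta}$, which holds for Dirac equilibrium states by \cite{SahlmannVerch2, SahlmannVerch}.

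For the first bullet I would use that $\Tilde{\omega}^{\beta}$ is, by hypothesis, a KMS (resp. ground, for $\beta=\infty$) state for $\tau_t$: the free correlators above then extend analytically, jointly in the time arguments, to a tube of width $\beta$ in the imaginary directions, bounded and continuous on its closure. Concretely this rests on the explicit kernels \eqref{eq: 2puntiKMS}: continuing a time difference into the strip of width $\beta$ leaves the momentum integrals convergent because the Fermi factors $(1+e^{\mp\beta\omega_p})^{-1}$ are bounded, and higher correlators are polynomial in the two-point function by the quasi-free structure of $\Tilde{\omega}^{\beta}$. It then remains to check that $U_{\Tilde{V}}(r)$ itself continues analytically in $r$; since its time integrals range over the (translated) compact support of $\Dot{\chi}$, the contours can be deformed into $\mathbb{C}$ using the analyticity of the $z\mapsto\tau_z(\Tilde{K})$-correlators just obtained. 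Plugging these continuations into the rewritten $G$ yields the claimed analytic extension to $\mathfrak{I}_{n+2}^{\beta}$ with the stated regularity on the closure.

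For the second bullet, linearity of $\Tilde{F}\mapsto\Tilde{\omega}^{\beta,\Tilde{V}}(\Tilde{F})$ and normalization are immediate (the latter from the denominator). Positivity I would get by bringing $\Tilde{\omega}^{\beta,\Tilde{V}}$ to the Araki form $\Tilde{F}\mapsto\Tilde{\omega}^{\beta}\big(U_{\Tilde{V}}(i\beta/2)^{*}\star\Tilde{F}\star U_{\Tilde{V}}(i\beta/2)\big)\big/\Tilde{\omega}^{\beta}\big(U_{\Tilde{V}}(i\beta/2)^{*}\star U_{\Tilde{V}}(i\beta/2)\big)$: using self-adjointness of $\Tilde{V}$ one identifies $U_{\Tilde{V}}(-i\beta/2)^{-1}=U_{\Tilde{V}}(i\beta/2)^{*}$, and the free KMS condition rewrites the denominator $\Tilde{\omega}^{\beta}(U_{\Tilde{V}}(i\beta))$ as $\Tilde{\omega}^{\beta}\big(U_{\Tilde{V}}(i\beta/2)^{*}\star U_{\Tilde{V}}(i\beta/2)\big)$, so positivity follows from $\Tilde{\omega}^{\beta}$ being a state on $\mathfrak{A}^{\mathscr{G}_N}$ (Proposition \ref{prop: statiGrass}). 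The KMS condition for $\tau_t^{\Tilde{V}}$ is then checked by noting that $\Tilde{\omega}^{\beta,\Tilde{V}}\big(\tau^{\Tilde{V}}_{t_1}(\Tilde{F}_1)\star\cdots\star\tau^{\Tilde{V}}_{t_n}(\Tilde{F}_n)\big)$ equals $G_{\Tilde{F}_1,\dots,\Tilde{F}_n}(-i\beta/2,t_1,\dots,t_n,i\beta/2)$, so interior analyticity in $(t_1,\dots,t_n)$ and boundedness on the closure are already in hand from the first bullet, while the boundary identity $\Tilde{\omega}^{\beta,\Tilde{V}}(\tau^{\Tilde{V}}_{t_1+i\beta}(\Tilde{F}_1)\star\cdots)=\Tilde{\omega}^{\beta,\Tilde{V}}(\tau^{\Tilde{V}}_{t_2}(\Tilde{F}_2)\star\cdots\star\tau^{\Tilde{V}}_{t_1}(\Tilde{F}_1))$ reduces to commuting a $\tau_{i\beta}$ through the chain via the free KMS condition and re-absorbing it at the endpoints through $U_{\Tilde{V}}(t+i\beta)=U_{\Tilde{V}}(t)\star\tau_t(U_{\Tilde{V}}(i\beta))$ — the identity that singles out the symmetric arguments $r=-i\beta/2$, $s=i\beta/2$.

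For the third bullet, under the assumed convergence of $G_{\Tilde{F}_1,\dots,\Tilde{F}_n}(-i\beta'/2,\dots,i\beta'/2)$ as $\beta'\to\infty$, uniform on compacts of $\mathfrak{I}_n^{\infty}$, Vitali's theorem gives that the limit is analytic on $\mathfrak{I}_n^{\infty}$, bounded and continuous on its closure; positivity and normalization pass to the limit since each $\Tilde{\omega}^{\beta',\Tilde{V}}$ is a state; and the $\beta\to\infty$ degeneration of the KMS boundary condition is exactly the ground-state analyticity/positivity condition \eqref{eq: condground} for $\tau_t^{\Tilde{V}}$. I expect the main obstacle to be making the analytic-continuation step of the second paragraph rigorous at the level of formal power series with distributional coefficients: one needs estimates on the continued free fermionic $n$-point correlators with $\Tilde{K}$-insertions that are uniform on compact subsets of the strip and on bounded sets of smearing functions, and one must verify that the contour deformations of the cocycle time integrals stay compatible with the wave-front-set conditions keeping the $\star$-products defined. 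This is precisely where the bosonization is exploited: with $\Tilde{V}$ and $\Tilde{K}$ of even degree the Grassmann factors are inert, so the causal factorization \eqref{eq: Causal Fact} and the attendant support and covariance bookkeeping carry over verbatim from the scalar case.
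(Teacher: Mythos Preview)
Your proposal is correct and follows the same Fredenhagen--Lindner route as the paper. The paper resolves the ``main obstacle'' you flag by writing out the order-$l$ contribution $G_n^{(l)}$ explicitly via the simplex parametrization $U_{\Tilde{V}}(t)^{-1}\star U_{\Tilde{V}}(s)=\sum_n(i(s-t))^n\int_{\mathcal{S}_n}\di^nu\,\tau_{t+u_1(s-t)}(\Tilde{K})\star\cdots$ and then directly checking that the imaginary parts of the interpolated times $z_i+u_j^{(k)}(z_{i+1}-z_i)$ remain ordered in the strip whenever $(z_1,\dots,z_{n+2})\in\mathfrak{I}_{n+2}^{\beta}$, so no contour deformation or wave-front bookkeeping is needed --- the convex (simplex) structure does the work and one is reduced to the free KMS analyticity. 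Positivity and the KMS boundary identity are handled exactly as you outline, via $U_{\Tilde{V}}(i\beta/2)^*=U_{\Tilde{V}}(-i\beta/2)^{-1}$ and a cyclic reshuffling using the free KMS condition together with the cocycle identity. For the third bullet the paper actually goes a bit beyond the conditional statement: instead of invoking Vitali under the assumed uniform convergence, it verifies that hypothesis by computing the smooth difference $W^{\beta'}=\omega_2^{\beta'}-\omega_2^{\infty}$ and observing that the Fermi factor $(1+e^{\beta'\omega_p})^{-1}$ drives every Wick product to zero uniformly on compacts, beating the polynomial prefactors $(i\beta'/2+t_j)^{m_j}$.
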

\begin{proof}
For the first statement, we need to study the analytic continuation of:
\begin{equation*}
    G_n(r,t_1, \ldots, t_n,s) = \Tilde{\omega}^{\beta}(U_{\Tilde{V}}(r)^{-1} \star \tau_{t_1}^{\Tilde{V}}(\Tilde{F}_1) \star \cdots \star \tau_{t_n}^{\Tilde{V}}(\Tilde{F}_n) \star U_{\Tilde{V}}(s))
\end{equation*}
in the sense of formal power series. Therefore, consider the $l$-th order expansion in $\Tilde{K}$ arising from the presence of both the unitary cocycle and the interacting dynamics. In particular, by the relation between the interacting and the free dynamics:
\begin{align*}
     U_{\Tilde{V}}(t,s) &= U_{\Tilde{V}}(t)^{-1} \star U_{\Tilde{V}}(s)\\
     &= \sum_{n=0}^{\infty} (i(s-t))^n \int_{\mathcal{S}_n} \di^nu \, \tau_{t+ u_1(s-t)}(\Tilde{K}) \star \cdots \star \tau_{t+ u_n(s-t)}(\Tilde{K}),
\end{align*}
where we have introduced the unit simplex:
\begin{equation*}
    \mathcal{S}_n \coloneqq \{ (u_1, \ldots, u_n) \in \mathbb{R}^n : 0 \leq u_1 \leq \ldots \leq u_n \leq 1 \}.
\end{equation*}
Follows that at the $l$-th order we have:
\begin{align*}
    &G_n^{(l)}(r, t_1, \ldots, t_n, s) = \sum_{\mathbf{m} \in \mathbb{N}^{n+2}\, , \, |\mathbf{m}| = l} \int_{\mathcal{S}_{m_1}} \di^{m_1}u^{(1)} \cdots \int_{\mathcal{S}_{m_{n+2}}} \di^{m_{n+2}}u^{(n+2)}(i(t_1 - r))^{m_{1}} \cdots (i(s - t_n))^{m_{n+2}}\\
    &\times \Tilde{\omega}^{\beta}\bigg(\prod_{j=1}^{m_{1}}\bigg[ \tau_{r + u^{(1)}_j(t_1-r)}(\Tilde{K}) \bigg] \star \tau_{t_1}(\Tilde{F_1}) \star \prod_{j=1}^{m_{2}}\bigg[ \tau_{t_1 + u^{(2)}_j(t_2-t_1)}(\Tilde{K}) \bigg] \star \cdots \star \prod_{j=1}^{m_{n+2}}\bigg[ \tau_{t_n + u^{(n+2)}_j(s-t_n)}(\Tilde{K}) \bigg]\bigg)
\end{align*}
where $\mathbf{m}$ is a multiindex of order $l$. Now, consider the complex extension of $(r, t_1, \ldots, t_n, s)$ to $(z_1, \ldots, z_{n+2}) \in \mathfrak{I}_{n+2}^{\beta}$. We want to show that $G_n^{(l)}(z_1, \ldots, z_{n+2})$ is analytic in the interior and continuous on $\mathfrak{I}_{n+2}^{\beta}$. For that purpose, we need to show:
\begin{align*}
    &0 < \big|\Im\big(z_i + u^{(k)}_{j}(z_{i+1} - z_i)\big) - \Im\big(z_i + u^{(k)}_{j+1}(z_{i+1} - z_i)\big) \big|< \beta\\
    &0 < \big|\Im\big(z_i + u^{(k')}_{m_{k}}(z_{i+1} - z_i)\big) - \Im\big(  z_{i+1}\big) \big|< \beta,
\end{align*}
where $i = 1, \ldots, n+1$, $k = m_1, \ldots, m_{n+2}$, $k' = m_1, \ldots, m_{n+1}$. In this way, the analiticity and continuity properties of the restriction of the state on the fermionic algebra, $\omega^{\beta}$, give the same desired properties for $G_n^{(l)}(z_1, \ldots, z_{n+2})$.\\
We start from the first and notice:
\begin{equation*}
    \Im\big(z_i + u^{(k)}_{j}(z_{i+1} - z_i)\big) - \Im\big(z_i + u^{(k)}_{j+1}(z_{i+1} - z_i)\big) = \Im\big(z_i - z_{i+1}\big)(u_{j+1}^{(k)} - u_j^{(k)}).
\end{equation*}
But now, since $(z_1, \ldots, z_{n+2}) \in \mathfrak{I}_{n+2}^{\beta}$ and $(u^{(k)}_{1}, \ldots, u^{(k)}_{m_k}) \in \mathcal{S}_{k}$ the claim follows. The second condition is equivalent to:
\begin{equation*}
    \Im\big(z_i + u^{(k')}_{m_{k}}(z_{i+1} - z_i)\big) - \Im\big(  z_{i+1}\big) = \Im\big( z_i - z_{i+1} \big)(1 - u^{(k')}_{m_{k}}).
\end{equation*}
Again using $(z_1, \ldots, z_{n+2}) \in \mathfrak{I}_{n+2}^{\beta}$ and $(u^{(k)}_{1}, \ldots, u^{(k)}_{m_k}) \in \mathcal{S}_{k}$ the claim follows. Therefore by the continuity and analiticity properties of $\omega^{\beta}$ the claim follows\\
For the second statement, we start setting $r = -i\beta/2$ and $s = i\beta/2$. We want to show that, at any perturbative order $l$ in the expansion:
\begin{equation*}
    G_n^{(l)}(-i\beta/2, t_1, \ldots, t_{k-1}, t_k + i\beta, \ldots, t_n + i\beta, i\beta/2) = G_n^{(l)}(-i\beta/2, t_k, \ldots, t_{n}, t_1, \ldots, t_{k-1}, i\beta/2).
\end{equation*}
Expand the left hand side:
\begin{align*}
    &G_n^{(l)}(-i\beta/2, t_1, \ldots, t_{k-1}, t_k + i\beta, \ldots, t_n + i\beta, i\beta/2)\\
    &= \sum_{\mathbf{m} \in \mathbb{N}^{n+2}\, , \, |\mathbf{m}| = l} \int_{\mathcal{S}_{m_1}} \di^{m_1}u^{(1)} \cdots \int_{\mathcal{S}_{m_{n+2}}} \di^{m_{n+2}}u^{(n+2)}(i(t_1 + i\beta/2))^{m_{1}} \cdots (i(i\beta/2 - t_n))^{m_{n+2}}\\
    &\times \Tilde{\omega}^{\beta}\bigg(\prod_{j=1}^{m_{1}}\bigg[ \tau_{-i\beta/2 + u^{(1)}_j(t_1+i\beta/2)}(\Tilde{K}) \bigg] \star \tau_{t_1}(\Tilde{F_1}) \star \prod_{j=1}^{m_{2}}\bigg[ \tau_{t_1 + u^{(2)}_j(t_2-t_1)}(\Tilde{K}) \bigg] \star \cdots \star \prod_{j=1}^{m_{k}}\bigg[ \tau_{t_{k-1} + u^{(2)}_j(t_k + i \beta -t_{k-1})}(\Tilde{K}) \bigg]\\ &\star \tau_{t_k + i\beta}(\Tilde{F_k}) \star \prod_{j=1}^{m_{k+1}}\bigg[ \tau_{t_{k} + i\beta + u^{(2)}_j(t_{k+1}-t_k)}(\Tilde{K}) \bigg] \star \cdots \star \prod_{j=1}^{m_{n+2}}\bigg[ \tau_{t_n + i\beta + u^{(n+2)}_j(-i\beta/2-t_n)}(\Tilde{K}) \bigg]\bigg)\\
    &= \sum_{\mathbf{m} \in \mathbb{N}^{n+2}\, , \, |\mathbf{m}| = l} \int_{\mathcal{S}_{m_1}} \di^{m_1}u^{(1)} \cdots \int_{\mathcal{S}_{m_{n+2}}} \di^{m_{n+2}}u^{(n+2)}(i(t_1 + i\beta/2))^{m_{1}} \cdots (i(i\beta/2 - t_n))^{m_{n+2}}\\
    &\times \Tilde{\omega}^{\beta}\bigg(\tau_{t_k}(\Tilde{F_k})
    \star \prod_{j=1}^{m_{k+1}}\bigg[ \tau_{t_{k} + u^{(2)}_j(t_{k+1}-t_k)}(\Tilde{K}) \bigg] \star \cdots \star \prod_{j=1}^{m_{n+2}}\bigg[ \tau_{t_n + u^{(n+2)}_j(-i\beta/2-t_n)}(\Tilde{K}) \bigg]\\
    & \star \prod_{j=1}^{m_{1}}\bigg[ \tau_{-i\beta/2 + u^{(1)}_j(t_1+i\beta/2)}(\Tilde{K}) \bigg]
    \star \tau_{t_1}(\Tilde{F_1}) \star \prod_{j=1}^{m_{2}}\bigg[ \tau_{t_1 + u^{(2)}_j(t_2-t_1)}(\Tilde{K}) \bigg] \star \cdots \star \prod_{j=1}^{m_{k}}\bigg[ \tau_{t_{k-1} + u^{(2)}_j(t_k + i \beta -t_{k-1})}(\Tilde{K}) \bigg]\bigg)
\end{align*}
where we have used the definition of $\Tilde{\omega}^{\beta}$ given in Proposition \ref{prop: statiGrass} and the KMS condition for $\omega^{\beta}$ with respect to the free dynamics. But now the last term is nothing but the contribution coming from:
\begin{equation*}
    U_{\Tilde{V}}(t_{k-1})^{-1} \star U_{\Tilde{V}}(t_k + i\beta) = U_{\Tilde{V}}(t_{k-1})^{-1} \star U_{\Tilde{V}}(i\beta/2) \star \tau_{i\beta}\big( U_{\Tilde{V}}(-i\beta/2)^{-1} \star U_{\Tilde{V}}(t_k)  \big).
\end{equation*}
While the two adjacent products of terms invoving just $\Tilde{K}$ comes from the expansion of:
\begin{equation*}
    U_{\Tilde{V}}(t_n)^{-1} \star U_{\Tilde{V}}(-i\beta/2) \star U_{\Tilde{V}}(-i\beta/2)^{-1} \star U_{\Tilde{V}}(t_1) = U_{\Tilde{V}}(t_n)^{-1} \star U_{\Tilde{V}}(t_1)
\end{equation*}
Therefore, using again the KMS condition on the last factor, we get the desired equality and thus the KMS property in the sense of formal power series with respect to the interacting dynamics $\tau_t^{\Tilde{V}}$. Finally, for this to be a state, it remains to be proven the positivity in the sense of formal power series. Taking into account that $U_{\Tilde{V}}(i\beta/2)^* = U_{\Tilde{V}}(-i\beta/2)^{-1}$, we get for any $A \in \mathfrak{A}^{\mathscr{G}_N}$:
\begin{align*}
    \Tilde{\omega}^{\beta}\big(U_{\Tilde{V}}(-i\beta/2)^{-1} \star A^* \star A \star U_{\Tilde{V}}(i\beta/2)\big) &= \Tilde{\omega}^{\beta}\big(U_{\Tilde{V}}(i\beta/2)^* \star A^* \star A \star U_{\Tilde{V}}(i\beta/2)\big)\\
    &= \Tilde{\omega}^{\beta}\big( B^* \star B \big)
\end{align*}
where we have introduced $B = A \star U_{\Tilde{V}}(i\beta/2)$. But now the above is positive in the sense of formal power series by the positivity of the state $\Tilde{\omega}^{\beta}$ (see Proposition \ref{prop: statiGrass}).\\\\
Finally we focus on the construction of the interacting ground state. We need to show that on any compact $\mathcal{K} \subset \mathfrak{I}_n^{\infty}$ and at each perturbative order:
\begin{equation*}
    \sup_{\mathcal{K} \subset \mathfrak{I}_n^{\infty}}\lim_{\beta'\to \infty} \bigg|\left( \Tilde{\omega}^{\beta'} - \Tilde{\omega}^{\infty}\right)\left(U_{\Tilde{V}}\left(-\frac{i\beta'}{2}\right)^{-1} \star \tau_{t_1}^{\Tilde{V}}(\Tilde{F}_1) \star \cdots \star \tau_{t_n}^{\Tilde{V}}(\Tilde{F}_n) \star U_{\Tilde{V}}\left(\frac{i\beta'}{2}\right)\right)\bigg| = 0
\end{equation*}
Namely:
\begin{align*}
    &\sup_{\mathcal{K} \subset \mathfrak{I}_n^{\infty}} \lim_{\beta'\to \infty} \bigg| \big(\Tilde{\omega}^{\beta'} - \Tilde{\omega}^{\infty}\big) \bigg(\sum_{\mathbf{m} \in \mathbb{N}^{n+2}\, , \, |\mathbf{m}| = l} \int_{\mathcal{S}_{m_1}} \di^{m_1}u^{(1)} \cdots \int_{\mathcal{S}_{m_{n+2}}} \di^{m_{n+2}}u^{(n+2)}(i(t_1 + i\beta'/2))^{m_{1}} \cdots (i(i\beta'/2 - t_n))^{m_{n+2}}\\
    &\times \prod_{j=1}^{m_{1}}\bigg[ \tau_{-i\beta'/2 + u^{(1)}_j(t_1+i\beta'/2)}(\Tilde{K}) \bigg] \star \tau_{t_1}(\Tilde{F_1}) \star \prod_{j=1}^{m_{2}}\bigg[ \tau_{t_1 + u^{(2)}_j(t_2-t_1)}(\Tilde{K}) \bigg] \star \cdots \star \prod_{j=1}^{m_{n+2}}\bigg[ \tau_{t_n + u^{(n+2)}_j(-i\beta'/2-t_n)}(\Tilde{K}) \bigg]\bigg) \bigg| = 0.
\end{align*}
However, both $\omega^{\beta'}$ and $\omega^{\infty}$ are Hadamard states. Therefore, since the part of the states acting on the Grassmann variables is universal, the difference of the corresponding two point functions is a smooth function. For our case of interest, fermionic fields on Minkowski spacetime, it has the form:
\begin{align*}
    W^{\beta'}(x-y) &= |\big(\omega^{\beta'}_2 - \omega^{\infty}_2\big)(x-y)|\\
    &= \bigg| \frac{1}{(2\pi)^3} \int \frac{d^3\mathbf{p}}{2 \omega_p} \frac{e^{i \mathbf{p} \cdot (\mathbf{x} - \mathbf{y})}}{1 + e^{\beta' \omega_p}}\big((-\gamma^0 \omega_p - \gamma^i p_i +m)e^{-i\omega_p(t_x - t_y)} +  (\gamma^0 \omega_p - \gamma^i p_i +m)e^{i\omega_p(t_x - t_y)} \big) \bigg|
\end{align*}
Follows that, by Wick's theorem, the above expression is just a product of such smooth function. However, in the limit $\beta' \to \infty$ all of them tend to zero independently on any compact $\mathcal{K}$ due to the presence of the Fermi factors. 
Thus, everything converges to zero as also the additional growths are all polynomial (the terms $(i\beta'/2 + t_1)^{m_1}$). The claim is proven as consequence of the shown analiticity properties of the limiting state.
\end{proof}
\begin{rem}\label{rem: 2}
By the KMS condition, the equilibrium state at finite $\beta$ for the interacting theory $\Tilde{\omega}^{\beta,V}$ can also be written for any $\Tilde{A} \in \mathfrak{A}^{\mathscr{G}_N}(\mathcal{O})$ as:
\begin{equation*}
    \Tilde{\omega}^{\beta,V}(\Tilde{A}) = \frac{\Tilde{\omega}^{\beta}\big(\Tilde{A} \star U_{\Tilde{V}}(i\beta)\big)}{\Tilde{\omega}^{\beta}\big( U_{\Tilde{V}}(i\beta)\big)}.
\end{equation*}
\end{rem}

Moreover, as already pointed out in the proof of the above theorem, discussions regarding properties of states on the full bosonized algebra are equivalently performed focusing only on the part acting on the fermionic algebra by the universality of the action on Grassmann variables. For this reason, keeping it in mind, all the incoming discussions focus only on states on the fermionic algebra.\\

For later purposes, we recall that any state admits an expansion in its connected (truncated) parts \cite{BratelliKishimotoRobinson}. For any $\omega$ on $\mathfrak{A}$ and $A_1, \ldots, A_n \in \mathfrak{A}$:
\begin{equation*}
    \omega(A_1, \ldots, A_n) = \sum_{P \in \mathrm{Part}\{1, \ldots, n\}} \prod_{I \in P} \omega^{\mathcal{T}}\bigg( \bigotimes_{i \in I} A_i \bigg).
\end{equation*}
Here $\omega^{\mathcal{T}}$ denotes the connected correlation functions defined as functionals over the tensor algebra constructed over $\mathfrak{A}$ and $\mathrm{Part}\{1, \ldots, n\}$ the set of all partitions of $\{1, \ldots, n\}$ into non-void subsets.\\
As a consequence, Proposition $3$ in \cite{FredenhagenLindnerKMS_2014}, gives the perturbative expansion of the interacting KMS state in terms of connected correlation functions:
\begin{equation}\label{eq: espansionestato}
    \omega^{\beta,V}(A) = \sum_{n=0}^{\infty} (-1)^n \int_{\mathcal{S}_{n,\beta}} \di^n u \,\, \omega^{\beta,\mathcal{T}} \bigg( \bigotimes_{j: u_j < 0} \tau_{iu_j}(K) \otimes A \otimes \bigotimes_{j: u_j \geq 0}\tau_{iu_j}(K) \bigg)\, ,
\end{equation}
where $\omega^{\beta,V}$ denotes the KMS state for the interacting theory with spatial cutoff function $h$ and:
\begin{equation*}
    \mathcal{S}_{n,\beta} \coloneqq \{ -\beta/2 \leq u_1 \leq \ldots \leq u_n \leq \beta/2 \}
\end{equation*}

\section{Adiabatic Limit}\label{sec: LimAd}
We discuss the removing of the spacetime cutoff: the limit $h \to 1$ for the states constructed in the previous section. Worth to mention is that the adiabatic limit, for fermionic theories, can be successfully taken also in the case of massless theories.\\
The presence of the $U_{\Tilde{V}}(i \beta/2)$ term in the definition of the interacting equilibrium state, makes necessary a systematic discussion of the existence of the limit $h \to 1$. Indeed, the Bogoliubov map itself does not give issues with such a limit. In fact, a state of the interacting theory naively constructed from a free state $\Tilde{\omega}$ via the prescription:
\begin{equation*}
    \Tilde{\omega}^I(\Tilde{A}) = \Tilde{\omega}(R_{\Tilde{V}}(\Tilde{A})) \quad \forall \Tilde{A} \in \mathfrak{A}^{\mathscr{G}_N},
\end{equation*}
has always a convergent $h \to 1$ limit as a consequence of the causal dependence of the Bogoliubov map. Indeed, $R_{\Tilde{V}}(\Tilde{A})$ is supported in the region $J^-(\mathrm{supp}(\Tilde{A})) \cap \mathrm{supp}(\Tilde{V})$ that is always compact as illustrated in Figure \ref{fig:1}.
\begin{figure}[H]
 		\centering
 		\includegraphics[width=1.00\columnwidth]{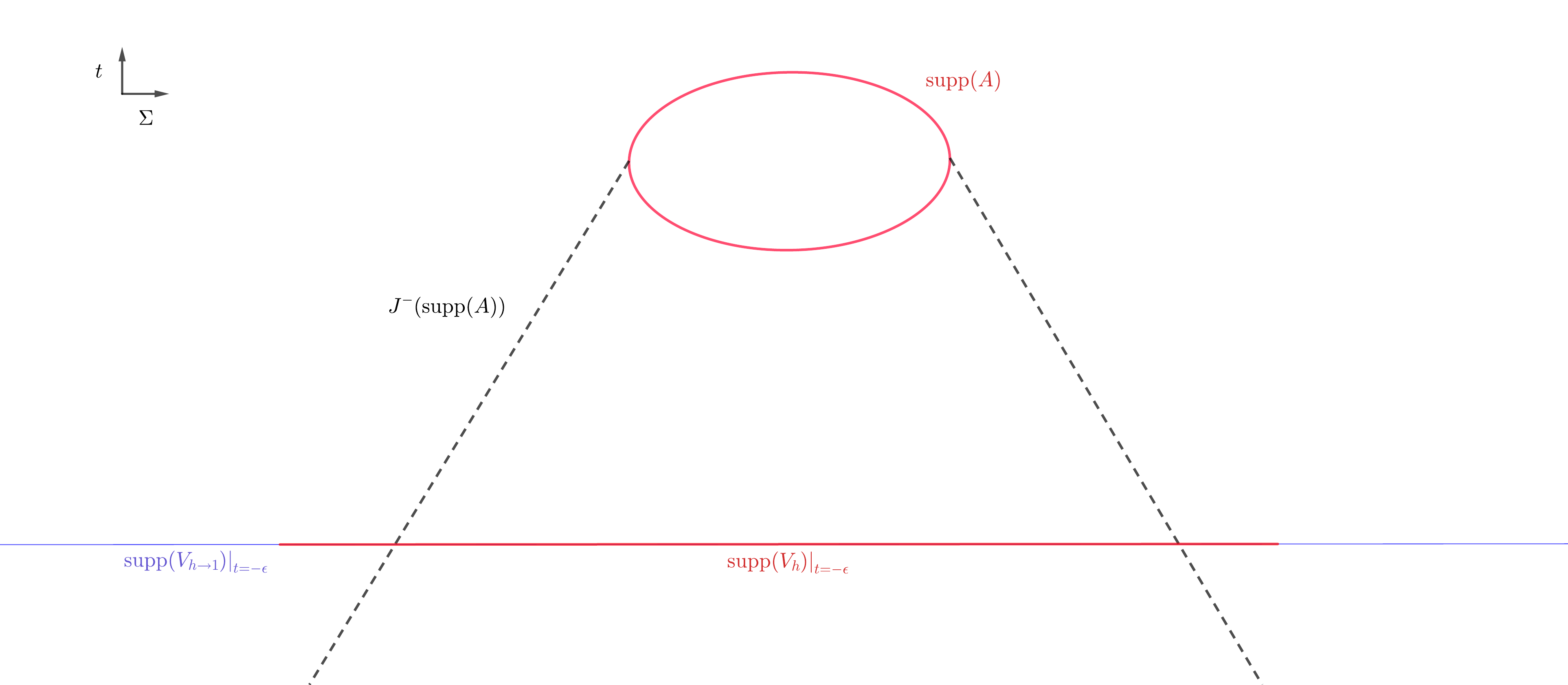}
 		\caption{The support of the interaction is in the future of the $t=-\epsilon$ drawn hypersurface of $\mathrm{supp}(V)$}
   \label{fig:1}
\end{figure}
Therefore, $R_{\Tilde{V}}(\Tilde{A})$ is always of compact support independently from $h$ provided $\Tilde{A} \in \mathfrak{A}^{\mathscr{G}_N}$.\\
In what follows we separate the discussion of KMS and ground states, by starting from the spacetime decay for the corresponding truncated $n$-point functions.

\subsection{KMS States}
\begin{prop}\label{thm: KMS}
Let $\omega^{\beta}$ be a KMS state with respect to $\tau_t$ on a realization of $\mathfrak{A}$, the algebra of smeared Wick polynomials, with translation invariant two point functions $\omega_2^{\beta,+}(x,y)$ and $\omega_2^{\beta,-}(x,y)$. Then, denoting by $\beta \mathcal{S}_n = \{(u_1, \ldots, u_n) \in \mathbb{R}^n : 0 \leq u_1 \leq \ldots \leq u_n \leq \beta \}$ for $0 < \beta < \infty$ and for all $A_0, \ldots, A_n \in \mathfrak{A}(\mathcal{O})$ with $\mathcal{O} \subset B_R \subset \mathbb{R}^4$, the truncated correlation functions:
\begin{equation}\label{eq: correlation}
    F_n^{\beta}(u_1, \mathbf{z}_1; \ldots; u_n, \mathbf{z}_n) = \omega^{\beta,\mathcal{T}}(A_0 \otimes \tau_{iu_1, \mathbf{z}_1} (A_1) \otimes \ldots \otimes \tau_{iu_n, \mathbf{z}_n} (A_n))
\end{equation}
\begin{itemize}
\item Decay exponentially in $\beta \mathcal{S}_n \times \mathbb{R}^{3n}$ if $m > 0$ and $r_e > 2R$:
\begin{equation*}
    |F_n^{\beta}(u_1, \mathbf{z}_1; \ldots; u_n, \mathbf{z}_n)| \leq c_{A_0, \ldots, A_n} e^{-\frac{m}{\sqrt{n}}r_e} \, , \quad r_e = \sqrt{\sum_{i=1}^n |\mathbf{z}_i^2|}.
\end{equation*}
for some $c_{A_0, \ldots, A_n} \in \mathbb{R}$ depending on the chosen observables.
\item Decay polynomially in $\beta \mathcal{S}_n \times \mathbb{R}^{3n}$ if $m = 0$:
\begin{equation*}
    |F_{n}^{\beta}(u_1, \mathbf{z}_1; \ldots; u_n, \mathbf{z}_n)| \leq c_{A_0, \ldots, A_n} \sum_{G \in \mathcal{G}_{n+1}^c} \prod_{l \in E(G)}\frac{1}{\big(1+|\mathbf{z}_{s(l)} - \mathbf{z}_{r(l)} |\big)^3},
\end{equation*}
for some $c_{A_0, \ldots, A_n} \in \mathbb{R}$ depending on the chosen observables. Here $G$ denotes a connected graph, with edges $l$ of source $s(l)$ and range $r(l)$, in the set of connected graphs with $n+1$ vertices.
\end{itemize}
\end{prop}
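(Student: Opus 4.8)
The plan is to follow the strategy of \cite{FredenhagenLindnerKMS_2014} for the scalar field, the essential gain being carried by the better infrared behaviour of the Dirac two-point functions \eqref{eq: 2puntiKMS}. First I would reduce $F_n^\beta$ to a finite sum of Feynman amplitudes. Since each $A_i$ is a finite sum of Wick monomials in the smeared linear fields and $\omega^\beta$ is quasifree, Wick's theorem applied to the product \eqref{eq: DefProd} writes $\omega^\beta(A_0\otimes\tau_{iu_1,\mathbf z_1}(A_1)\otimes\cdots\otimes\tau_{iu_n,\mathbf z_n}(A_n))$ as a sum over pairings of the fields, each pairing contributing a product of factors $\omega_2^{\beta,\pm}$ contracting two fields sitting in two of the $A_i$'s; passing to the truncated functional $\omega^{\beta,\mathcal T}$, via the cumulant (truncated) expansion recalled above, retains exactly those pairings whose graph on the vertex set $\{0,\dots,n\}$ (one vertex per $A_i$) is connected. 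As each $A_i$ has fixed degree, the valence of every vertex, the number of such connected graphs and the number of edges of each are finite and bounded in terms of $n$ and the $A_i$; these combinatorial multiplicities, together with the constants produced below, will be absorbed into $c_{A_0,\dots,A_n}$. For $u\in\beta\mathcal S_n$ the imaginary-time separations $u_i-u_j$ lie in a compact subset of $(-\beta,\beta)$ on which \eqref{eq: 2puntiKMS} admits analytic continuation, and, smeared against the smooth compactly supported test functions entering $A_i$ and $A_j$, each edge yields a smooth bounded function of the relative displacement $\mathbf z_{s(l)}-\mathbf z_{r(l)}$, uniformly in $u$.

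The heart of the proof is then the spatial-decay estimate for $\omega_2^{\beta,\pm}(iu,\mathbf x)$, uniform for $u$ in that compact window. If $m>0$, I would use that the integrand of \eqref{eq: 2puntiKMS} continues analytically in $\mathbf p$ to the tube $|\Im\mathbf p|<m$ — where $\omega_p=\sqrt{\mathbf p^2+m^2}$ is analytic and the Fermi factors $(1+e^{\pm\beta\omega_p})^{-1}$ stay bounded and pole-free for the relevant $u$ — and shift the contour by $i m\,\mathbf x/|\mathbf x|$; after smearing this gives, for every edge, $\le C\,e^{-m|\mathbf z_{s(l)}-\mathbf z_{r(l)}|}$ with $C$ uniform in $u$, the smearing also absorbing the coincidence singularity so that short edges with $|\mathbf z_{s(l)}-\mathbf z_{r(l)}|\le 2R$ are still bounded by $C e^{2mR}e^{-m|\mathbf z_{s(l)}-\mathbf z_{r(l)}|}$. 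If $m=0$, the Matsubara/image representation of \eqref{eq: 2puntiKMS} gives $|\omega_2^{\beta,\pm}(iu,\mathbf x)|\le C(1+|\mathbf x|)^{-3}$ uniformly for $u$ in the window, the power $3$ — one unit better than the scalar case — coming precisely from the additional momentum factors $(\pm\gamma^0\omega_p-\gamma^i p_i)$; smearing keeps each edge $\le C(1+|\mathbf z_{s(l)}-\mathbf z_{r(l)}|)^{-3}$.

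Finally I would assemble the bounds. For $m=0$, multiplying the per-edge estimates over $E(G)$ and summing over $G\in\mathcal G_{n+1}^c$ is exactly the claimed polynomial bound, the $u$-integration over $\beta\mathcal S_n$ only adding the finite factor $\beta^n/n!$. For $m>0$, inside each connected graph $G$ pick a spanning tree $T$; then $\prod_{l\in E(G)} e^{-m|\mathbf z_{s(l)}-\mathbf z_{r(l)}|}\le\prod_{l\in T}e^{-m|\mathbf z_{s(l)}-\mathbf z_{r(l)}|}=e^{-m\sum_{l\in T}|\mathbf z_{s(l)}-\mathbf z_{r(l)}|}$, and since $T$ contains the path joining the vertex $0$ (at the origin) to the vertex $i_\star$ realising $\max_i|\mathbf z_i|$, one has $\sum_{l\in T}|\mathbf z_{s(l)}-\mathbf z_{r(l)}|\ge|\mathbf z_{i_\star}|=\max_i|\mathbf z_i|\ge\big(n^{-1}\sum_i|\mathbf z_i|^2\big)^{1/2}=r_e/\sqrt n$. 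Hence every amplitude is $\le(\mathrm{const}_{n,R})\,e^{-m r_e/\sqrt n}$, and summing the finitely many graphs and integrating in $u$ gives the statement, the constant absorbed into $c_{A_0,\dots,A_n}$; for $r_e\le 2R$ the inequality holds trivially after enlarging the constant, so the hypothesis $r_e>2R$ only makes it non-vacuous.

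The main obstacle is the uniform-in-$u$ control of the analytic continuation and contour deformation of \eqref{eq: 2puntiKMS} in the massive case (and of the image-sum bound in the massless case): one must check that no pole of the Fermi factors is crossed on the shifted momentum contour for $u$ ranging over the entire window dictated by $\beta\mathcal S_n$, and that after smearing against the fixed $B_R$-localised test functions — including any field derivatives that the Wick polynomials $A_i$ may carry — the constants are genuinely independent of $u$. The combinatorial last step is routine; its only subtlety is the elementary inequality $\max_i|\mathbf z_i|\ge r_e/\sqrt n$, which is what produces the $1/\sqrt n$ appearing in the exponent.
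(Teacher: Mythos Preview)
Your overall strategy --- connected-graph expansion of the truncated function, per-edge spatial decay of the analytically continued two-point function, and the connectedness inequality $\sum_{l}|\mathbf z_{s(l)}-\mathbf z_{r(l)}|\ge\max_i|\mathbf z_i|\ge r_e/\sqrt n$ --- is exactly that of the paper. Two technical points, however, are handled differently there and become gaps in your write-up.

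First, it is not true that the imaginary-time separations lie in a compact subset of $(-\beta,\beta)$: on the closed simplex $\beta\mathcal S_n$ one has $u_n-u_0\in[0,\beta]$, and at the endpoint the factor $e^{-\beta\omega_p}e^{\omega_p(u_r-u_s)}$ in the ``wrong-sign'' contributions loses its momentum-space decay. The paper remedies this by a KMS rotation: using the cyclicity $\omega^{\beta,\mathcal T}(A_0\otimes\cdots)=\omega^{\beta,\mathcal T}(A_m\otimes\cdots\otimes\alpha_{i\beta}A_0\otimes\cdots)$ one relabels so that $\beta-u_n\ge\beta/(n+1)$, whence every edge satisfies $u_{r(l)}-u_{s(l)}\le n\beta/(n+1)<\beta$ and a uniform factor $e^{-\omega_p\beta/(n+1)}$ is available. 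Second, and more seriously, your reduction to ``Wick monomials in the smeared linear fields'' with individually smooth test functions covers only regular functionals, whereas the proposition is stated for arbitrary $A_i\in\mathfrak A(\mathcal O)$ --- in particular for the local Wick polynomials $(\overline\psi\psi)^n(f)$ that actually enter the generator $K$. For these the functional derivatives are distributions supported on the thin diagonal, several two-point functions share the \emph{same} integration point, and there is no individually smeared edge to which a contour-shift or Matsubara bound applies. The paper works instead in momentum space: the amplitude for a fixed graph is $\int dP\,\prod_l(\text{edge factors})\,\hat\Psi(-P,P)$, and the role of your smooth smearing is played by the fast decrease of $\hat\Psi$ on $(V^+)^N\cup(V^-)^N$, which follows from the microcausality wavefront-set condition on the $A_i$ (Proposition~\ref{prop: 1}, the fermionic analogue of Proposition~8 of \cite{FredenhagenLindnerKMS_2014}). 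The per-edge decay is then extracted not by shifting the $\mathbf p$-contour into the strip $|\Im\mathbf p|<m$ but by expanding the Fermi factor as a geometric series and applying Lemma~\ref{lem: 1} (a Paley--Wiener estimate for $m>0$, an explicit residue and branch-cut computation for $m=0$) term by term; resumming reproduces $e^{-m|\mathbf z_{s(l)}-\mathbf z_{r(l)}|}$ in the massive case and $(1+|\mathbf z_{s(l)}-\mathbf z_{r(l)}|)^{-3}$ in the massless one, with the exact rate $m$ rather than any $m'<m$.
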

\begin{proof}
The proof is extensively discussed in the Appendix \ref{app: proof1}.
\end{proof}
In order to take the $h \to 1$ limit, by the discussion at the beginning of the section regarding the Bogoliubov map, we can study the adiabatic limit of the interacting state by replacing the generators $K$ given in \eqref{eq: generatore} with:
\begin{align*}
    K' &= - \lambda \int_{\mathbb{M}} \di^4x \, h(\mathbf{x}) \Dot{\chi}(t) R_{V_{h \to 1}} \big(\mathcal{L}_I\big)\\
    &= \lambda \int_{\mathbb{R}^3} \di^3 \mathbf{x} \, h(\mathbf{x}) \tau_{0,\mathbf{x}}(\mathcal{R}),
\end{align*}
where we have called for convenience:
\begin{equation*}
    \mathcal{R} \coloneqq - \int_{\mathbb{R}} \di t \, \Dot{\chi}(t) R_{V_{h \to 1}}(\mathcal{L}_I)
\end{equation*}
and we know this converges due to the compact support of $\dot{\chi}(t)$. From now on we denote by $\omega^{\beta,V'}$ the state with expansion \eqref{eq: espansionestato} where $K$ is replaced by $K'$. Moreover, we define a van Hove sequence of cutoff functions a sequence $(h_n)_{n \in \mathbb{N}}$ of test functions $h_n$ with the property:
\begin{equation*}
    0 \leq h_n(\mathbf{x}) \leq 1 \quad \forall \mathbf{x} \in \mathbb{R}^3 \, , \qquad h_n(\mathbf{x}) = \left\{\begin{aligned}
    &1 \quad |\mathbf{x}| < n\\
    &0 \quad |\mathbf{x}| > n + 1
    \end{aligned}\right. .
\end{equation*}
The following theorem proves the existence of the adiabatic limit for the constructed equilibrium states:
\begin{thm}\label{thm: 1}
Let $\omega^{\beta,V_h'}$ be the interacting KMS state as constructed in Theorem \ref{thm: stati} with respect to interaction of spatially compact support given by $h \in \mathcal{C}^{\infty}_{0}(\Sigma, \mathbb{R})$ and interacting dynamics generated by $K'$. Let $F_n^{\beta}$ be the analytically extended correlation functions as constructed in Equation \eqref{eq: correlation}. Then, for the massive and massless Dirac theory on Minkowski spacetime $F_{n}^{\beta}(u_1, \mathbf{z}_1, \ldots, u_n, \mathbf{z}_n) \in L^1(\beta \mathcal{S}_n \times \mathbb{R}^{3n})$. Moreover, given $\{h_k\}_{k \in \mathbb{N}} \in \mathcal{C}_0^{\infty}(\Sigma, \mathbb{R})$ an arbitrary sequence of van Hove cutoff functions, the adiabatic limit $k \to \infty$ exists at each order in perturbation theory:
\begin{equation*}
    \lim_{k \to \infty} \omega^{\beta, V_{h_k}'} = \omega^{\beta, V_{h \to 1}}.
\end{equation*}
\end{thm}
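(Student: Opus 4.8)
The plan is to work order by order in the coupling $\lambda$, starting from the perturbative expansion \eqref{eq: espansionestato} of the interacting KMS state with the generator $K$ of \eqref{eq: generatore} replaced by $K'$. This replacement is legitimate by the observation at the beginning of this section: the Bogoliubov map has $h$‑independent compact support, so the interacting state depends on $K$ only through the compactly supported $\mathcal{R}$ — indeed $\mathcal{R}$ is built from $R_{V_{h\to1}}(\mathcal{L}_I)$ weighted by $\dot\chi$, hence localized in time in $\mathrm{supp}\,\dot\chi$ and in space within the intersection of a backward light cone with $\mathrm{supp}(V_{h\to1})\subset\{t>-2\epsilon\}$, a bounded truncated cone. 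Since the Grassmann part of every state is universal (Proposition \ref{prop: statiGrass} and the remark after Theorem \ref{thm: stati}), I would argue entirely on the fermionic algebra with $\omega^\beta$. Inserting $\tau_{iu_j}(K')=\lambda\int_{\mathbb{R}^3}\di^3\mathbf{x}_j\,h(\mathbf{x}_j)\,\tau_{iu_j,\mathbf{x}_j}(\mathcal{R})$ into \eqref{eq: espansionestato} and expanding $\mathcal{R}$ (itself a power series in $\lambda$), each order in $\lambda$ becomes a finite linear combination of expressions of the form
\begin{equation*}
  c\int_{\beta\mathcal{S}_n}\di^n u\int_{\mathbb{R}^{3n}}\di^{3n}\mathbf{x}\;\Big(\prod_{j=1}^n h(\mathbf{x}_j)\Big)F_n^\beta(u_1,\mathbf{x}_1;\ldots;u_n,\mathbf{x}_n),\qquad c\in\mathbb{R},
\end{equation*}
where $F_n^\beta$ is a truncated correlation function as in \eqref{eq: correlation} with $A_0=A$ and $A_1=\cdots=A_n=\mathcal{R}$ (with $A$ possibly placed among rather than before the $\mathcal{R}$'s, and the $u$-domain brought to $\beta\mathcal{S}_n$ after the KMS rearrangement; neither affects the estimates); as all observables then lie in a common ball $B_R$, Proposition \ref{thm: KMS} applies.

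To establish $F_n^\beta\in L^1(\beta\mathcal{S}_n\times\mathbb{R}^{3n})$ I would use that the $u$-variables run over the compact simplex $\beta\mathcal{S}_n$ and that the bounds of Proposition \ref{thm: KMS} are uniform there, so only $\mathbf{x}$-integrability is in question. In the massive case the bound $c_{A_0,\ldots,A_n}\,e^{-\frac{m}{\sqrt n}|\mathbf{X}|}$, with $\mathbf{X}=(\mathbf{x}_1,\ldots,\mathbf{x}_n)\in\mathbb{R}^{3n}$, is manifestly integrable on $\mathbb{R}^{3n}$ (pass to a radial variable). In the massless case I would use the graph bound: for each connected graph $G$ on the $n+1$ vertices $\{0,\ldots,n\}$, fix a spanning tree $T\subseteq G$, estimate every non-tree edge-factor by $1$, and change variables from $(\mathbf{x}_1,\ldots,\mathbf{x}_n)$ — keeping $\mathbf{x}_0$ fixed at the localization of $A$ — to the tree differences $\mathbf{w}_l=\mathbf{x}_{s(l)}-\mathbf{x}_{r(l)}$, $l\in E(T)$, a unimodular linear bijection of $\mathbb{R}^{3n}$, so that the integral factorizes over $E(T)$ and is controlled by the polynomial decay of Proposition \ref{thm: KMS}. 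This is the delicate point, and I expect it to be the main obstacle: the decay rate in Proposition \ref{thm: KMS} is exactly what the additional derivatives in the fermionic two-point functions \eqref{eq: 2puntiKMS} and the Fermi factor buy over the scalar case (where the bare massless two-point function only decays like $(1+|\mathbf{z}|)^{-1}$, forcing a thermal-mass argument), and the sum over $G\in\mathcal{G}_{n+1}^c$ must be organized — via the tree reduction together with the surplus decay carried by the non-leaf vertices — so as to stay finite and integrable; this is precisely the step that makes the massless Dirac case go through here without any thermal mass.

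Granting $F_n^\beta\in L^1$, the adiabatic limit would follow by dominated convergence, order by order. For a van Hove sequence $\{h_k\}$ one has $0\le h_k\le 1$ and $h_k\to 1$ pointwise, so $\prod_j h_k(\mathbf{x}_j)\to 1$ pointwise on $\mathbb{R}^{3n}$ while $\big|\prod_j h_k(\mathbf{x}_j)\,F_n^\beta\big|\le|F_n^\beta|\in L^1(\beta\mathcal{S}_n\times\mathbb{R}^{3n})$ uniformly in $k$. Hence each contribution of the displayed form converges, as $k\to\infty$, to the same expression with $\prod_j h_k$ replaced by $1$; reassembling the formal power series identifies the limit with a well-defined functional $\omega^{\beta,V_{h\to1}}$ (which, being an order-by-order limit of states, retains normalization and positivity in the sense of formal power series), and its independence of the chosen van Hove sequence is automatic from the same estimate. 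Everything besides the massless integrability bound is bookkeeping: the reduction to $\omega^\beta$, the passage $K\to K'$, compactness of $\beta\mathcal{S}_n$, and the $u$-uniformity of the Proposition \ref{thm: KMS} bounds.
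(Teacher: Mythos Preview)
Your treatment of the massive case and the dominated-convergence reduction is essentially the paper's argument; the only minor omission is that the exponential bound in Proposition~\ref{thm: KMS} is stated only for $r_e>2R$, so one should split the $\mathbf{x}$-integration at $r_e=2R$ and use smoothness on the compact inner ball, as the paper does.

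The massless case, however, has a genuine gap. Your spanning-tree reduction --- bound every non-tree edge factor by $1$, pass to tree-difference variables $\mathbf{w}_l$, and factorize --- produces
\[
\prod_{l\in E(T)}\int_{\mathbb{R}^3}\frac{\di^3\mathbf{w}_l}{(1+|\mathbf{w}_l|)^3},
\]
and each factor diverges: $\int_0^\infty r^2(1+r)^{-3}\,\di r\sim\int^\infty r^{-1}\,\di r=\infty$. The decay exponent $3$ supplied by Proposition~\ref{thm: KMS} is \emph{exactly} borderline in $\mathbb{R}^3$, so discarding the non-tree edges is fatal. Your parenthetical about ``surplus decay carried by the non-leaf vertices'' cannot repair this once those factors have been replaced by $1$.

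What the paper does instead is keep the extra edges and integrate one spatial variable at a time using Lemma~\ref{lem: 2}: a convolution of two $(1+|\cdot|)^{-3}$ kernels is bounded by $C(1+|\cdot|)^{-(3-\epsilon)}$, and multiplying by the remaining $(1+|\cdot|)^{-3}$ attached to the same vertex restores integrability. Iterating over the vertices of the connected graph --- which always has at least one edge beyond a spanning tree because every vertex coming from $\mathcal{R}$ or from $A$ has even degree $\ge 2$, so the minimal connected graph is a cycle, not a tree --- one peels off the $\mathbf{x}_j$ one by one while losing only an $\epsilon$ of decay per step. That is the mechanism that replaces the thermal-mass argument here; your tree reduction bypasses it and therefore fails.
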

\begin{proof}
By the expansion in terms of connected correlation functions of the state $\omega^{\beta, V_{h_k}'}$, taking into account Remark \ref{rem: 2}, we have for any $A \in \mathfrak{A}$:
\begin{equation*}
    \omega^{\beta, V_{h_k}'}\big(R_{V_{h_k}}(A)\big) = \sum_{n=0}^{\infty} (-\lambda)^n \int_{\beta \mathcal{S}_n} \di^n u \int_{\mathbb{R}^{3n}} \di^{3n} \mathbf{x} \, h_k(\mathbf{x}_1) \ldots h_k(\mathbf{x}_n) \, \omega^{\beta, \mathcal{T}}\big( R_{V_{h_k}}(A) \otimes \tau_{iu_1, \mathbf{x}_1}(\mathcal{R}) \otimes \cdots \otimes \tau_{iu_n, \mathbf{x}_n}(\mathcal{R})\big).
\end{equation*}
The arguments of the above connected correlation function are formal power series in $\mathfrak{A}$, so just terms of the form $F_n^{\beta}$ appear. Therefore, as $k \to \infty$, the eventual integrability of $F_n^{\beta}$ ensures the existence of the adiabatic limit by dominated convergence theorem.\\
For the massive case, by the finite Lebesgue measure of $\beta \mathcal{S}_n$, we just have to focus on the integral over $\mathbb{R}^{3n}$. Let:
\begin{equation*}
    r_e = \sqrt{\sum_{i=1}^n |\mathbf{x}_i^2|},
\end{equation*}
and split the domain of integration in the two regions $r_e \leq 2 R$ and $r_e > 2R$. The integral over the first region is finite by the compactness of the region together with the smoothness of the integrand. For what concerns the second region of integration, convergence is ensured by the exponential decay of connected functions $F_n^{\beta}$ proven in Proposition \ref{thm: KMS}.\\
Also for the massless case, the finite Lebesgue measure of $\beta \mathcal{S}_n$ allows to focus just on the integrability on $\mathbb{R}^{3n}$. In this case, the result proven in Proposition \ref{thm: KMS} is uniform on $\mathbb{R}^{3n}$. Therefore, iteratively using Lemma \ref{lem: 2} taking into account that each graph $G$ is connected with a vertex at $(0,\mathbf{0})$, the integrability of $F_n^{\beta}$ is proven leading to the existence of the adiabatic limit.
\end{proof}

The just constructed equilibrium states in the adiabatic limit are independent from the choice of $\chi$ describing the way in which the interaction is switched on. A different choice $\chi'$ of the kind \eqref{eq:accad} with possibly also a different $\epsilon' > 0$ leads to the same equilibrium state $\omega^{\beta, V}$. The proof of this fact is not reported here as, with the above proven results, it will consist in the exact same steps as the one given in Proposition $5.3.$ of \cite{DragoHackPin}\\

\subsection{Ground State}
\begin{prop}\label{thm: ground}
Let $\omega^{\infty}$ be the ground state with respect to $\tau_t$ on $\mathfrak{A}$, the algebra of smeared Wick polynomials, with translation invariant two point functions $\omega^{\infty,+}_2(x,y)$ and $\omega^{\infty,-}_2(x,y)$. Then, denoting by $\mathcal{S}_n^{\infty} = \{ (u_1,\ldots, u_n) \in \mathbb{R}^n: -\infty < u_1 \leq \ldots \leq u_n \leq \infty \}$ and for all $A_0, \ldots, A_n \in \mathfrak{A}(\mathcal{O})$ with $\mathcal{O} \subset B_R \subset \mathbb{R}^4$, the connected correlation functions:
\begin{equation}\label{eq: connectedground}
    F_{n}^{\infty}(u_1, \mathbf{z}_1; \ldots; u_n, \mathbf{z}_n) = \omega^{\infty, \mathcal{T}}(\alpha_{iu_1, \mathbf{z}_1} (A_1) \otimes \ldots \otimes A_0 \otimes \ldots \otimes \alpha_{iu_n, \mathbf{z}_n} (A_n))
\end{equation}
\begin{itemize}
    \item Decay exponentially in $\mathcal{S}_n^{\infty} \times \mathbb{R}^{3n}$ if $m > 0$ and $r_g > 2R$:
    \begin{equation*}
        |F_n^{\infty}(u_1, \mathbf{z}_1; \ldots; u_n, \mathbf{z}_n)| \leq c_{A_0, \ldots, A_n} e^{-\frac{m}{\sqrt{n}}r_g} \, , \quad r_g = \sqrt{\sum_{i=1}^n u_i^2 + |\mathbf{z}_i^2|},
    \end{equation*}
    for some $c_{A_0, \ldots, A_n} \in \mathbb{R}$ depending on the chosen observables.
    \item Decay in $\mathcal{S}_n^{\infty} \times \mathbb{R}^{3n}$ if $m = 0$:
    \begin{equation*}
        |F_n^{\infty}(u_1, \mathbf{z}_1; \ldots; u_n, \mathbf{z}_n)| \leq c_{A_0, \ldots, A_n} \sum_{G \in \mathcal{G}^{c}_{n+1}}\prod_{l \in E(G)}\frac{1}{\bigg(1 + \sqrt{(u_{s(l)} - u_{r(l)})^2 + | \mathbf{z}_{s(l)} - \mathbf{z}_{r(l)} |^2}\bigg)^{3}}
    \end{equation*}
    for some $c_{A_0, \ldots, A_n} \in \mathbb{R}$ depending on the chosen observables. Here $G$ denotes a graph, with edges $l$ of source $s(l)$ and range $r(l)$ in the set of connected graphs with $n+1$ vertices.
\end{itemize}
\end{prop}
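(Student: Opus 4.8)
The plan is to run the argument used for Proposition \ref{thm: KMS} in the Appendix, simplified by the fact that the ground-state two-point functions \eqref{eq: 2puntiGround} carry no Fermi factors and are supported, in momentum space, on the forward (resp.\ backward) mass shell. Consequently the analytic continuation producing \eqref{eq: connectedground} is governed by the spectrum condition, and — unlike the finite-temperature case, where $r_e$ involves only the spatial separations — the decay now holds in the Euclidean time direction as well, which is precisely why the massive bound is stated in terms of the full four-dimensional radius $r_g$.

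First I would reduce everything to a single contraction. Since $\omega^{\infty}$ is quasi-free, Wick's theorem rewrites the truncated function $F_n^{\infty}$, evaluated on the Wick-polynomial observables $A_0,\ldots,A_n\in\mathfrak{A}(\mathcal{O})$, as a finite sum over connected graphs $G$ on the vertex set $\{0,1,\ldots,n\}$: each edge $l\in E(G)$ carries one (possibly differentiated) two-point function $\omega_2^{\infty,\pm}$ smeared against the test functions sitting at the two vertices it joins, and truncation discards every disconnected term. Writing $w_0=(0,\mathbf{0})$ for the vertex of the untranslated $A_0$ and $w_i=(u_i,\mathbf{z}_i)\in\mathbb{R}^4$, with $u_i$ read as a Euclidean time, for $i\geq 1$, it then suffices to bound each smeared edge factor uniformly on $\mathcal{S}_n^{\infty}\times\mathbb{R}^{3n}$ and to count graphs.

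Next I would establish the Euclidean bound for one edge. Because the kernels in \eqref{eq: 2puntiGround} live on the mass shell, $t\mapsto\omega_2^{\infty,\pm}(t-iu,\mathbf{z})$ extends analytically for $u$ of the appropriate sign; smearing against test functions supported in balls of radius $R$, whose Fourier transforms decay rapidly and hence tame the ultraviolet (this is what makes the constants observable-dependent), I expect: (i) for $m>0$, using $\omega_p\geq m$ and absorbing the polynomial prefactors produced by the $\gamma^0\omega_p$, $\gamma^i p_i$ and $m$ numerators, a bound $\leq C\,e^{-m\,d(l)}$ with $d(l)=|w_{s(l)}-w_{r(l)}|$ the four-dimensional Euclidean distance of the vertex centres; (ii) for $m=0$, recognising the Wick-rotated massless Dirac kernel as, up to smearing, a first derivative of the $1/d^2$ fundamental solution of the Euclidean Laplacian, a bound $\leq C(1+d(l))^{-3}$ with $d(l)=\sqrt{(u_{s(l)}-u_{r(l)})^2+|\mathbf{z}_{s(l)}-\mathbf{z}_{r(l)}|^2}$.

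Finally I would assemble the estimate. In the massless case, inserting (ii) into the graph sum yields the stated bound directly, the harmless shift of $d(l)$ by the bounded smearing radius being absorbed into $c_{A_0,\ldots,A_n}$. In the massive case I would split the domain into $r_g\leq 2R$, where the integrand is a bounded smooth function and the bound is trivial, and $r_g>2R$, where inserting (i) gives $|F_n^{\infty}|\leq\sum_{G\,\mathrm{conn.}}C^{|E(G)|}e^{-mL(G)}$ with $L(G)=\sum_{l\in E(G)}d(l)$; since $G$ is connected, each $i$ is joined to $0$ by a path in $G$, so $|w_i|\leq L(G)+R$, hence $r_g^2=\sum_{i=1}^{n}|w_i|^2\leq n(L(G)+R)^2$ and $e^{-mL(G)}\leq e^{mR}e^{-(m/\sqrt{n})r_g}$, with the $R$-dependent factors and the finitely many connected-graph contributions absorbed into $c_{A_0,\ldots,A_n}$. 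The hard part will be step (i)--(ii): controlling the analytic continuation uniformly in the unbounded parameters $u_i$ while retaining the full rate $m$, which forces careful treatment of the matrix and derivative structure of the Dirac kernel and is exactly where the smearing of the observables must be used; this is the content of the Appendix computation, carried out in parallel with the KMS case.
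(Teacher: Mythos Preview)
Your proposal is correct and follows essentially the same route as the paper's proof: expand $F_n^\infty$ as a sum over connected graphs, reduce to a single-edge momentum-space estimate (the paper packages this as Lemma~\ref{lem: 1}, fed by the fast-decay statement Proposition~\ref{prop: 1} for $\hat\Psi$), and then use connectedness to convert the edge-wise bound into the stated $r_g$-bound via $\max_i|w_i|\geq r_g/\sqrt{n}$. Your heuristic for the massless $(1+d)^{-3}$ decay as a derivative of the Euclidean $1/d^2$ kernel is morally right, though the paper obtains it by a contour-integration argument with branch cuts rather than by direct identification; and the rapid decay you invoke for the smeared observables is, in the paper, the microcausal wavefront-set statement of Proposition~\ref{prop: 1} rather than a test-function argument, but this is precisely the ``Appendix computation'' you already flag as the hard part.
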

\begin{proof}
The proof is extensively discussed in Appendix \ref{app: proof2}
\end{proof}
As we did for the case at finite inverse temperature, adopting the same notation, using the expansion \eqref{eq: espansionestato} for the interacting state we can take the limit $h \to 1$ removing the assumption on the spatial localization. 
\begin{thm}\label{thm: 2}
Let $\omega^{\infty,V_h'}$ be the interacting ground state as constructed in Theorem \ref{thm: stati} with respect to interaction of spatially compact support given by $h \in \mathcal{C}^{\infty}_{0}(\Sigma, \mathbb{R})$ and interacting dynamics generated by $K'$. Let $F_n^{\infty}$ be the analytically extended correlation functions as constructed in Equation \eqref{eq: connectedground}. Then, for the massive Dirac theory on Minkowski spacetime $F_{n}^{\infty}(u_1, \mathbf{z}_1, \ldots, u_n, \mathbf{z}_n) \in L^1(\mathcal{S}_n^{\infty} \times \mathbb{R}^{3n})$. Moreover, given $\{h_k\}_{k \in \mathbb{N}} \in \mathcal{C}_0^{\infty}(\Sigma, \mathbb{R})$ an arbitrary sequence of van Hove cutoff functions, the adiabatic limit $k \to \infty$ exists at each order in perturbation theory:
\begin{equation*}
    \lim_{k \to \infty} \omega^{\infty, V_{h_k}'} = \omega^{\infty, V_{h \to 1}}.
\end{equation*}
\end{thm}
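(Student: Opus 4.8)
The plan is to follow the proof of Theorem~\ref{thm: 1} line by line, the single new feature being that the imaginary-time domain $\mathcal{S}_n^\infty$ now has infinite Lebesgue measure, so that integrability in the spatial variables alone is no longer enough: one must exploit the \emph{joint} decay of $F_n^\infty$ in the $u_i$ and the $\mathbf{z}_i$ furnished by Proposition~\ref{thm: ground}.

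First I would proceed exactly as in the finite-temperature case. By the causal-support property of the Bogoliubov map recalled at the beginning of Section~\ref{sec: LimAd} (cf. Figure~\ref{fig:1}), $R_{V_{h_k}}(A)$ is supported in a fixed compact region whose size depends on $A$ and on $\mathrm{supp}\,\dot{\chi}$ but not on $k$, so the generator $K$ may be replaced by $K' = \lambda\int_{\mathbb{R}^3}\di^3\mathbf{x}\,h(\mathbf{x})\,\tau_{0,\mathbf{x}}(\mathcal{R})$ and the expansion \eqref{eq: espansionestato}, in the limit $\beta\to\infty$, gives for any $A\in\mathfrak{A}$ and any van Hove cutoff $h_k$
\begin{equation*}
    \omega^{\infty, V_{h_k}'}\big(R_{V_{h_k}}(A)\big) = \sum_{n=0}^{\infty}(-\lambda)^n\int_{\mathcal{S}_n^{\infty}}\di^n u\int_{\mathbb{R}^{3n}}\di^{3n}\mathbf{x}\;h_k(\mathbf{x}_1)\cdots h_k(\mathbf{x}_n)\,\omega^{\infty,\mathcal{T}}\big(\,\cdots\otimes R_{V_{h_k}}(A)\otimes\cdots\,\big),
\end{equation*}
where $R_{V_{h_k}}(A)$ is inserted among the $\tau_{iu_j,\mathbf{x}_j}(\mathcal{R})$ according to the sign of the $u_j$ as in \eqref{eq: espansionestato}. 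All arguments being formal power series localized in a $k$-independent ball $B_R$, only terms of the form $F_n^\infty$ from \eqref{eq: connectedground} (with $A_0 = R_{V_{h_k}}(A)$) appear; since $0\le h_k\le1$ the function $|F_n^\infty|$ dominates the integrand uniformly in $k$, so dominated convergence reduces the existence of $\lim_{k\to\infty}$ at each fixed order $n$ to the assertion $F_n^\infty\in L^1(\mathcal{S}_n^\infty\times\mathbb{R}^{3n})$.

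For the massive theory I would establish this $L^1$ bound by splitting $\mathcal{S}_n^\infty\times\mathbb{R}^{3n}$ according to the value of $r_g = \sqrt{\sum_{i=1}^n(u_i^2+|\mathbf{z}_i|^2)}$, the Euclidean norm on $\mathbb{R}^{4n}$. On the bounded region $\{r_g\le 2R\}$ the integral of $F_n^\infty$ is finite by compactness together with the smoothness of the integrand, exactly as in Theorem~\ref{thm: 1}. On $\{r_g>2R\}$ Proposition~\ref{thm: ground} provides $|F_n^\infty|\le c_{A_0,\dots,A_n}\,e^{-\frac{m}{\sqrt n}r_g}$, and since $e^{-\frac{m}{\sqrt n}|\,\cdot\,|}\in L^1(\mathbb{R}^{4n})$ the integral over this region converges too; the decisive point — and the place where the finite-$\beta$ argument has to be modified — is that here the exponential also controls the now unbounded integration over the $u_i$, whereas at finite $\beta$ the simplex $\beta\mathcal{S}_n$ already had finite measure. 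Adding the two contributions yields $F_n^\infty\in L^1$, hence $\lim_{k\to\infty}\omega^{\infty,V_{h_k}'} = \omega^{\infty,V_{h\to1}}$ order by order.

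The hard part, and the reason the statement is restricted to $m>0$, is precisely this competition between the unbounded imaginary-time domain and the strength of the decay: for the massless ground state Proposition~\ref{thm: ground} gives only the polynomial bound $\prod_{l\in E(G)}\big(1+\sqrt{(u_{s(l)}-u_{r(l)})^2+|\mathbf{z}_{s(l)}-\mathbf{z}_{r(l)}|^2}\big)^{-3}$, and iterating a convolution estimate of the type of Lemma~\ref{lem: 2} now in $\mathbb{R}^4$ rather than $\mathbb{R}^3$ (as in the massless KMS case) does not close — already a single $\mathbb{R}^4$-convolution of two factors $(1+|\,\cdot\,|)^{-3}$ produces a non-integrable $(1+|\,\cdot\,|)^{-2}$ tail — so this strategy does not deliver a massless adiabatic limit at zero temperature, consistently with the statement of the theorem.
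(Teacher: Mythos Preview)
Your proposal is correct and follows essentially the same route as the paper's proof: expand the interacting ground state via \eqref{eq: espansionestato}, reduce the adiabatic limit by dominated convergence to the $L^1$ property of $F_n^\infty$, and establish the latter by splitting $\mathbb{R}^{4n}$ at $r_g=2R$ and invoking Proposition~\ref{thm: ground} on the exterior. Your added commentary on why the massless case fails at $\beta=\infty$ is consistent with the remark the paper places immediately after the theorem.
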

\begin{proof}
By the expansion in connected correlation functions of the state $\omega^{\infty, V_{h_k}'}$ we know that for any $A \in \mathfrak{A}$:
\begin{align}
    &\omega^{\infty, V_{h_k}'}\big(R_{V_{h_k}}(A)\big) \nonumber\\ 
    &=\sum_{n=0}^{\infty} (-\lambda)^n \int_{\mathcal{S}_{n,\infty}} \di^n u \int_{\mathbb{R}^{3n}} \di^{3n}\mathbf{x} \, h_k(\mathbf{x}_1) \cdots h_k(\mathbf{x}_n) \, \omega^{\infty,\mathcal{T}} \bigg( \bigotimes_{j: u_j < 0} \tau_{iu_j, \mathbf{x}_j}(\mathcal{R}) \otimes R_{V_{h_k}}(A) \otimes \bigotimes_{j: u_j \geq 0}\tau_{iu_j, \mathbf{x}_j}(\mathcal{R}) \bigg), \label{eq: groundAd}
\end{align}
where $\mathcal{S}_{n, \infty} \coloneqq \{ (u_1, \ldots, u_n) \in \mathbb{R}^n : -\infty \leq u_1 \leq \ldots \leq u_n \leq \infty \}$. Being the arguments of the truncated correlation functions formal power series in $\mathfrak{A}$, at each fixed perturbative order the integrand is made out of terms of the form $F_n^{\infty}$. Therefore, if such functions are proven to be absolutely integrable the limit $k \to \infty$ exists finite. First of all, by positivity, the integration region is extended to $\mathbb{R}^{4n}$.\\
For the massive case, call:
\begin{equation*}
    r_g = \sqrt{\sum_{i = 1}^{n} u_i^2 + |\mathbf{x}_i|^2}.
\end{equation*}
Split the domain of integration in the regions $r_g \leq 2R$ and $r_g > 2R$. The integral over the first region is finite by the compactness of the region together with the smoothness of the integrand. For the second region of integration, the exponential decay in Proposition \ref{thm: ground} ensures the convergence.
\end{proof}
\begin{rem}
For the the massless Dirac theory on Minkowski spacetime the same result holds, provided we restrict to a certain, quite general, subclass of interacting theories. Namely, given $P: \mathbb{R}^3 \to \mathbb{R}$ so that $V$ in Equation \eqref{eq: formaV} is a microcausal functional, the adiabatic limit exists for interacting models with $\mathcal{L}_I = (\overline{\psi} \cancel{P}(\mathbf{x}) \psi)^n$ if $n > 1$ or, if $n=1$ provided that $|P(\mathbf{x})| \leq C(1 + |\mathbf{x}|)^{-1}$ for some $C \in \mathbb{R}^+$.\\
The proof is equivalent to the above using the decay for the massless case shown in Proposition \ref{thm: ground} and the argument of Lemma \ref{lem: 2}. In particular, for $n>1$ the number of lines between different vertices of the diagrams, ensures that $F_{n}^{\infty}(u_1, \mathbf{z}_1; \ldots; u_n, \mathbf{z}_n) \in L^1(\mathbb{R}^{4n})$ while, for $n=1$, the assumption on $P(\mathbf{x})$ allows to reduce to the same argument as in the proof of Theorem \ref{thm: 1}.
\end{rem}
The state $\omega^{\infty, V_{h \to 1}}$ is again independent from the choice of $\chi$, as it is constructed as the $\beta \to \infty$ limit of a state that is independent from this choice.

\section{Linear Response and Debye Scattering Length}\label{sec: DebyeLeng}
The above constructed states are applied in this section to the study of the expectation value of the conserved current of the Dirac field on a thermal equilibrium state and to solutions of the semiclassical Maxwell equations with a quantized source. The interacting theory that is studied is given by the QED-coupling:
\begin{equation}\label{eq: couplQED}
    V = \lambda e \int_{\mathbb{M}} \di^4x \, h(\mathbf{x}) \chi(t) A_{\mu}(x) \overline{\psi}(x) \gamma^{\mu} \psi(x)
\end{equation}
where $A_{\mu}$ is a classical electromagnetic potential such that for any fixed $\mu = 0,1,2,3$ has finite uniform norm $\| A_{\mu} \|_{\infty} < +\infty$ and is such that $WF(\dot{\chi} A_{\mu}) \subset \overline{V}^+ \cup \overline{V}^-$ for any component $\mu$. Moreover, $\chi \in \mathcal{C}^{\infty}(\mathbb{R}, \mathbb{R})$ is the switch-on function as in \eqref{eq:accad}, and $h \in \mathcal{C}^{\infty}_0(\Sigma, \mathbb{R})$ the space cutoff. We will further assume that the electromagnetic potential is stationary $A_{\mu}(x) = A_{\mu}(\mathbf{x})$. The study of charged Fermi fields in the presence of external electromagnetic potential is reviewed in \cite{IeidesGrotch,FedotovIlderton}. In the framework of AQFT charged Fermi fields in external electromagnetic fields have been been studied in various works \cite{AbramBrunetti, SchlemmerZahn, FrobZahn}.\\

We aim at finding solutions, for the gauge potential $A^{\mu}(\mathbf{x})$, of the following system of semiclassical PDEs:
\begin{equation}\label{eq: Semiclass}
     \Delta A^\mu(\mathbf{x}) = -\left\langle j^{\mu}_{\mathrm{q}}(\mathbf{x}) \right\rangle_{\beta,V} - j^{\mu}_{\mathrm{class}}(\mathbf{x}).
\end{equation}
Here, $j^{\mu}_{\mathrm{class}} \in \mathcal{C}_0^{\infty}(\mathbb{M}, T\mathbb{M})$ is the classical source of the existing classical background electromagnetic field, while $\left\langle j^{\mu}_{\mathrm{q}} \right\rangle_{\beta,V}$ is the quantum Dirac field source, at thermal equilibrium and in interaction with the background field via \eqref{eq: couplQED}, defined as:
\begin{equation*}
    \left\langle j^{\mu}_{\mathrm{q}}(\mathbf{x}) \right\rangle_{\beta,V} = \omega^{\beta, V}\big( R_V(j^{\mu}(\mathbf{x}))\big).
\end{equation*}
Finally, we have implicitly assumed that the gauge fields satisfy the Lorenz gauge condition: $\partial_{\mu} A^{\mu} = 0$. In particular, we remark that by the above construction of $\omega^{\beta,V}$, the solutions of \eqref{eq: Semiclass} are independent from the switching $\chi$ and can be determined in the limit $h \to 1$. Moreover, by the perturbative definition of $\omega^{\beta,V}$, the right hand side of the above equation can be computed at the desired perturbative order in the coupling $\lambda$. Here, and in the following, for the sake of simplicity and for its interest in linear response theory \cite{LeBellac, KaputsaGale} we will consider just the first order.\\\\

The starting point is the expansion at first order of both the Bogoliubov map and the unitary cocycle $U_V(i \beta)$. The explicit computation is omitted here, as the strategy is essentially the same as in the proof of Proposition $4.7$ of \cite{GalandaSangalettiPin} to which we refer the interested reader. The results are for $\mu = 0$:
\begin{equation*}
    \omega^{\beta,V,(1)}\big(R_V(j^0(\mathbf{x}))\big) = I^0(\mathbf{x}) + J^0(\mathbf{x}),
\end{equation*}
and for $\mu = k \in \{ 1,2,3 \}$:
\begin{equation*}
     \omega^{\beta,V,(1)}\left( R_V\big(j^k(\mathbf{x})\big) \right) = I^k(\mathbf{x}) + J^k(\mathbf{x}),
\end{equation*}
where:
\begin{align*}
    I^0(\mathbf{x}) &= \frac{4\lambda e^2}{(2 \pi)^6} \int_{\mathbb{R}^3 \times \mathbb{R}^3} \frac{\di^3 \mathbf{p} \di^3 \mathbf{k}}{\en{p}^2 - \en{k}^2} \hat{A}^{0}(\mathbf{k} - \mathbf{p}) e^{-i(\mathbf{p} - \mathbf{k})\mathbf{x}} \left[\frac{\en{p}^2 + m^2 + k_i p^i}{\en{p} (1 + e^{\beta \en{p}})} - \frac{\en{k}^2 + m^2 + k_i p^i}{\en{k} (1 + e^{\beta \en{k}})}\right]\\
    J^0(\mathbf{x}) &= \lambda e^2 a_1 \Delta A^0(\mathbf{x})\\
    I^k(\mathbf{x}) &= -\frac{4\lambda e^2}{(2 \pi)^6} \int_{\mathbb{R}^3 \times \mathbb{R}^3} \frac{\di^3 \mathbf{p} \di^3 \mathbf{k}}{\en{p}^2 - \en{k}^2} \hat{A}^{k}(\mathbf{k} - \mathbf{p}) e^{-i(\mathbf{p} - \mathbf{k})\mathbf{x}} \left[\frac{\en{p}^2 - m^2 - k_i p^i}{\en{p} (1 + e^{\beta \en{p}})} - \frac{\en{k}^2 - m^2 - k_i p^i}{\en{k} (1 + e^{\beta \en{k}})}\right]\\
    J^k(\mathbf{x}) &= \frac{16 \lambda e^2}{3(2 \pi)^5} \left(\Box \Box \int_{(2m)^2}^{\infty} \frac{\di M^2}{M^5} \left( \frac{M^2}{4} - m^2\right)^{\frac{1}{2}}\left( \frac{M^2}{4} + \frac{m^2}{2}\right) \int_{\mathbb{R}^3} \frac{\di^3 \mathbf{k}}{\omega_{\mathbf{k},M}^2} \hat{A}^k(-\mathbf{k}) e^{-i\mathbf{k}\mathbf{x}} \right) + \lambda e^2 a_1 \Delta A^k(\mathbf{x}) 
\end{align*}
Here, the adiabatic limit, $h \to 1$, has already been taken as a consequence of the proved convergence at each perturbative order (Theorem \ref{thm: 1}) and $\hat{A}^{\mu}(\mathbf{k})$ is the Fourier transform of the electromagnetic potential $A^{\mu}(\mathbf{x})$, if necessary considered in the weak sense. Moreover, $\omega_{\mathbf{k},M} = \sqrt{|\mathbf{k}|^2 + M^2}$ and $a_1 \in \mathbb{R}$ is the renormalization constant arising from the ambiguities in the extension of the products of Feynman propagators or equivalently, using the perturbative agreement for quadratic interacting Lagrangian \cite{HollandsWald2001, DragoHackPin}, by the point splitting regularization in the definition of the current. This is the only renormalization freedom left after imposing the conservation of the current (see \cite{Heisenberg, SchlemmerZahn}).\\
The $I^{\mu}(\mathbf{x})$ terms encode all the thermal effects as they vanish for $\beta \to \infty$, while $J^{\mu}(\mathbf{x})$ are the zero temperature contributions. Finally, for $\mu = 0$, the zero temperature contributions involve only the renormalization freedoms as a consequence of the Ward identities combined with the K\"allen-Lehmann procedure that was used for the extension of the distributions.\\

The general solution of \eqref{eq: Semiclass} can be expressed via the convolution theorem. In particular, taking the Fourier transform we have:
\begin{equation*}
    \hat{I}^{\mu}(\Tilde{\mathbf{p}}) = \lambda \hat{A}^{\mu}(\Tilde{\mathbf{p}})\hat{\mathscr{F}}_{\mu}(\Tilde{\mathbf{p}}) \,,  \quad \hat{J}^{\mu}(\Tilde{\mathbf{p}}) = \lambda \hat{A}^{\mu}(\Tilde{\mathbf{p}}) \hat{\mathscr{B}}_{\mu}(\Tilde{\mathbf{p}}),
\end{equation*}
where $\Tilde{\mathbf{p}}$ is understood as the external momentum and:
\begin{equation*}
    \hat{\mathscr{F}}_{\mu}(\Tilde{\mathbf{p}}) \coloneqq \left\{\begin{aligned}
        \frac{4 e^2}{(2 \pi)^3}\int_{\mathbb{R}^3} \frac{\di^3 \mathbf{p} }{\en{p}^2 - \omega_{\Tilde{\mathbf{p}} + \mathbf{p}}^2} \left[\frac{\en{p}^2 + m^2 + (\Tilde{p} + p)_i p^i}{\en{p} (1 + e^{\beta \en{p}})} - \frac{\omega_{\Tilde{\mathbf{p}} + \mathbf{p}}^2 + m^2 + (\Tilde{p} + p)_i p^i}{\omega_{\Tilde{\mathbf{p}} + \mathbf{p}}^2 (1 + e^{\beta \omega_{\Tilde{\mathbf{p}} + \mathbf{p}}})}\right] \hspace{30pt} \mathrm{for} \,\, \mu = 0 \\
        - \frac{4 e^2}{(2 \pi)^3}\int_{\mathbb{R}^3} \frac{\di^3 \mathbf{p} }{\en{p}^2 - \omega_{\Tilde{\mathbf{p}} + \mathbf{p}}^2} \left[\frac{\en{p}^2 - m^2 - (\Tilde{p} + p)_i p^i}{\en{p} (1 + e^{\beta \en{p}})} - \frac{\omega_{\Tilde{\mathbf{p}} + \mathbf{p}}^2 - m^2 - (\Tilde{p} + p)_i p^i}{\omega_{\Tilde{\mathbf{p}} + \mathbf{p}}^2 (1 + e^{\beta \omega_{\Tilde{\mathbf{p}} + \mathbf{p}}})}\right] \hspace{30pt} \mathrm{for} \,\, \mu = k
    \end{aligned}\right..
\end{equation*}
Moreover, $\Box$ is nothing but the Laplacian $\partial_i \partial^i$ because it acts on distributions over $\mathcal{D}(\mathbb{R}^3)$ constant in time. So, by Plancherel theorem:
\begin{align}
    \hat{\mathscr{B}}_{\mu}(\Tilde{\mathbf{p}}) \coloneqq  e^2 a_1 |\Tilde{\mathbf{p}}|^2 
    +\left\{\begin{aligned}
    &0 \hspace{30pt} \mathrm{for} \,\, \mu = 0\\
    &\frac{16 e^2 |\Tilde{\mathbf{p}}|^4}{3(2 \pi)^5} \int_{(2m)^2}^{\infty} \frac{\di M^2}{M^5 \omega_{\Tilde{\mathbf{p}},M}^2} \left( \frac{M^2}{4} - m^2\right)^{\frac{1}{2}}\left( \frac{M^2}{4} + \frac{m^2}{2} \right) \hspace{30pt} \mathrm{for} \,\, \mu = k
    \end{aligned} \right. \label{eq: parteRin}.
\end{align}
Finally, denoting by $\mathcal{F}^{-1}$ the inverse Fourier transform and by $\ast$ the convolution, we get the general solution of \eqref{eq: Semiclass}:
\begin{equation}\label{eq: solSemiclass}
    A^{\mu}(\mathbf{x}) = \left(j^{\mu}_{\mathrm{class}} \ast \mathcal{F}^{-1}\left[ \frac{1}{|\Tilde{\mathbf{p}}|^2 - \lambda \left( \hat{\mathscr{F}}_{\mu}(\Tilde{\mathbf{p}}) + \hat{\mathscr{B}}_{\mu}(\Tilde{\mathbf{p}})\right)} \right] \right)(\mathbf{x})
\end{equation}

To get an explicit correction of the electromagnetic potential due to this backreaction, we expand the solution in $|\Tilde{\mathbf{p}}|$. At zeroth order, the term \eqref{eq: parteRin} vanishes:
\begin{equation*}
    \hat{\mathscr{B}}_{\mu}(\Tilde{\mathbf{p}})\big|_{|\Tilde{\mathbf{p}}| = 0} = 0 \quad \forall \mu = 0,1,2,3
\end{equation*}
While for what concerns $ \hat{\mathscr{F}}_{\mu}(\Tilde{\mathbf{p}})$, expanding in $|\Tilde{\mathbf{p}}|$ at zeroth order, we first notice that:
\begin{equation*}
    \hat{\mathscr{F}}_k(\Tilde{\mathbf{p}})|_{|\Tilde{\mathbf{p}}| = 0} = 0 \quad \mathrm{for} \,\,\,  k = 1,2,3.
\end{equation*}
While, calling $-\hat{\mathscr{F}}_0(\Tilde{\mathbf{p}})|_{|\Tilde{\mathbf{p}}| = 0} = m_D^2$, we have:
\begin{align*}
    m_D^2 &= -\frac{4 e^2 }{(2 \pi)^3} \int_{\mathbb{R}^3} \di^3 \mathbf{p} \, \partial_{\omega_{\mathbf{p}}}\bigg( \frac{1}{1 + e^{\beta \omega_{\mathbf{p}}}} \bigg)\\
    &= \frac{4e^2 m^2}{(2 \pi)^2} \left(2 \sum_{n=0}^{\infty} (-1)^n \int_1^{\infty} \di x \, \sqrt{x^2 - 1} \, e^{-(n+1) \beta m x} + \sum_{n=0}^{\infty} (-1)^n \int_{1}^{\infty} \di x \, \frac{e^{-(n+1) \beta m x}}{\sqrt{x^2 - 1}} \right)\\
    &= \frac{4e^2 m^2 }{(2 \pi)^2} \sum_{n=0}^{\infty} (-1)^{n} K_2\big( (n+1) \beta m\big) \numberthis \label{eq: massafotoni}
\end{align*}
where we used \eqref{eq: FermiFactor} to write the Fermi factor and introduced the modified Bessel functions of second kind and index $n \in \mathbb{N}$. At the last step we used the properties for Bessel function, see \cite{gradshteyn2007} Equation $(17)$.\\
We do the following remarks. For $\beta \to \infty$, expression \eqref{eq: massafotoni} vanishes confirming the thermal nature of the effect. For the $m \to 0$ limit, consider the series expansion of the modified Bessel functions of second kind and index $n \in \mathbb{N}$ for small argument (see Equation $8.446$ of \cite{gradshteyn2007}):
\begin{equation*}
    K_n(z) = \frac{1}{2}\sum_{k=0}^{n-1} (-1)^k \frac{(n-k-1)!}{k! \left(\frac{z}{2}\right)^{n-2k}} + (-1)^{n+1} \sum_{k=0}^{\infty}\frac{\left( \frac{z}{2} \right)^{n+2k}}{k! (n+k)!} \bigg[ \ln \frac{z}{2} - C(n,k)\bigg],
\end{equation*}
where $C(n,k)$ is a factor depending on the naturals $n$ and $k$.
Therefore:
\begin{equation*}
    \lim_{m \to 0} m_D^2 = \frac{e^2}{6 \beta^2}. 
\end{equation*}
Clearly, the same conclusion is reached in the high temperature ($\beta \to 0$) limit.\\\\
The quantity $m_D^2$, is the \textit{Debye mass} \cite{DebyeHuckel} when the electromagnetic field is coupled with a Dirac field in the case of arbitrary mass and temperature. In fact, the above massless and infinite temperature limits coincide with the results known in literature (e.g. Equation $6.44$ of \cite{LeBellac} and Chapter 6 of \cite{KaputsaGale}). We also point out that, in the vanishing external momentum approximation by the form of the solution \eqref{eq: solSemiclass}, only the electric field is corrected. In support of the interpretation of our result as the Debye mass, we notice that the solution of the semiclassical Maxwell equation \eqref{eq: solSemiclass} becomes a screening of the background potential. Namely, for $\mu = 0$ the inverse Fourier transform coincides with a Yukawa potential:
\begin{equation*}
    \mathcal{F}^{-1}\bigg[ \frac{1}{|\Tilde{\mathbf{p}}|^2 + \lambda m_D^2} \bigg] = \frac{1}{4\pi r}e^{-\sqrt{\lambda} r/\mathscr{\lambda}_D},
\end{equation*}
where we have defined:
\begin{equation*}
    \lambda_D \coloneqq \frac{1}{m_D}
\end{equation*}
the \textit{Debye screening length}.\\

As an example, the specific case of the electromagnetic field generated by a point-like classical source can be studied considering as classical source:
\begin{equation*}
    j^0_{\mathrm{class}}(\mathbf{x}) = q \delta_{\epsilon}(\mathbf{x})
\end{equation*}
where $q$ is the total charge and $\delta_{\epsilon}$ a class of compactly supported smooth functions that in the limit $\epsilon \to 0$ converges weakly to the Dirac delta distribution:
\begin{equation*}
    \lim_{\epsilon \to 0} \delta_{\epsilon}(\mathbf{x}) = \delta(\mathbf{x}).
\end{equation*}
Therefore, taking $\epsilon \to 0$ in \eqref{eq: solSemiclass}, by continuity of the convolution, the general solution becomes:
\begin{equation*}
    \lim_{\epsilon \to 0} q \left( \delta_{\epsilon} \ast  \mathcal{F}^{-1}\bigg[ \frac{1}{|\Tilde{\mathbf{p}}|^2 + \lambda m_D^2} \bigg] \right)(\mathbf{x}) = \frac{q}{4\pi r}e^{-\sqrt{\lambda} r/\mathscr{\lambda}_D}
\end{equation*}
Namely, the effect of the backreaction to the Coulomb potential results in a Yukawa potential and correspondingly in the screening of the electric field.

\appendix
\section{Proof of Propositions \ref{thm: KMS} and \ref{thm: ground}}
In order to ease the reading of the proofs, we repeat here the statement of the propositions and present their proofs.
\begin{prop}
Let $\omega^{\beta}$ be a KMS state with respect to $\tau_t$ on a realization of $\mathfrak{A}$, the algebra of smeared Wick polynomials, with translation invariant two point functions $\omega_2^{\beta,+}(x,y)$ and $\omega_2^{\beta,-}(x,y)$. Then, denoting by $\beta \mathcal{S}_n = \{(u_1, \ldots, u_n) \in \mathbb{R}^n : 0 \leq u_1 \leq \ldots \leq u_n \leq \beta \}$ for $0 < \beta < \infty$ and for all $A_0, \ldots, A_n \in \mathfrak{A}(\mathcal{O})$ with $\mathcal{O} \subset B_R \subset \mathbb{R}^4$, the truncated correlation functions:
\begin{equation}
    F_n^{\beta}(u_1, \mathbf{z}_1; \ldots; u_n, \mathbf{z}_n) = \omega^{\beta,\mathcal{T}}(A_0 \otimes \tau_{iu_1, \mathbf{z}_1} (A_1) \otimes \ldots \otimes \tau_{iu_n, \mathbf{z}_n} (A_n))
\end{equation}
\begin{itemize}
\item Decay exponentially in $\beta \mathcal{S}_n \times \mathbb{R}^{3n}$ if $m > 0$ and $r_e > 2R$:
\begin{equation*}
    |F_n^{\beta}(u_1, \mathbf{z}_1; \ldots; u_n, \mathbf{z}_n)| \leq c_{A_0, \ldots, A_n} e^{-\frac{m}{\sqrt{n}}r_e} \, , \quad r_e = \sqrt{\sum_{i=1}^n |\mathbf{z}_i^2|}.
\end{equation*}
for some $c_{A_0, \ldots, A_n} \in \mathbb{R}$ depending on the chosen observables.
\item Decay polynomially in $\beta \mathcal{S}_n \times \mathbb{R}^{3n}$ if $m = 0$:
\begin{equation*}
    |F_{n}^{\beta}(u_1, \mathbf{z}_1; \ldots; u_n, \mathbf{z}_n)| \leq c_{A_0, \ldots, A_n} \sum_{G \in \mathcal{G}_{n+1}^c} \prod_{l \in E(G)}\frac{1}{\big(1+|\mathbf{z}_{s(l)} - \mathbf{z}_{r(l)} |\big)^3},
\end{equation*}
for some $c_{A_0, \ldots, A_n} \in \mathbb{R}$ depending on the chosen observables. Here $G$ denotes a connected graph, with edges $l$ of source $s(l)$ and range $r(l)$, in the set of connected graphs with $n+1$ vertices.
\end{itemize}
\end{prop}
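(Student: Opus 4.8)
The plan is to reduce the truncated correlation function to a finite sum of products of two-point functions via Wick's theorem, to estimate each two-point factor directly from its momentum-space form \eqref{eq: 2puntiKMS}, and to combine the factor estimates along a spanning tree of each contraction graph. Since $\omega^{\beta}$ is the quasi-free KMS state determined by $\omega_2^{\beta,\pm}$, all of its connected $n$-point functions of the basic Dirac fields vanish for $n\geq 3$; hence, expanding the smeared Wick polynomials $A_0,\dots,A_n$ into their constituent fields, $F_n^{\beta}(u_1,\mathbf{z}_1;\dots;u_n,\mathbf{z}_n)$ is a finite linear combination — with signs from the Fermi statistics, which are irrelevant once we pass to absolute values — of products $\prod_{l\in E(G)}$ over the edges of connected (multi)graphs $G$ on the vertex set $\{0,1,\dots,n\}$, vertex $0$ carrying $A_0$ and vertex $i\geq 1$ carrying $\tau_{iu_i,\mathbf{z}_i}(A_i)$. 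Each edge between vertices $i$ and $j$ contributes one of $\omega_2^{\beta,\pm}$, smeared against the fixed smooth test functions (supported in $B_R$) carried by $A_i$ and $A_j$; the outcome is a smooth function of the relative imaginary time $i(u_i-u_j)\in i[-\beta,\beta]$ (this argument lies in the closed KMS strip, so the expression equals the boundary value of the analytic function of Definition \ref{def: KMScond}) and of the relative spatial position, which differs from $\mathbf{z}_i-\mathbf{z}_j$ by a vector in $B_{2R}$.

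The heart of the argument is the decay estimate for this smeared two-point function as a function of the relative spatial position $\mathbf{v}$. Inserting \eqref{eq: 2puntiKMS} and performing the $x,y$-integrations against the (compactly supported, hence of exponential type) test functions leaves a convergent integral over $\mathbf{p}\in\mathbb{R}^3$ with a smooth, rapidly decreasing density; the weights $e^{\pm\omega_p(u_i-u_j)}$ together with the Fermi denominators stay bounded and integrable throughout the closed strip $u_i-u_j\in[-\beta,\beta]$. For $m>0$ this density extends analytically to the tube $|\mathrm{Im}\,\mathbf{p}|<m$ — its only obstructions, the $1/\omega_p$ factor, the Matsubara poles at $\omega_p\in i\pi(2\mathbb{Z}+1)/\beta$ and the branch points at $|\mathbf{p}|^2=-m^2$, all sit at $|\mathrm{Im}\,\mathbf{p}|\geq m$ — so shifting the contour to $\mathrm{Im}\,\mathbf{p}=\eta\,\hat{\mathbf{v}}$ with $\eta\uparrow m$, together with the stationary-phase (Bessel) asymptotics of the resulting oscillatory integral, yields the bound $C\,e^{-m|\mathbf{v}|}$ uniformly in $u_i-u_j$; the smearing removes the light-cone singularity, so the prefactor from the asymptotics is a nonpositive power of $|\mathbf{v}|$ and is harmless. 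For $m=0$ there is no exponential decay, but the Fermi factor supplies ultraviolet decay (smoothness) while the only infrared feature, the bounded factor $p_i/|\mathbf{p}|$, together with the $|x|^{-3}$-type light-cone singularity of the zero-temperature massless Dirac two-point function, gives the bound $C\,(1+|\mathbf{v}|)^{-3}$; in both cases the $B_{2R}$-spread of the supports is absorbed into the constant. Thus every edge between vertices $i,j$ contributes a factor bounded by $C\,e^{-m|\mathbf{z}_i-\mathbf{z}_j|}$, respectively $C\,(1+|\mathbf{z}_i-\mathbf{z}_j|)^{-3}$.

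For $m=0$ this already produces the claimed bound $c\sum_{G\in\mathcal{G}^{c}_{n+1}}\prod_{l\in E(G)}(1+|\mathbf{z}_{s(l)}-\mathbf{z}_{r(l)}|)^{-3}$, summing over the finitely many connected contraction graphs. For $m>0$ we may assume $r_e>2R$, since on the compact set $r_e\leq 2R$ the bound is immediate from continuity and the finite measure of $\beta\mathcal{S}_n$. In each connected graph $G$ choose a spanning tree $T$ containing the origin vertex $0$, bound every non-tree edge crudely by $C\,e^{-m|\cdot|}\leq C$, and estimate
\[
\prod_{l\in T}e^{-m|\mathbf{z}_{s(l)}-\mathbf{z}_{r(l)}|}=e^{-m\sum_{l\in T}|\mathbf{z}_{s(l)}-\mathbf{z}_{r(l)}|}\leq e^{-m\,d_T(0,\mathbf{z}_{i^{*}})}\leq e^{-m|\mathbf{z}_{i^{*}}|}\leq e^{-\frac{m}{\sqrt{n}}\,r_e},
\]
where $i^{*}\in\arg\max_i|\mathbf{z}_i|$, $d_T$ is the path metric of $T$ (so $d_T(0,\mathbf{z}_{i^{*}})$ dominates the Euclidean distance $|\mathbf{z}_{i^{*}}|$ by the triangle inequality), and $r_e^2=\sum_i|\mathbf{z}_i|^2\leq n\,|\mathbf{z}_{i^{*}}|^2$. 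Summing over the finitely many graphs yields $|F_n^{\beta}|\leq c_{A_0,\dots,A_n}\,e^{-\frac{m}{\sqrt{n}}r_e}$.

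The main obstacle is the two-point estimate of the second paragraph. In the massive case, securing the sharp exponent $m$ rather than $m-\varepsilon$ forces one to pair the contour deformation with the precise large-$|\mathbf{v}|$ asymptotics of the momentum integral, and to check carefully that the deformation never meets the Matsubara poles or the $\omega_p$ branch cut while the weights $e^{\pm\omega_p(u_i-u_j)}$ remain controlled over the entire closed imaginary-time strip — this is exactly what makes the bound uniform on $\beta\mathcal{S}_n$. In the massless case, the subtle point is isolating the light-cone-singular zero-temperature part, whose $|x|^{-3}$ behaviour survives the smearing and dictates the power $3$, from the smooth thermal remainder. These are precisely the computations deferred to the Appendix; by contrast the combinatorial assembly above — Wick's theorem, the spanning-tree bound, and the inequality $r_e\leq\sqrt{n}\,\max_i|\mathbf{z}_i|$ — is routine.
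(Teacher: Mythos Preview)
Your argument follows the same overall skeleton as the paper's --- graph expansion of the truncated function, a decay estimate for each edge, and assembly along a spanning tree via $\max_i|\mathbf{z}_i|\geq r_e/\sqrt{n}$ --- but the edge estimate is obtained differently. You shift the $\mathbf p$-contour directly into the tube $|\mathrm{Im}\,\mathbf p|<m$ against the full Fermi weight; the paper instead expands $(1+e^{-\beta\omega_p})^{-1}=\sum_j(-1)^je^{-j\beta\omega_p}$, applies the Paley--Wiener type bound of Lemma~\ref{lem: 1} term by term (getting $e^{-m\sqrt{|\mathbf z|^2+(j\beta)^2}}$), and then resums; for $m=0$ the paper's Lemma~\ref{lem: 1} uses a four-dimensional contour argument mixing the imaginary time and the spatial variable, rather than your splitting into a zero-temperature singular part and a smooth thermal remainder. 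Your route is slightly shorter and avoids the series resummation, whereas the paper's makes the uniformity in $u$ transparent by first using the KMS property to cycle the arguments so that $u_{r(l)}-u_{s(l)}\leq n\beta/(n+1)<\beta$ holds strictly, which keeps every $e^{-\beta\omega_p}e^{\omega_p(u_{r(l)}-u_{s(l)})}$ exponentially damped and feeds directly into the fast-decay statement of Proposition~\ref{prop: 1}. One point you should tighten: you say each edge is ``smeared against the fixed smooth test functions'' carried by $A_i$, but the $A_i\in\mathfrak A(\mathcal O)$ are only microcausal, so their functional derivatives are compactly supported distributions with light-cone--avoiding wave front set, not smooth functions --- the rapid decrease of the momentum-space density you invoke is therefore only along on-shell (light-cone) directions, which is exactly the content of Proposition~\ref{prop: 1}; the paper makes this explicit by keeping the full $\hat\Psi(-P,P)$ together rather than factorising edge by edge.
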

\begin{proof}\label{app: proof1}
The connected correlation functions can be rewritten as a sum on the connected graphs $G$ in the set of all connected graphs with $n+1$ vertices $\mathcal{G}_{n+1}^c$:
\begin{equation*}
    F_n^{\beta}(u_1, \mathbf{z}_1; \ldots; u_n, \mathbf{z}_n) = \sum_{G \in \mathcal{G}_{n+1}^c} \frac{1}{\mathrm{Symm}(G)} F^{\beta}_{n,G}(u_1, \mathbf{z}_1; \ldots; u_n, \mathbf{z}_n).
\end{equation*}
In the above, $\mathrm{Symm}(G)$ is the symmetry factor associated to each graph and:
\begin{equation}\label{eq: AA}
    F^{\beta}_{n,G}(u_1, \mathbf{z}_1; \ldots; u_n, \mathbf{z}_n) = \prod_{i < j} \big( \Gamma_{\beta,2}^{ij} \big)^{l_{ij}} (A_0 \otimes \tau_{iu_1, \mathbf{z}_1}(A_1) \otimes  \cdots \otimes \tau_{iu_n, \mathbf{z}_n}(A_n))\bigg|_{\mathrm{Conf} = 0}
\end{equation}
where $l_{ij}$ is the number of lines in the graph connecting the $i$-th and $j$-th vertex, $\mathrm{Conf}$ is a shortening for all the field configurations set to zero by the evaluation on the state $\omega^{\beta}$, and $\Gamma_{\beta,2}^{ij}$ is a functional differential operator coming from the realization of the $\star_{\omega^{\beta}}$-product:
\begin{equation}\label{eq: Gamma}
    \Gamma_{\beta,2}^{ij} = \hbar \int \di^4x \di^4y \bigg( \omega^{\beta,+}_2(x-y) \frac{\delta_r}{\delta \psi(x)} \otimes \frac{\delta}{\delta \overline{\psi}(y)} + \omega^{\beta,-}_2(y-x) \frac{\delta_r}{\delta\overline{\psi}(x)} \otimes \frac{\delta}{\delta{\psi}(y)}\bigg).
\end{equation}
The explicit form of the two point functions is:
\begin{align*}
    \omega^{\beta, +}_{2}(x-y) &= \frac{1}{(2 \pi)^3} \int \frac{\di^3\mathbf{p}}{2 \omega_p} \bigg( \frac{(-\gamma^0 \omega_p - \gamma^i p_i + m)e^{-i\omega_p (t_x - t_y)}}{1 + e^{-\beta \omega_p}} - \frac{(\gamma^0 \omega_p - \gamma^i p_i + m)e^{i\omega_p (t_x - t_y)}}{1 + e^{\beta \omega_p}}  \bigg) e^{i \mathbf{p} (\mathbf{x}- \mathbf{y})}\\
    \omega^{\beta, -}_{2}(x-y) &= \frac{1}{(2 \pi)^3} \int \frac{\di^3\mathbf{p}}{2 \omega_p} \bigg(\frac{(-\gamma^0 \omega_p - \gamma^i p_i + m)e^{-i\omega_p (t_x - t_y)}}{1 + e^{\beta \omega_p}} - \frac{(\gamma^0 \omega_p - \gamma^i p_i + m)e^{i\omega_p (t_x - t_y)}}{1 + e^{-\beta \omega_p}}  \bigg) e^{i \mathbf{p} (\mathbf{x}- \mathbf{y})}
\end{align*}
where $\omega_p = \sqrt{|\mathbf{p}|^2 + m^2}$.\\
Now, instead of making the product over vertices, we write \eqref{eq: AA} as a product over lines. Introduce for $N = |E(G)|$ the number of lines in the graph $G$, the multiindex $\mathbf{K} = (k_1, \ldots, k_N)$ where each $k_s = \pm$. Therefore, summing over all possible multiindices $\mathbf{K}$, the above is rewritten as:
\begin{equation*}
    F^{\beta}_{n,G}(u_1, \mathbf{z}_1; \ldots; u_n, \mathbf{z}_n) = \sum_{\mathbf{K}} \int \di X \di Y \prod_{l \in E(G)} \omega_2^{\beta, k_{l}}(x_l,y_l) \Psi^{\mathbf{K}}(X,Y)\bigg|_{\mathrm{Conf} = 0}.
\end{equation*}
Here, $\di X = \di^4x_1 \cdots \di^4x_N$, $\omega_2^{\beta, k_{l}}(x_l,y_l)$ denotes the two point function with difference of the arguments depending on $k_l$ and:
\begin{equation*}
    \Psi^{\mathbf{K}}(X,Y) \coloneqq \prod_{l \in E(G)} \mathfrak{D}^{k_l}(x_l, y_l)(A_0 \otimes \alpha_{iu_1, \mathbf{z}_1} (A_1) \otimes \ldots \otimes \alpha_{iu_n, \mathbf{z}_n} (A_n)),
\end{equation*}
where $\mathfrak{D}^{k_l}(x_l,y_l)$ are the functional derivatives according to $k_l$ as expressed in Equation \eqref{eq: Gamma}. Changing now integration variables, moving the spacetime translations to the arguments of the two point functions, writing everything in momentum space and using the explicit form of the two point functions:
\begin{equation}\label{eq: Mos}
    F^{\beta}_{n,G}(U,\mathbf{Z}) = \sum_{\mathbf{K}} \int \di P \prod_{l \in E(G)} e^{i\mathbf{p}_l (\mathbf{z}_{s(l)} - \mathbf{z}_{r(l)})} \bigg(\frac{\lambda_+^{k_l}(p_l) - \lambda_-^{k_l}(p_l)}{2 \omega_{p_l} (1 + e^{-\beta \omega_{p_l}})}\bigg) \hat{\Psi}(-P,P),
\end{equation}
where $s(l) < r(l)$ are respectively the source and the range vertices of the line $l$ and:
\begin{align*}
    \lambda_+^{+}(p_l) &= e^{\omega_{p_l} (u_{s(l)} - u_{r(l)})} \delta(p_l^0 - \omega_{p_l})(-\cancel{p}_l + m)\\
    \lambda_-^{+}(p_l) &= e^{-\beta \omega_{p_l}} e^{\omega_{p_l} (u_{r(l)} - u_{s(l)})} \delta(p_l^0 + \omega_{p_l})(-\cancel{p}_l + m),
\end{align*}
while:
\begin{align*}
    \lambda_+^{-}(p_l) &= -e^{\omega_{p_l} (u_{s(l)} - u_{r(l)})} \delta(p_l^0 + \omega_{p_l})(-\cancel{p}_l + m)\\
    \lambda_-^{-}(p_l) &= -e^{-\beta \omega_{p_l}} e^{\omega_{p_l} (u_{r(l)} - u_{s(l)})} \delta(p_l^0 - \omega_{p_l})(-\cancel{p}_l + m).
\end{align*}
Notice that the order of the difference in the exponential is the same in both cases due to the corresponding switched order of the difference $x_l - y_l$ in the two-point functions.\\
Now, by the KMS property of the free connected two-point function:
\begin{equation*}
    \omega^{\beta,\mathcal{T}}(\alpha_{0, \mathbf{0}} A_0 \otimes \ldots \otimes \alpha_{iu_n, \mathbf{z}_n}A_n) = \omega^{\beta,\mathcal{T}}(\alpha_{i u_m, \mathbf{z}_m} A_m \otimes \ldots \otimes \alpha_{iu_n, \mathbf{z}_n}A_n \otimes \alpha_{i\beta}A_0 \otimes \ldots \otimes \alpha_{i(u_{m-1} + \beta), \mathbf{z}_{m-1}} A_{m-1}),
\end{equation*}
where the choice of $m \in \{0, \ldots, n\}$ is arbitrary. Therefore, recalling that $(u_0, \ldots, u_n) \in \beta \mathcal{S}_{n+1}$ one of the following conditions hold:
\begin{equation}\label{eq: u-conditions}
    \beta - u_n \geq \frac{\beta}{n+1}\, ; \quad u_m - u_{m-1} \geq \frac{\beta}{n+1},
\end{equation}
where the second condition holds for at least an $m \in \{0, \ldots, n\}$. However, the two are equivalent. In fact, if the second condition holds, redefine:
\begin{align*}
    (\mathbf{w}_1, \ldots, \mathbf{w}_n) &= (\mathbf{z}_{m+1} - \mathbf{z}_m, \ldots, \mathbf{z}_n - \mathbf{z}_m, - \mathbf{z}_m, \mathbf{z}_1 - \mathbf{z}_m, \ldots, \mathbf{z}_{m-1} - \mathbf{z}_m)\\
    (v_1, \ldots, v_n) &= (u_{m+1} - u_m, \ldots, u_{n} - u_m, \beta - u_m, \beta + u_1 - u_m, \ldots, \beta + u_{m-1} - u_m)\\
    B_0 &\coloneqq A_m\,, \,\,\, B_1 \coloneqq A_{m+1}\, , \,\,\, \cdots\,\,\, , \,\,\, B_n \coloneqq A_{m-1}.
\end{align*}
Then, from the spacetime translation invariance of the KMS state, we rewrite:
\begin{align*}
    F^{\beta}_{n,G}(U, \mathbf{Z}) = \overline{F}^{\beta}_{n,G}(V, \mathbf{W}) &\coloneqq \omega^{\beta,\mathcal{T}}(B_0 \otimes \alpha_{iv_1, \mathbf{w}_1} B_1 \otimes \ldots \otimes \alpha_{iv_n, \mathbf{w}_n} B_n)\\
    &= \sum_{\mathbf{K}} \int \di P \prod_{l \in E(G)} e^{ik_l \mathbf{p}_l (\mathbf{w}_{s(l)} - \mathbf{w}_{r(l)})} \frac{\big(\overline{\lambda}^{k_l}_+(p_l) - \overline{\lambda}_-^{k_l}(p_l)\big)}{2 \omega_{p_l} (1 + e^{-\beta \omega_{p_l}})} \hat{\Psi}_B(-P, P),
\end{align*}
where now $\overline{\lambda}^{k_l}_{\pm}(p_l)$ is expressed in terms of $v_i$ and $\hat{\Psi}_B(-P, P)$ is as $\hat{\Psi}(-P, P)$ just with the $B_i \leftrightarrow A_i$. But, now $v_0 \leq v_1 \leq v_2 \leq \ldots \leq v_n$ and from the second condition in \eqref{eq: u-conditions} follows:
\begin{equation*}
    \beta - v_n = \beta - (\beta + u_{m-1} - u_m) = u_{m} - u_{m-1} \geq \frac{\beta}{n+1}.
\end{equation*}
Therefore, it is enough to study $F^{\beta}_{n,G}(U, \mathbf{Z})$ assuming the first condition in \eqref{eq: u-conditions} to hold.\\
The lines $l$ are denoted respectively $l_+$ and $l_-$, when among the $\lambda^{k_l}_{\pm}$ factors in \eqref{eq: Mos} the contribution with four-momentum is restricted respectively in the future or in the past light cone. Let, for this purpose:
\begin{equation*}
    \epsilon: E(G) \to \{+,-\}
\end{equation*}
be the map associating to each $l \in E(G)$ respectively $l_+$ or $l_-$. Denoting $E_{\pm}(G) = \{ l \in G : \epsilon(l) = \pm \}$ we have:
\begin{align}
    F^{\beta}_{n,G}(U, \mathbf{Z}) = \sum_{\epsilon,\mathbf{K}} \int \di P &\bigg(\prod_{l_+ \in E_+(G)} e^{ik_{l_+}\mathbf{p}_{l_+}  (\mathbf{z}_{s(l_+)} - \mathbf{z}_{r(l_+)})} \frac{\lambda^{k_{l_+}}_{\epsilon_+}(p_{l_+})}{2 \omega_{p_{l_+}} (1 + e^{-\beta \omega_{p_{l_+}}})}\bigg) \nonumber\\
    \times &\bigg( \prod_{l_- \in E_-(G)} e^{ik_{l_-}\mathbf{p}_{l_-} (\mathbf{z}_{s(l_-)} - \mathbf{z}_{r(l_-)})} \frac{-\lambda^{k_{l_-}}_{\epsilon_-}(p_{l_-})}{2 \omega_{p_{l_-}} (1 + e^{-\beta \omega_{p_{l_-}}})} \bigg)\hat{\Psi}(-P, P), \label{eq: espansione}
\end{align}
where $\epsilon_{\pm}$ is the corresponding choice for the lower index of $\lambda^{k_l}_{\pm}$ to have respectively future or past pointing momentum associated to the lines $l_+$ and $l_-$.
By its definition, $\hat{\Psi}(-P,P)$ is an entire analytic function that grows at most polynomially in each direction.
In particular we want to argue that:
\begin{equation}\label{eq: fastdecay}
    \int \di P^0  \bigg(\prod_{l_{\pm} \in E_{\pm}(G)}\frac{\lambda^{k_{l_+}}_{\epsilon_+}(p_{l_+})}{1 + e^{-\beta \omega_{p_{l_+}}}} \frac{\lambda^{k_{l_-}}_{\epsilon_-}(p_{l_-})}{1 + e^{-\beta \omega_{p_{l_-}}}} \bigg) \hat{\Psi}(-P, P),
\end{equation}
as a function of the three momenta $\mathbf{p}_l$, is of fast decay. Equation \eqref{eq: u-conditions} gives:
\begin{equation*}
    \max_{i < j}(u_j - u_i) = u_n - u_0  \leq \frac{n \beta}{n+1} < \beta.
\end{equation*}
Therefore:
\begin{equation*}
    e^{-\omega_{p_{l}} \beta} e^{\omega_{p_{l}}(u_{r(l)} - u_{s(l)})} = e^{-\omega_{p_{l}} (\beta - \frac{n\beta}{n+1})} e^{\omega_{p_{l}}(u_{r(l)} - u_{s(l)} - \frac{n\beta}{n+1})} 
\end{equation*}
and since $u_{r(l)} - u_{s(l)} - \frac{n\beta}{n+1} \leq 0$, we have:
\begin{equation*}
    e^{-\omega_{p_{l}} \beta} e^{\omega_{p_{l}}(u_{r(l)} - u_{s(l)})} \leq e^{-\omega_{p_{l}} \frac{\beta}{n+1}}.
\end{equation*}
Therefore, combining this with the result of the result of Proposition \ref{prop: 1}, we have that \eqref{eq: fastdecay} is of fast decay both in $\mathbf{p}_{l_-}$ and in $\mathbf{p}_{l_+}$. After these considerations, we distinguish the $m > 0$ and $m = 0$ cases.\\\\
$\bullet$ \textbf{Case $\mathbf{m>0}$}. We notice that:
\begin{equation}\label{eq: FermiFactor}
    \frac{1}{1 + e^{-\beta \omega_{p_l}}} = \sum_{j=0}^{\infty} (-1)^j \big( e^{-\beta \omega_{p_l}} \big)^j,
\end{equation}
and rewrite $F_{n,G}^{\beta}$ as:
\begin{align*}
    F^{\beta}_{n,G}(U, \mathbf{Z}) = \sum_{\epsilon,\mathbf{K}} \int \di P &\bigg(\prod_{l_+ \in E_+(G)} \sum_{j_1=0}^{\infty} (-1)^{j_1} e^{-j_1 \omega_{p_{l_+}}\beta} e^{ik_{l_+}\mathbf{p}_{l_+} (\mathbf{z}_{s(l_+)} - \mathbf{z}_{r(l_+)})} \frac{\lambda^{k_{l_+}}_+(p_{l_+})}{2 \omega_{p_{l_+}}}\bigg)\\
    \times &\bigg( \prod_{l_- \in E_-(G)} \sum_{j_2=0}^{\infty} (-1)^{j_2} e^{-j_2 \omega_{p_{l_-}}\beta} e^{ik_{l_-}\mathbf{p}_{l_-} (\mathbf{z}_{s(l_-)} - \mathbf{z}_{r(l_+)})} \frac{-\lambda^{k_{l_-}}_-(p_{l_-})}{2 \omega_{p_{l_-}}} \bigg)\hat{\Psi}(-P, P).
\end{align*}
Fix now $\varepsilon$, $\mathbf{K}$ and any term in the $j$-sums. Then, by the statement of Lemma \ref{lem: 1} we have:
\begin{equation*}
    |F_{n,G,\varepsilon, \mathbf{K}, j_1,j_2}^{\beta}(U, \mathbf{Z})| \leq c_{A_0, \ldots, A_n} \prod_{l \in E(G)} e^{-m\sqrt{|\mathbf{z}_{r(l)} - \mathbf{z}_{s(l)}|^2 + (\beta j_l)^2}},
\end{equation*}
where $j_l$ denotes either a $j_1$ or $j_2$ index depending on whether $l$ is a positive or negative line. Moreover, we used the finite range of $u_i$ and that $\frac{n \beta}{n+1} - u_{r(l)} + u_{s(l)} \geq 0$ to get rid of the $u$-dependence. Now, reintroducing the sum over $j_l$:
\begin{align*}
    \sum_{j=0}^{\infty} e^{-m\sqrt{q^2 + (\beta j)^2}} = \sum_{\beta j < q} e^{-m\sqrt{q^2 + (\beta j)^2}} + \sum_{\beta j \geq q} e^{-m\sqrt{q^2 + (\beta j)^2}}
\end{align*}
for $q>0$. The two sums separately give:
\begin{align*}
    \sum_{j = [q/\beta]}^{\infty} e^{-m\sqrt{q^2 + (\beta j)^2}} 
    &\leq e^{-mq} \sum_{j = 0}^{\infty} e^{-m \beta j} = \frac{e^{-mq}}{1 - e^{-m\beta}}\\
    \sum_{j = 0}^{[q/\beta]-1} e^{-m\sqrt{q^2 + (\beta j)^2}} &\leq \sum_{j = 0}^{[q/\beta] - 1} e^{-m\sqrt{q^2}} = \bigg[\frac{q}{\beta}\bigg]e^{-m q}. 
\end{align*} 
In the above, $[x]$ denotes the next approximation by an integer of $x \in \mathbb{R}$. Combining the two:
\begin{equation*}
    \sum_{j=0}^{\infty} e^{-m\sqrt{q^2 + (\beta j)^2}} \leq c' e^{-mq}
\end{equation*}
for some constant $c' \in \mathbb{R}^+$. From this we conclude that:
\begin{equation*}
    |F_{n,G,\varepsilon, \mathbf{K}}^{\beta}(U, \mathbf{Z})| \leq c_{A_0, \ldots, A_n} \prod_{l \in E(G)} e^{-m|\mathbf{z}_{r(l)} - \mathbf{z}_{s(l)}|}
\end{equation*}
As very last step, the connectedness of the graph ensures:
\begin{equation*}
    \sum_{l \in E(G)}|\mathbf{z}_{r(l)} - \mathbf{z}_{s(l)}| \geq \max_{i \in \{ 1 , \ldots, n\}} |\mathbf{z}_i| \geq \sqrt{\frac{1}{n} \sum_{i=0}^n |\mathbf{z}_i|^2} := \frac{1}{\sqrt{n}} r_e.
\end{equation*}
Follows that: 
\begin{equation*}
     |F_{n,G,\varepsilon, \mathbf{K}}^{\beta}(U, \mathbf{Z})| \leq c_{A_0, \ldots, A_n} e^{-\frac{m}{\sqrt{n}}r_e},
\end{equation*}
and the prove is concluded as $F_{n}^{\beta}$ is just a sum of finitely many $F_{n,G,\varepsilon, \mathbf{K}}^{\beta}(U, \mathbf{Z})$.\\\\
$\bullet$ \textbf{Case $\mathbf{m = 0}$}. First of all we notice that the fast decay property of the integrand is not affected by setting $m=0$ as the Fermi factor does not introduce additional singularities. Moreover, using again the series representation of the Fermi factor \eqref{eq: FermiFactor}, we need to estimate:
\begin{align*}
    F^{\beta}_{n,G}(U, \mathbf{Z}) = \sum_{\epsilon,\mathbf{K}} \int \di P &\bigg(\prod_{l_+ \in E_+(G)} \sum_{j_1=0}^{\infty} (-1)^{j_1} e^{-j_1 |\mathbf{p}_{l_+}|\beta} e^{ik_{l_+}\mathbf{p}_{l_+} (\mathbf{z}_{s(l_+)} - \mathbf{z}_{r(l_+)})} \frac{\lambda^{k_{l_+}}_+(p_{l_+})}{2 |\mathbf{p}_{l_+}|}\bigg)\\
    \times &\bigg( \prod_{l_- \in E_-(G)} \sum_{j_2=0}^{\infty} (-1)^{j_2} e^{-j_2 |\mathbf{p}_{l_-}|\beta} e^{ik_{l_-}\mathbf{p}_{l_-} (\mathbf{z}_{s(l_-)} - \mathbf{z}_{r(l_+)})} \frac{-\lambda^{k_{l_-}}_-(p_{l_-})}{2 |\mathbf{p}_{l_-}|} \bigg)\hat{\Psi}(-P, P).
\end{align*}
Therefore, by Lemma \ref{lem: 1}, fixing $\epsilon$, $\mathbf{K}$ and any term in the $j$-sums we get:
\begin{equation*}
    |F_{n,G,\varepsilon, \mathbf{K}, j_1,j_2}^{\beta}(U, \mathbf{Z})| \leq c_{A_0, \ldots, A_n} \prod_{l \in E(G)}\frac{1}{\big(1 + \sqrt{(u_{s(l)} - u_{r(l)})^2 + |\mathbf{z}_{s(l)} - \mathbf{z}_{r(l)} |^2} \big)^3}.
\end{equation*}
Now, since $u_{s(l)} - u_{r(l)} \geq - \frac{n \beta}{n+1} + \min_{l \in E(G)}(c_{\beta} - u_{r(l)} + u_{s(l)}) \coloneqq c'_{\beta}$, the following estimate holds:
\begin{equation*}
    |F_{n,G,\varepsilon, \mathbf{K}, j_1,j_2}^{\beta}(U, \mathbf{Z})| \leq c_{A_0, \ldots, A_n} \prod_{l \in E(G)}\frac{1}{\big(1 + |\mathbf{z}_{s(l)} - \mathbf{z}_{r(l)} | \big)^3}.
\end{equation*}
Lastly, summing over all possible multiindices $\mathbf{K}$, combinations $\epsilon$ and connected graphs with $n+1$ vertices (all finite sums) we have the claim.
\end{proof}

\begin{prop}
Let $\omega^{\infty}$ be the ground state with respect to $\tau_t$ on $\mathfrak{A}$, the algebra of smeared Wick polynomials, with translation invariant two point functions $\omega^{\infty,+}_2(x,y)$ and $\omega^{\infty,-}_2(x,y)$. Then, denoting by $\mathcal{S}_n^{\infty} = \{ (u_1,\ldots, u_n) \in \mathbb{R}^n: -\infty < u_1 \leq \ldots \leq u_n < \infty \}$ and for all $A_0, \ldots, A_n \in \mathfrak{A}(\mathcal{O})$ with $\mathcal{O} \subset B_R \subset \mathbb{R}^4$, the connected correlation functions:
\begin{equation}
    F_{n}^{\infty}(u_1, \mathbf{z}_1; \ldots; u_n, \mathbf{z}_n) = \omega^{\infty, \mathcal{T}}(\alpha_{iu_1, \mathbf{z}_1} (A_1) \otimes \ldots \otimes A_0 \otimes \ldots \otimes \alpha_{iu_n, \mathbf{z}_n} (A_n))
\end{equation}
\begin{itemize}
    \item Decay exponentially in $\mathcal{S}_n^{\infty} \times \mathbb{R}^{3n}$ if $m > 0$ and $r_g > 2R$:
    \begin{equation*}
        |F_n^{\infty}(u_1, \mathbf{z}_1; \ldots; u_n, \mathbf{z}_n)| \leq c_{A_0, \ldots, A_n} e^{-\frac{m}{\sqrt{n}}r_g} \, , \quad r_g = \sqrt{\sum_{i=1}^n u_i^2 + |\mathbf{z}_i^2|},
    \end{equation*}
    for some $c_{A_0, \ldots, A_n} \in \mathbb{R}$ depending on the chosen observables.
    \item Decay in $\mathcal{S}_n^{\infty} \times \mathbb{R}^{3n}$ if $m = 0$:
    \begin{equation*}
        |F_n^{\infty}(u_1, \mathbf{z}_1; \ldots; u_n, \mathbf{z}_n)| \leq c_{A_0, \ldots, A_n} \sum_{G \in \mathcal{G}^{c}_{n+1}}\prod_{l \in E(G)}\frac{1}{\bigg(1 + \sqrt{(u_{s(l)} - u_{r(l)})^2 + | \mathbf{z}_{s(l)} - \mathbf{z}_{r(l)} |^2}\bigg)^{3}}
    \end{equation*}
    for some $c_{A_0, \ldots, A_n} \in \mathbb{R}$ depending on the chosen observables. Here $G$ denotes a graph, with edges $l$ of source $s(l)$ and range $r(l)$ in the set of connected graphs with $n+1$ vertices.
\end{itemize}
\end{prop}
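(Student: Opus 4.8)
The plan is to transcribe, \emph{mutatis mutandis}, the proof of Proposition \ref{thm: KMS}, which in fact simplifies in the ground-state case: the two-point functions \eqref{eq: 2puntiGround} carry no Fermi factor $(1+e^{-\beta\omega_p})^{-1}$, so there is no geometric expansion in powers of $e^{-\beta\omega_p}$, and $\omega^{\infty}$ obeys no KMS periodicity, so no redistribution of the $u$-variables is required. First I would write the truncated correlation function as the finite sum $F_n^{\infty}=\sum_{G\in\mathcal{G}_{n+1}^c}\tfrac{1}{\mathrm{Symm}(G)}\,F_{n,G}^{\infty}$ over connected graphs on $n+1$ vertices, and for each $G$ rewrite $F_{n,G}^{\infty}$ as a product over the edges $l\in E(G)$ of two-point functions $\omega_2^{\infty,k_l}$ (one factor per line, carrying a sign index $k_l=\pm$), with the remaining functional derivatives acting on $\alpha_{iu_1,\mathbf{z}_1}(A_1)\otimes\cdots\otimes A_0\otimes\cdots\otimes\alpha_{iu_n,\mathbf{z}_n}(A_n)$, exactly as in \eqref{eq: AA}--\eqref{eq: Gamma}.

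Next I would pass to momentum space via \eqref{eq: 2puntiGround}: each edge contributes $\tfrac{1}{2\omega_{p_l}}(-\cancel{p}_l\pm m)$ restricted by $\delta(p_l^0\mp\omega_{p_l})$, so the $p_l^0$-integrals are performed on the mass shell, and the analytic continuation $t_i\mapsto iu_i$ — with the $A_0$-slot placed at the sign change of the ordered $u_j$'s, so that the full list of imaginary parts is monotone increasing, $u_1\le\cdots\le0\le\cdots\le u_n$, i.e.\ the configuration lies in the forward tube where $\omega^{\infty}$ is analytic by the spectrum condition — turns every oscillating exponential $e^{-i\omega_{p_l}(t_{s(l)}-t_{r(l)})}$ into a decaying one $e^{-\omega_{p_l}|u_{s(l)}-u_{r(l)}|}$. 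This is the decisive point, and the reason the \emph{unbounded} domain $\mathcal{S}_n^{\infty}$ causes no trouble: all of these exponentials decay, so large temporal separations only improve the bound (and, unlike at finite $\beta$, the estimate also decays in $|u_{s(l)}-u_{r(l)}|$), whereas in the KMS case one had to impose $u_n-u_0<\beta$ to tame the wrong-sign terms of the Fermi series. Since $\hat\Psi(-P,P)$ is entire and of at most polynomial growth (Proposition \ref{prop: 1}), the $p^0$-integrated integrand is of fast decay in each spatial momentum $\mathbf{p}_l$, including at $m=0$, where it is integrable against $\di^3\mathbf{p}/(2\omega_p)$.

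Lemma \ref{lem: 1} then applies edge by edge and gives, for each graph, sign assignment $\epsilon$ and index $\mathbf{K}$,
\begin{equation*}
|F_{n,G,\epsilon,\mathbf{K}}^{\infty}(U,\mathbf{Z})|\le c_{A_0,\ldots,A_n}\prod_{l\in E(G)} e^{-m\sqrt{(u_{s(l)}-u_{r(l)})^2+|\mathbf{z}_{s(l)}-\mathbf{z}_{r(l)}|^2}}
\end{equation*}
in the massive case and $\prod_{l\in E(G)}\big(1+\sqrt{(u_{s(l)}-u_{r(l)})^2+|\mathbf{z}_{s(l)}-\mathbf{z}_{r(l)}|^2}\big)^{-3}$ in the massless case — here no geometric summation over a Fermi index is needed. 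For $m>0$, connectedness of $G$, identifying the $A_0$-vertex with $(0,\mathbf{0})$, yields $\sum_{l\in E(G)}\sqrt{(u_{s(l)}-u_{r(l)})^2+|\mathbf{z}_{s(l)}-\mathbf{z}_{r(l)}|^2}\ge\max_i\sqrt{u_i^2+|\mathbf{z}_i|^2}\ge\tfrac{1}{\sqrt{n}}\,r_g$, so each contribution is bounded by $c_{A_0,\ldots,A_n}e^{-\frac{m}{\sqrt{n}}r_g}$; summing the finitely many $G,\epsilon,\mathbf{K}$ gives the first claim. For $m=0$ the per-graph bound is already of the asserted shape and one simply sums over $G\in\mathcal{G}_{n+1}^c$.

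The step I expect to be most delicate is bookkeeping rather than analysis: checking that the monotone imaginary-time arrangement really does make every on-shell exponential decaying for all choices of $k_l$ and of the future/past splitting $\epsilon$, i.e.\ confirming that the continuation stays inside the tube of analyticity of $\omega^{\infty}$. Once that sign analysis is settled, the massive and massless estimates are a transcription of those already carried out for Proposition \ref{thm: KMS} with the $\beta$-dependent factors removed, and the $m=0$ light-cone degeneracy $\omega_p=|\mathbf{p}|\to0$ is controlled exactly as there, through the fast decay furnished by Proposition \ref{prop: 1}.
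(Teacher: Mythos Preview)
Your proposal is correct and follows essentially the same route as the paper's proof: graph expansion of the truncated function, passage to momentum space using \eqref{eq: 2puntiGround}, the observation that on $\mathcal{S}_n^{\infty}$ all on-shell exponentials $e^{\omega_{p_l}(u_{s(l)}-u_{r(l)})}$ are decaying, fast decay of $\hat\Psi$ from Proposition \ref{prop: 1}, edgewise application of Lemma \ref{lem: 1}, and the connectedness/triangle-inequality step to convert the product of edge bounds into the $r_g$ estimate in the massive case. One cosmetic difference: in the ground-state two-point functions there is only a single on-shell contribution per sign $k_l$, so the future/past splitting $\epsilon$ you carry over from the KMS proof is superfluous here---the paper's proof sums only over $\mathbf{K}$.
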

\begin{proof}\label{app: proof2}
The proof is similar to that of \ref{thm: KMS}. Writing everything as a sum over connected graphs and focusing on a single $G \in \mathcal{G}^c_{n+1}$:
\begin{equation*}
    F_{n,G}^{\infty}(u_1, \mathbf{z}_1; \ldots; u_n, \mathbf{z}_n) = \sum_{\mathbf{K}} \int \di X \, \di Y \prod_{l \in E(G)} \omega_2^{\infty,k_l}(x_l,y_l) \Psi^{\mathbf{K}}(X,Y)\bigg|_{\mathrm{Conf} = 0},
\end{equation*}
where $\mathbf{K} = (k_1, \ldots, k_N)$ is a multiindex with $N = |E(G)|$ arguments, each of which is $k_l = \pm$, $dX = dx_1 \cdots dx_N$. Here, the two point functions, are those of the ground state for the free theory:
\begin{align*}
    \omega^{\infty, +}_{2}(x-y) &= \frac{1}{(2 \pi)^3} \int \frac{\di^3\mathbf{p}}{2 \omega_p}(-\gamma^0 \omega_p - \gamma^i p_i + m)e^{-i\omega_p (t_x - t_y)} e^{i \mathbf{p} (\mathbf{x}- \mathbf{y})}\\
    \omega^{\infty, -}_{2}(x-y) &= -\frac{1}{(2 \pi)^3} \int \frac{\di^3\mathbf{p}}{2 \omega_p} (\gamma^0 \omega_p - \gamma^i p_i + m)e^{i\omega_p (t_x - t_y)} e^{i \mathbf{p} (\mathbf{x}- \mathbf{y})}
\end{align*}
where in $\omega_2^{\infty,k_l}(x_l,y_l)$ the order of the difference depends on $k_l$ correspondingly with the definition of the product $\star_{\omega^{\infty}}$, and:
\begin{equation*}
    \Psi^{\mathbf{K}}(X,Y) \coloneqq \prod_{l \in E(G)} \mathfrak{D}^{k_l}(x_l,y_l) (\alpha_{iu_1, \mathbf{z}_1} (A_1) \otimes \ldots \otimes A_0 \otimes \ldots \otimes \alpha_{iu_n, \mathbf{z}_n} (A_n))
\end{equation*}
for $\mathfrak{D}^{k_l}(x_l,y_l)$ the functional derivatives as they appear in \eqref{eq: Gamma}.\\
Moving now the spacetime translations to the vacuum two point function and writing all in momentum space:
\begin{equation*}
    F_{n,G}^{\infty}(u_1, \mathbf{z}_1; \ldots; u_n, \mathbf{z}_n) = \sum_{\mathbf{K}} \int \di P \prod_{l \in E(G)} k_l \frac{e^{i\mathbf{p}_l(\mathbf{z}_{s(l)} - \mathbf{z}_{r(l)})}}{2 \omega_{p_l}} e^{\omega_{p_l}(u_{s(l)} - u_{r(l)})} (-\cancel{p}_l + m) \delta(p_l^0 - k_l \omega_{p_l}) \hat{\Psi}(-P,P),
\end{equation*}
where $s(l) < r(l)$ are the source and range vertices of each line $l \in E(G)$.\\
Therefore, by definition of $\mathcal{S}^{\infty}_n$, the exponential factor is always decreasing in $|\mathbf{p}_l|$. Moreover, computing the integrals involving the Dirac deltas, $(p_1, \ldots, p_n) \mapsto \hat{\Psi}(-P,P)$ has momenta in the upper cone $(p_1, \ldots, p_n) \in (V^+)^n$. Therefore, by Proposition \ref{prop: 1}, $\hat{\Psi}$ is of fast decay. Given these considerations, we distinguish the $m>0$ from the $m=0$ case.\\\\
$\bullet$ \textbf{Case $\mathbf{m > 0}$}. Lemma \ref{lem: 1} gives for:
\begin{equation*}
    r = \sum_{l \in E(G)}\sqrt{(u_{r(l)} - u_{s(l)})^2 + |\mathbf{z}_{r(l)} - \mathbf{z}_{s(l)}|^2},
\end{equation*}
and each fixed $\mathbf{K}$:
\begin{equation*}
    |F_{n,G,\mathbf{K}}^{\infty}(u_1, \mathbf{z}_1; \ldots; u_n, \mathbf{z}_n)| \leq c_{A_0, \ldots, A_n} e^{-m r} \, \quad \mathrm{for} \,\, c_{A_0, \ldots, A_n} \in \mathbb{R}^+.
\end{equation*}
As every graph is connected and thus each vertex can be reached starting from $(0, \mathbf{0})$, we have by triangular inequality:
\begin{equation*}
    \sum_{l \in E(G)} \sqrt{(u_{r(l)} - u_{s(l)})^2 + |\mathbf{z}_{r(l)} - \mathbf{z}_{s(l)}|^2} \geq \max_{i \in \{ 0, \ldots, n \}} \sqrt{u_i^2 + |\mathbf{z}_i|^2}.
\end{equation*}
Therefore, averaging:
\begin{equation*}
    r \geq \sqrt{\frac{1}{n} \sum_{i =1}^n u_i^2 + |\mathbf{z}_i|^2} := \frac{1}{\sqrt{n}} r_g.
\end{equation*}
Follows that:
\begin{equation*}
    |F_{n,G, \mathbf{K}}^{\infty}(u_1, \mathbf{z}_1; \ldots; u_n, \mathbf{z}_n)| \leq c_{A_0, \ldots, A_n} e^{-\frac{m}{\sqrt{n}} r_g},
\end{equation*}
and summing over all possible, finitely many, connected graphs with $n+1$ vertices and multiindices $\mathbf{K}$ we have the desired estimate.\\\\
$\bullet$ \textbf{Case $\mathbf{m = 0}$}. Still by Lemma \ref{lem: 1}, the following estimate holds for fixed multiindex $\mathbf{K}$:
\begin{equation*}
        |F_{n,G,\mathbf{K}}^{\infty}(u_1, \mathbf{z}_1; \ldots; u_n, \mathbf{z}_n)| \leq c_{A_0, \ldots, A_n} \sum_{G \in \mathcal{G}^{c}_{n+1}}\prod_{l \in E(G)}\frac{1}{\bigg(1 + \sqrt{(u_{s(l)} - u_{r(l)})^2 + | \mathbf{z}_{s(l)} - \mathbf{z}_{r(l)} |^2}\bigg)^{3}}
    \end{equation*}
Finally, summing over all graphs $G$ and multiindices $\mathbf{K}$ we have the claim.
\end{proof}

\section{Technical Propositions and Lemmata}
\begin{lem}\label{lem: 2}
The following integral is convergent:
\begin{align*}
    \int_{\mathbb{R}^3 \times \mathbb{R}^3} \di^3\mathbf{x} \di^3\mathbf{y} \, \frac{1}{(1 + |\mathbf{x}|)^3}\frac{1}{(1 + |\mathbf{y}|)^3} \frac{1}{(1 + |\mathbf{x} - \mathbf{y}|)^3} < \infty
\end{align*}
\end{lem}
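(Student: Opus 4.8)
The plan is to dominate the integrand pointwise by a sum of products, each involving only two of the three vectors $\mathbf{x}$, $\mathbf{y}$, $\mathbf{x}-\mathbf{y}$, and then to factorize. First I would sort the three numbers $|\mathbf{x}|$, $|\mathbf{y}|$, $|\mathbf{x}-\mathbf{y}|$ as $p\le q\le r$. Since $1+r\ge 1+q\ge 1$ and $1+r\ge 1+p\ge 1$, raising to the power $3/2$ gives $(1+r)^{3/2}\ge (1+q)^{3/2}$ and $(1+r)^{3/2}\ge (1+p)^{3/2}$; multiplying these, $(1+r)^3\ge (1+p)^{3/2}(1+q)^{3/2}$, whence
\[
    (1+|\mathbf{x}|)^3(1+|\mathbf{y}|)^3(1+|\mathbf{x}-\mathbf{y}|)^3=(1+p)^3(1+q)^3(1+r)^3\ge (1+p)^{9/2}(1+q)^{9/2}.
\]
Taking reciprocals and observing that $\{p,q\}$ is one of the pairs $\{|\mathbf{x}|,|\mathbf{y}|\}$, $\{|\mathbf{y}|,|\mathbf{x}-\mathbf{y}|\}$, $\{|\mathbf{x}|,|\mathbf{x}-\mathbf{y}|\}$, while all three corresponding products are nonnegative, I obtain the global pointwise estimate
\begin{align*}
    \frac{1}{(1+|\mathbf{x}|)^3(1+|\mathbf{y}|)^3(1+|\mathbf{x}-\mathbf{y}|)^3} &\le \frac{1}{(1+|\mathbf{x}|)^{9/2}(1+|\mathbf{y}|)^{9/2}}\\
    &\quad+\frac{1}{(1+|\mathbf{y}|)^{9/2}(1+|\mathbf{x}-\mathbf{y}|)^{9/2}}+\frac{1}{(1+|\mathbf{x}|)^{9/2}(1+|\mathbf{x}-\mathbf{y}|)^{9/2}}.
\end{align*}

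Next I would integrate the three terms on the right separately over $\mathbb{R}^3\times\mathbb{R}^3$. Each summand is a product of two factors depending on distinct variables among $\mathbf{x}$, $\mathbf{y}$ and $\mathbf{x}-\mathbf{y}$, so after the unit-Jacobian change of variables $\mathbf{u}=\mathbf{x}-\mathbf{y}$ (carried out at fixed $\mathbf{y}$ or at fixed $\mathbf{x}$, as appropriate for the term in question) each of the three integrals factorizes as $\big(\int_{\mathbb{R}^3}(1+|\mathbf{t}|)^{-9/2}\,\di^3\mathbf{t}\big)^2$. Finally, $\int_{\mathbb{R}^3}(1+|\mathbf{t}|)^{-9/2}\,\di^3\mathbf{t}=4\pi\int_0^\infty r^2(1+r)^{-9/2}\,\di r$ is finite, the radial integrand being continuous at the origin and decaying like $r^{-5/2}$ at infinity with $5/2>1$. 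Summing the three finite contributions gives the claimed convergence.

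I do not expect any substantial obstacle here; the only point requiring a little care is the choice of exponents in the split, which must leave each of the two surviving powers strictly above the spatial dimension $3$ --- the value $9/2$ works, as would any $3+\varepsilon$ with $0<\varepsilon\le 3/2$ --- and treating the three ``legs'' $\mathbf{x}$, $\mathbf{y}$, $\mathbf{x}-\mathbf{y}$ symmetrically, as above, avoids any case distinction on which of the three lengths is the largest.
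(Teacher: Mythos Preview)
Your proof is correct and notably cleaner than the paper's. The paper proceeds asymmetrically: it fixes $\mathbf{x}$, splits the $\mathbf{y}$-integration into three annular regions ($|\mathbf{y}|\ge 2|\mathbf{x}|$, $|\mathbf{y}|\le \tfrac12|\mathbf{x}|$, and the intermediate shell), and in each region establishes the convolution-type bound
\[
\int_{\mathbb{R}^3}\frac{\di^3\mathbf{y}}{(1+|\mathbf{y}|)^3(1+|\mathbf{x}-\mathbf{y}|)^3}\ \le\ \frac{C}{(1+|\mathbf{x}|)^{3-\epsilon}},
\]
after which the remaining $\mathbf{x}$-integral against $(1+|\mathbf{x}|)^{-3}$ converges. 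Your argument instead sorts the three lengths and redistributes the power of the largest symmetrically onto the two smaller ones, which avoids any regional case analysis and reduces everything to a single one-variable integral.

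What the paper's route buys is the intermediate inequality displayed above: it shows that convolving two such kernels returns a kernel of almost the same type, losing only an $\epsilon$ in the exponent. This is precisely what the paper needs in the remark following the lemma and in Theorem~\ref{thm: 1}, where the estimate is applied \emph{iteratively} over the edges of a connected graph with arbitrarily many vertices. Your bound proves the lemma as stated but does not directly yield that convolution inequality; extending your method to the $n$-vertex situation would require a separate argument.
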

\begin{proof}
We write:
\begin{equation*}
    \int_{\mathbb{R}^3 \times \mathbb{R}^3} \di^3 \mathbf{x} \di^3 \mathbf{y} \, \frac{1}{(1+ |\mathbf{x}|)^3}\frac{1}{(1+ |\mathbf{y}|)^3} \frac{1}{(1+  |\mathbf{x} - \mathbf{y}|)^3} = \int_{\mathbb{R}^3} \di^3\mathbf{x} \frac{1}{(1+ |\mathbf{x}|)^3} \int_{\mathbb{R}^3} \di^3\mathbf{y} \frac{1}{(1+ |\mathbf{y}|)^3} \frac{1}{(1+  |\mathbf{x} - \mathbf{y}|)^3}.
\end{equation*}
Let us start computing the innermost integral and, for that purpose, we divide the region of integration in three parts:
\begin{itemize}
    \item Region $\mathbf{I}$ corresponding to $|\mathbf{y}| \geq 2 |\mathbf{x}|$
    \item Region $\mathbf{II}$ corresponding to $|\mathbf{y}| \leq \frac{1}{2} |\mathbf{x}|$
    \item Region $\mathbf{III}$ corresponding to $\frac{1}{2}|\mathbf{x}| \leq |\mathbf{y}| \leq 2 |\mathbf{x}|$
\end{itemize}
We start focusing on region $\mathbf{I}$. In this region we have, by the triangular inequality:
\begin{equation*}
    |\mathbf{x} - \mathbf{y}| \geq |\mathbf{y}| - |\mathbf{x}| \geq \frac{1}{2}|\mathbf{y}|,
\end{equation*}
but also:
\begin{equation*}
    |\mathbf{x} - \mathbf{y}| \geq |\mathbf{x}|.
\end{equation*}
Then, we can perform the following estimate:
\begin{align*}
    \int_{|\mathbf{y}| \geq 2 |\mathbf{x}|} \di^3\mathbf{y} \frac{1}{(1+ |\mathbf{y}|)^3} \frac{1}{(1+  |\mathbf{x} - \mathbf{y}|)^{3}} 
    &\leq \frac{1}{(1+  |\mathbf{x}|)^{3/2}} \int_{|\mathbf{y}| \geq 2 |\mathbf{x}|} \di^3\mathbf{y} \frac{1}{(1+ \frac{1}{2}|\mathbf{y}|)^{9/2}}\\
    &= \frac{4 \pi}{(1+  |\mathbf{x}|)^{3/2}} \bigg( \frac{2}{3}\frac{1}{(1 + |\mathbf{x}|)^{3/2}} + \frac{2}{7}\frac{1}{(1 + |\mathbf{x}|)^{7/2}} - \frac{4}{5}\frac{1}{(1 + |\mathbf{x}|)^{5/2}} \bigg).
\end{align*}
Therefore, by picking an arbitrary small $\epsilon \in (0,3)$, we have shown: 
\begin{align*}
    \bigg| \frac{4 \pi}{(1+  |\mathbf{x}|)^{3/2}} \bigg( \frac{2}{3}\frac{1}{(1 + |\mathbf{x}|)^{3/2}} + \frac{2}{7}\frac{1}{(1 + |\mathbf{x}|)^{7/2}} - \frac{4}{5}\frac{1}{(1 + |\mathbf{x}|)^{5/2}} \bigg) \bigg| \leq \frac{C_{\mathbf{I}}}{(1 + |\mathbf{x}|)^{3-\epsilon}},
\end{align*}
where $C_{\mathbf{I}}$ is a non vanishing constant.\\
Now we look at region $\mathbf{II}$. Again by triangular inequality, follows:
\begin{equation*}
    |\mathbf{x} - \mathbf{y}| \geq \frac{1}{2}|\mathbf{x}|.
\end{equation*}
Then:
\begin{align*}
    \int_{|\mathbf{y}| \leq \frac{1}{2} |\mathbf{x}|} \di^3\mathbf{y} \frac{1}{(1+ |\mathbf{y}|)^3} \frac{1}{(1+  |\mathbf{x} - \mathbf{y}|)^{3}} &\leq \frac{1}{(1+ \frac{1}{2} |\mathbf{x}|)^{3}}\int_{|\mathbf{y}| \leq \frac{1}{2} |\mathbf{x}|} \di^3\mathbf{y} \frac{1}{(1+ |\mathbf{y}|)^3}\\
    &= \frac{4 \pi}{(1+ \frac{1}{2} |\mathbf{x}|)^{3}} \bigg( \frac{1}{2} \ln\bigg( 1 + \frac{1}{2}|\mathbf{x}| \bigg) + \frac{5}{2} - \frac{1}{1 + \frac{1}{2}|\mathbf{x}|} \bigg( 2 + \frac{1}{2 + |\mathbf{x}|} \bigg)\bigg).
\end{align*}
Now, for any choice of $\epsilon \in (0,3)$, the above can be estimated as:
\begin{align*}
    \bigg| \frac{4\pi}{(1+ \frac{1}{2} |\mathbf{x}|)^{3}} \bigg( \frac{1}{2} \ln\bigg( 1 + \frac{1}{2}|\mathbf{x}| \bigg) + \frac{5}{2} - \frac{1}{1 + \frac{1}{2}|\mathbf{x}|} \bigg( 2 + \frac{1}{2 + |\mathbf{x}|} \bigg)\bigg) \bigg| 
    &\leq \frac{C_{\mathbf{II}}}{\big( 1 + \frac{1}{2}|\mathbf{x}|\big)^{3-\epsilon}} \sim \frac{C_{\mathbf{II}}}{\big( 1 +|\mathbf{x}|\big)^{3-\epsilon}}
\end{align*}
where $C_{\mathbf{II}}$ is a non vaninshing constant.\\
Finally, in region $\mathbf{III}$, we notice that:
\begin{equation*}
    \frac{1}{(1+ |\mathbf{y}|)^3} \sim \frac{1}{(1+ |\mathbf{x}|)^3}.
\end{equation*}
Therefore:
\begin{align*}
    \int_{\frac{1}{2} |\mathbf{x}| \leq |\mathbf{y}| \leq 2 |\mathbf{x}|} \di^3\mathbf{y} \frac{1}{(1+ |\mathbf{y}|)^3} \frac{1}{(1+  |\mathbf{x} - \mathbf{y}|)^{3}} &\sim  \frac{1}{(1+ |\mathbf{x}|)^3} \int_{\frac{1}{2} |\mathbf{x}| \leq |\mathbf{y}| \leq 2 |\mathbf{x}|} \di^3\mathbf{y} \frac{1}{(1+  |\mathbf{x} - \mathbf{y}|)^{3}}\\
    &\leq  \frac{1}{(1+ |\mathbf{x}|)^3} \int_{|\mathbf{v}| \leq 3 |\mathbf{x}|} \di^3\mathbf{v} \frac{1}{(1+  |\mathbf{v}|)^{3}}\\
    &= \frac{4\pi}{(1+ |\mathbf{x}|)^3}  \bigg( \frac{1}{2} \ln\bigg( 1 + 3|\mathbf{x}| \bigg) + \frac{5}{2} - \frac{1}{1 + 3|\mathbf{x}|} \bigg( 2 + \frac{1}{2 + 6|\mathbf{x}|} \bigg)\bigg),
\end{align*}
where we have substituted $\mathbf{v} = \mathbf{x} - \mathbf{y}$ and the positivity of the integrand together with:
\begin{equation*}
    \{ \mathbf{v} \in \mathbb{R}^3 : |\mathbf{x} - \mathbf{v}| \leq 2 |\mathbf{x}|\} \subset  \{ \mathbf{v} \in \mathbb{R}^3 : |\mathbf{v}| \leq 3 |\mathbf{x}|\}.
\end{equation*}
Follows that, for any choice of $\epsilon \in (0,3)$:
\begin{align*}
    \bigg| \frac{4\pi}{(1+ |\mathbf{x}|)^3}  \bigg( \frac{1}{2} \ln\bigg( 1 + 3|\mathbf{x}| \bigg) + \frac{5}{2} - \frac{1}{1 + 3|\mathbf{x}|} \bigg( 2 + \frac{1}{2 + 6|\mathbf{x}|} \bigg)\bigg) \bigg| &\leq \frac{C_{\mathbf{III}}}{(1 + 3|\mathbf{x}|)^{3-\epsilon}} \sim \frac{C_{\mathbf{III}}}{(1 + |\mathbf{x}|)^{3-\epsilon}},
\end{align*}
where $C_{\mathbf{III}}$ is a non vanishing constant.\\
Therefore, for each region, calling $C = \max\{ C_{\mathbf{I}}, C_{\mathbf{II}}, C_{\mathbf{III}} \}$ the remaining integration in $\mathbf{x}$ gives the final result:
\begin{equation*}
    C\int_{\mathbb{R}^3} \di^3\mathbf{x} \frac{1}{(1+ |\mathbf{x}|)^3} \frac{1}{(1+ |\mathbf{x}|)^{3-\epsilon}} < \infty,
\end{equation*}
proving the claim.
\end{proof}
\begin{rem}
The above is the proof for two variables $\mathbf{x},\mathbf{y}$. However, by a sufficiently small choice of $\epsilon$, the proof extends to arbitrary many variables by iteratively reducing the degree of the denominator of a $\epsilon$ factor.
\end{rem}
\begin{prop}\label{prop: 1}
Let $A_0, \ldots, A_n \in \mathfrak{A}$ be fermionic functionals and $d \in \mathbb{N}$. Then, the compactly supported distribution:
\begin{equation*}
    \Psi^{\mathbf{K}}(x_1, \ldots, x_d, y_1, \ldots, y_d) = \prod_{l=1}^d \mathfrak{D}^{k_l}(x_l,y_l)(A_0 \otimes A_1 \otimes \ldots \otimes A_n)\bigg|_{\psi_0 = \cdots = \psi_n = 0},
\end{equation*}
where $\mathbf{K} = (k_1, \ldots, k_d)$ is a multiindex with entries $k_i \in \{ +,-\}$, and:
\begin{equation*}
    \mathfrak{D}^{+}(x_l,y_l) = \frac{\delta_r}{\delta \psi(x_l)} \otimes \frac{\delta}{\delta \overline{\psi}(y_l)} \, , \qquad    \mathfrak{D}^{-}(x_l,y_l) = \frac{\delta_r}{\delta \overline{\psi}(x_l)} \otimes \frac{\delta}{\delta \psi(y_l)},
\end{equation*}
has Fourier transform $(p_1, \ldots, p_d) \mapsto \hat{\Psi}^{\mathbf{K}}(-k_1 p_1, \ldots, -k_d p_d, k_1 p_1, \ldots, k_d p_d)$ of fast decay in a neighbourhood of the $d$-fold product of the forward light cone $(V^+)^d$ with that of the backward light cone $(V^-)^d$
\end{prop}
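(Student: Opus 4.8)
The plan is to combine the Paley--Wiener--Schwartz theorem, the microlocal characterisation of rapid decay, and the microcausality of $A_0,\dots,A_n$. Since each $A_i\in\mathscr F_{\mu c}$ is compactly supported, its $k$-th functional derivative is supported in $(\mathrm{supp}\,A_i)^k$, and evaluation at the vanishing configuration does not enlarge the support; hence $\Psi^{\mathbf K}$ is a compactly supported distribution on $\mathbb M^{2d}$, so $\hat\Psi^{\mathbf K}$ is entire and polynomially bounded, and by the standard relation between the wave front set and the decay of the Fourier transform of a compactly supported distribution (see \cite{HormanderI}) it is rapidly decreasing in every closed cone disjoint from $\Sigma(\Psi^{\mathbf K}):=\{\xi\neq 0:(X,\xi)\in\mathrm{WF}(\Psi^{\mathbf K})\text{ for some }X\}$. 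It thus suffices to show that the closed cone $\Gamma$ of covectors $\eta=(-k_1p_1,\dots,-k_dp_d,\,k_1p_1,\dots,k_dp_d)$ with $p_l\in\overline V^{k_l}\setminus\{0\}$ for all $l$ (and, symmetrically, the cone with $p_l\in\overline V^{-k_l}\setminus\{0\}$ for all $l$) meets $\Sigma(\Psi^{\mathbf K})$ only at the origin; a conic neighbourhood is then excluded as well, and the restriction to the forward/backward mass shells $p_l^0=\pm\omega_{p_l}$, which lie in $\overline V^\pm$, preserves the decay used in Propositions~\ref{thm: KMS} and~\ref{thm: ground}.

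Next I would use the product structure of $\Psi^{\mathbf K}$. Fixing the connected graph $G\in\mathcal{G}^c_{n+1}$ that records, for each line $l$, the two functionals that $\mathfrak D^{k_l}(x_l,y_l)$ differentiates --- the $x_l$-derivative acting on the functional at the source vertex $s(l)$ and the $y_l$-derivative on the one at the range vertex $r(l)$, with $s(l)<r(l)$ --- the distribution $\Psi^{\mathbf K}$ is, up to relabelling of the variables, a tensor product $\bigotimes_{i=0}^n B_i$, where $B_i$ is a functional derivative of $A_i$ at the zero configuration of order $k_i$ equal to the number of lines incident to the vertex $i$. Microcausality gives $\mathrm{WF}(B_i)\cap(\overline V^{+k_i}\cup\overline V^{-k_i})=\emptyset$: no covector tuple in $\mathrm{WF}(B_i)$ has all its components non-zero and future-directed causal, nor all of them past-directed causal. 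The wave front set rule for tensor products then forces any $(X,\eta)\in\mathrm{WF}(\Psi^{\mathbf K})$ to split vertex by vertex into blocks $(X_i,\eta_i)$ with either $\eta_i=0$ and $X_i\in\mathrm{supp}(B_i)$, or $(X_i,\eta_i)\in\mathrm{WF}(B_i)$, and not all $\eta_i$ vanishing.

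The conclusion is then geometric. For $\eta\in\Gamma$ with $p_l\in\overline V^{k_l}\setminus\{0\}$, every source endpoint of every line carries the past-directed non-zero covector $-k_lp_l\in\overline V^-$ while every range endpoint carries $k_lp_l\in\overline V^+$; hence the block $\eta_0$ attached to the lowest-labelled vertex $0$ --- which is a source of every line incident to it, and, $G$ being connected with $n+1\geq 2$ vertices, is incident to at least one line --- is non-zero and lies entirely in $\overline V^{-k_0}$. By microcausality $\eta_0\notin\mathrm{WF}(B_0)$, while $\eta_0\neq 0$ forbids the alternative; so no $(X,\eta)$ of this form lies in $\mathrm{WF}(\Psi^{\mathbf K})$, i.e.\ $\Gamma\cap\Sigma(\Psi^{\mathbf K})=\{0\}$, and the case $p_l\in\overline V^{-k_l}$ is identical with past and future exchanged (vertex $0$ then carrying an entirely future-directed block). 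I expect the main obstacle to be precisely this last step: microcausality excludes only \emph{uniformly} causal covector tuples, so one genuinely needs a vertex whose block is uniformly causal, which works here only because the orientation $s(l)<r(l)$ makes the incidence graph acyclic --- giving vertex $0$ as a pure source --- and because the operators $\Gamma_{\beta,2}^{ij}$ connect only distinct vertices $i<j$, so there are no self-lines to spoil the argument.
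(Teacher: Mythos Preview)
Your argument is essentially the content of Proposition~8 in \cite{FredenhagenLindnerKMS_2014} (equivalently Proposition~D.1 in \cite{BrunettiFredenhagenNic}), which is exactly what the paper invokes; the paper's own proof is little more than a reference to those results together with the remark that the spinorial functionals can be expanded in a trivialising basis $\{e_C\}_{C=1,\dots,4}$ so that each component $A_i^C$ is itself microcausal and the scalar argument applies componentwise. You do not make this reduction explicit, but since you only ever use the wave-front-set condition $\mathrm{WF}(B_i)\cap(\overline V^{+k_i}\cup\overline V^{-k_i})=\emptyset$, which holds for the vector-valued derivatives just as well as for their components, your proof goes through as written.

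One minor comment: the paper additionally records that $\mathrm{WF}(A_i^C)\subset\{\sum_k p_k=0\}$, a momentum-conservation constraint that does \emph{not} follow from microcausality alone but rather from the specific translation-invariant structure of the Wick polynomials appearing in the application. Your proof shows that this extra input is unnecessary: microcausality together with the observation that the lowest-labelled vertex is a pure source already suffices, which is in fact the cleaner version of the argument in the cited references.
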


\begin{proof}
The proof is analogous as Proposition 8 of \cite{FredenhagenLindnerKMS_2014} or Proposition D.1 of \cite{BrunettiFredenhagenNic}, with the only difference that the $A_i$, for all $i = 1, \ldots,n$, are spinorial microcausal functionals. This implies that, component-wise, they are microcausal functionals. Choosing a trivializing basis $\{ e_C\}_{C = 1, \ldots, 4}$, of the spinor bundle over Minkowski spacetime write:
\begin{equation*}
    A_i = \sum_{C = 1}^4 A_i^C e_C
\end{equation*}
Moreover, for any two distributions $u,v \in \mathcal{D}'(\mathbb{R}^4)$:
\begin{equation*}
    \mathrm{WF}(u+v) \subseteq \mathrm{WF}(u) \cup \mathrm{WF}(v)
\end{equation*}
So, by the microcausality of the functional, $\mathrm{WF}(A_i^C) \subset  \left\{ (x_1, \ldots, x_n; p_1, \ldots, p_n) \in \mathbb{R}^{4n}\times \Dot{\mathbb{R}}^{4n} \, : \, \sum_{k=1}^{n}p_k = 0 \right\}$ for each component $C$. The rest of the proof is analogous as the above cited references.
\end{proof}

\begin{lem}\label{lem: 1}
Let $f \in \mathcal{E}'(\mathbb{R}^4, \mathbb{C}^4)$, with $\mathrm{supp}(f) \subset B_R \subset \mathbb{R}^4$ whose Fourier transform is rapidly decaying in an open neighbourhood of the positive and negative lightcone. Then, the functions:
\begin{equation*}
    I^{\sigma, s; \mu}_{f}(x^0, \mathbf{x}) = \int \frac{\di^3\mathbf{p}}{2 \omega_p} e^{-i (\sigma x^0 \omega_p - s \mathbf{p} \mathbf{x})} \hat{f}(\omega_p, \mathbf{p})^{\nu}(m + (\sigma \gamma^0 \omega_p - \gamma^i p_i))_{\nu}^{\,\,\,\, \mu}
\end{equation*}
where $s = \pm$, $\sigma = \pm$ and $\omega_p = \sqrt{\mathbf{p}^2 + m^2}$, have an analytic continuation into the region $\mathbb{C}_{-\sigma} \times \mathbb{R}^{3}$ where $\mathbb{C}_{-\sigma}$ denotes respectively the upper or lower half plane respectively for $\sigma = -$ or $\sigma = +$. Moreover, if $m > 0$ and $r > \sqrt{2}R$, it holds that (for each $\mu$):
\begin{equation*}
    |I^{\sigma, s; \mu}_{f}(-\sigma iu, \mathbf{x})| \leq c_1 e^{-m r},
\end{equation*}
for $r = \sqrt{u^2 + \mathbf{x}^2}$ and for $c_1 >0$ depending on $f$. While, for $m=0$ we have that (for each $\mu$):
\begin{equation*}
    |I^{\sigma, s; \mu}_{f}(-\sigma iu, \mathbf{x})| \leq \frac{c'_1}{(1+r)^3}
\end{equation*}
where $c'_1 > 0$ depending on $f$, $\sigma u > 0$.
\end{lem}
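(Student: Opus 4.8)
The plan is to obtain the analytic continuation by a crude domination, and then to reduce the boundary value at $x^0=-\sigma iu$, $u>0$, to a smeared and differentiated Euclidean propagator whose decay is classical. Since $f\in\mathcal E'(\mathbb R^4,\mathbb C^4)$ has $\mathrm{supp}\,f\subset B_R$, the Paley--Wiener--Schwartz theorem makes $\hat f$ entire with $|\hat f(\zeta)|\le C(1+|\zeta|)^N e^{R|\Im\zeta|}$; moreover the covector $(\sigma\omega_{\mathbf p},\mathbf p)$ lies in the closed light cone and, for $|\mathbf p|$ large, in any prescribed conic neighbourhood of it, so the hypothesis forces $\mathbf p\mapsto\hat f(\sigma\omega_{\mathbf p},\mathbf p)$ to be rapidly decreasing on $\mathbb R^3$; hence
\[
F^\mu(\mathbf p):=\frac{1}{2\omega_{\mathbf p}}\,\hat f(\sigma\omega_{\mathbf p},\mathbf p)\,\big(m+\sigma\gamma^0\omega_{\mathbf p}-\gamma^ip_i\big)^{\mu}
\]
is integrable and rapidly decreasing (Schwartz if $m>0$; bounded, rapidly decreasing and in $L^1$ if $m=0$, the matrix factor being $O(1)$ at $\mathbf p=0$). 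For $\Im x^0=:b$ with $\sigma b<0$ one has $|e^{-i\sigma x^0\omega_{\mathbf p}}|=e^{\sigma b\,\omega_{\mathbf p}}$, exponentially small in $|\mathbf p|$, so the defining integral converges absolutely and locally uniformly on $\mathbb C_{-\sigma}\times\mathbb R^3$ and, differentiating under the integral sign, is holomorphic in $x^0$ there; at $x^0=-\sigma iu$ with $u>0$ one has $e^{-i\sigma x^0\omega_{\mathbf p}}=e^{-u\omega_{\mathbf p}}$, and it remains to estimate $I^{\sigma,s;\mu}_f(-\sigma iu,\mathbf x)=\int_{\mathbb R^3}e^{-u\omega_{\mathbf p}+is\,\mathbf p\cdot\mathbf x}F^\mu(\mathbf p)\,\di^3\mathbf p$.

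Writing $\hat f(\sigma\omega_{\mathbf p},\mathbf p)=\langle f(y),e^{i(\sigma\omega_{\mathbf p}y^0-\mathbf p\cdot\mathbf y)}\rangle$ and exchanging the $y$-pairing with the $\mathbf p$-integral --- legitimate because for $u>0$ the $\mathbf p$-integrand and all its $y$-derivatives are dominated by $C_k(1+|\mathbf p|)^k e^{-u\omega_{\mathbf p}/2}$ uniformly on $B_R$ --- one obtains
\[
I^{\sigma,s;\mu}_f(-\sigma iu,\mathbf x)=\big\langle f(y),\,\mathcal K^\mu\big(u-i\sigma y^0,\ s\mathbf x-\mathbf y\big)\big\rangle,\qquad \mathcal K^\mu(v,\mathbf w)=P^\mu(\partial_v,\partial_{\mathbf w})\,G_m(v,\mathbf w),
\]
where $P^\mu$ is a first-order constant-coefficient matrix differential operator coming from $m+\sigma\gamma^0\omega_{\mathbf p}-\gamma^ip_i$, and $G_m(v,\mathbf w)=\int\frac{\di^3\mathbf p}{2\omega_{\mathbf p}}e^{-v\omega_{\mathbf p}+i\mathbf p\cdot\mathbf w}$, defined for $\Re v>0$, is the continuation to complex Euclidean time of the free two-point function: $G_m(v,\mathbf w)=c\,m\,K_1\big(m\sqrt{v^2+|\mathbf w|^2}\big)/\sqrt{v^2+|\mathbf w|^2}$ for $m>0$ and $G_0(v,\mathbf w)=c'/(v^2+|\mathbf w|^2)$, with principal square root (no cut is met since $\Re v>0$). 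As $f$ has order $N$ and support in $B_R$, $|I^{\sigma,s;\mu}_f(-\sigma iu,\mathbf x)|\le C\sup_{y\in B_R}\sum_{|\alpha|\le N}|\partial_y^\alpha\mathcal K^\mu(u-i\sigma y^0,s\mathbf x-\mathbf y)|$, which reduces everything to pointwise bounds on $\mathcal K^\mu$; one also records the crude bound $|I^{\sigma,s;\mu}_f(-\sigma iu,\mathbf x)|\le\|F^\mu\|_{L^1}$, valid whenever $u\ge0$.

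For $m>0$, the exponential decay of $K_\nu$ and its derivatives for large argument, together with the fact that $\mathcal K^\mu$ is holomorphic and locally bounded on the set $\{\Re(v^2+|\mathbf w|^2)>0\}$, gives $|\partial_y^\alpha\mathcal K^\mu(v,\mathbf w)|\le C_\alpha\,e^{-m\,\Re\sqrt{v^2+|\mathbf w|^2}}$ for $|v^2+|\mathbf w|^2|$ large. The heart of the matter is then the elementary inequality
\[
\Re\sqrt{(u-i\sigma y^0)^2+|s\mathbf x-\mathbf y|^2}\ \ge\ r-2R\qquad\text{for all }y\in B_R,\ \text{provided } r>\sqrt2\,R,
\]
which follows from $\Re\sqrt{(u-ib)^2+c^2}\ge\sqrt{u^2+c^2}-b$ (valid when $\sqrt{u^2+c^2}\ge b$), together with $|s\mathbf x-\mathbf y|\ge|\mathbf x|-|\mathbf y|$, $|\mathbf y|^2+(y^0)^2\le R^2$ and $\sqrt{u^2+(|\mathbf x|-\rho)^2}\ge\sqrt{u^2+|\mathbf x|^2}-\rho$; the worst case $|\mathbf y|\sim|\mathbf x|/2$ is exactly the one in which positivity of $\Re(v^2+|\mathbf w|^2)$ amounts to $u^2+|\mathbf x|^2/2>R^2$, i.e. to $r>\sqrt2\,R$. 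Hence $|\partial_y^\alpha\mathcal K^\mu|\le C e^{2mR}e^{-mr}$ for $r$ large, and on the bounded window $\sqrt2\,R<r\le M$ one uses $|I^{\sigma,s;\mu}_f|\le\|F^\mu\|_{L^1}\le(\|F^\mu\|_{L^1}e^{mM})\,e^{-mr}$; pairing against $f$ gives $|I^{\sigma,s;\mu}_f(-\sigma iu,\mathbf x)|\le c_1 e^{-mr}$. For $m=0$ one uses instead $|\partial_y^\alpha\mathcal K^\mu(v,\mathbf w)|\le C_\alpha(|v|+|\mathbf w|)\,|v^2+|\mathbf w|^2|^{-2}$ from the explicit $G_0$: when $r$ exceeds a fixed multiple of $R$, the same geometry yields $|v^2+|\mathbf w|^2|\ge\Re(v^2+|\mathbf w|^2)\ge c\,r^2$, hence $|\partial_y^\alpha\mathcal K^\mu|\le Cr^{-3}$ and $|I^{\sigma,s;\mu}_f|\le C'r^{-3}$, while for the remaining bounded range of $r$ the crude bound $\|F^\mu\|_{L^1}$ suffices; together these give $|I^{\sigma,s;\mu}_f(-\sigma iu,\mathbf x)|\le c_1'(1+r)^{-3}$.

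The main obstacle is the bookkeeping in the last two steps: justifying the interchange of the distributional pairing in $y$ with the $\mathbf p$-integral, identifying $\mathcal K^\mu$ with the continued and differentiated Bessel (resp. rational) kernel together with its precise domain of analyticity, and above all the geometric estimate on $\Re\sqrt{(u-i\sigma y^0)^2+|s\mathbf x-\mathbf y|^2}$, which simultaneously delivers the sharp exponential rate $m$ and accounts for the threshold $r>\sqrt2\,R$. The analytic-continuation step and the passage to pointwise bounds on $\mathcal K^\mu$ are routine.
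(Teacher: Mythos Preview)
Your argument is correct and takes a genuinely different route from the paper's. The paper treats the massive case by simply invoking Proposition~7 of \cite{FredenhagenLindnerKMS_2014} (Paley--Wiener plus a contour shift in the spatial momenta), and devotes all its effort to the massless case, which it handles by a purely momentum-space computation: it reinstates a $p_0$-integral via $\tfrac{1}{2|\mathbf p|}e^{-u|\mathbf p|}=\frac{1}{2\pi}\int \frac{e^{ip_0 u}}{(p_0)^2+|\mathbf p|^2}\,\di p_0$, performs a rotation in the $(p_0,p_1)$-plane by the angle $\alpha$ determined by $(u,\mathbf x)$, and then integrates in the complex $k_0$-plane, showing that the branch-cut contribution cancels after the $k_1$-integration so that only the pole at $k_0=i|\mathbf k|$ survives, producing the factor $e^{-r|\mathbf k|}$; the rapid decrease of $\hat f$ and an explicit incomplete-gamma computation finish the estimate.

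You instead undo the Fourier transform on $f$, recognise the remaining $\mathbf p$-integral as the analytically continued free two-point function $G_m$, and reduce everything to pointwise bounds on the Bessel/rational kernel $\mathcal K^\mu=P^\mu(\partial)G_m$ evaluated at $(u-i\sigma y^0,s\mathbf x-\mathbf y)$; the threshold $r>\sqrt2\,R$ and the rate $e^{-mr}$ then drop out of the geometric inequality for $\Re\sqrt{(u-ib)^2+c^2}$. This is more conceptual and treats $m>0$ and $m=0$ uniformly, whereas the paper's contour-rotation argument is sharper bookkeeping in momentum space but specific to the massless kernel. Two small points: in the statement the argument of $\hat f$ is $(\omega_{\mathbf p},\mathbf p)$, not $(\sigma\omega_{\mathbf p},\mathbf p)$, so in your kernel the complex time should read $u-iy^0$ rather than $u-i\sigma y^0$ (this does not affect the estimates); and your stated bound $|\partial_y^\alpha\mathcal K^\mu|\le C_\alpha(|v|+|\mathbf w|)\,|v^2+|\mathbf w|^2|^{-2}$ is the $|\alpha|=0$ case --- higher $y$-derivatives pick up extra factors of $(|v|+|\mathbf w|)/|v^2+|\mathbf w|^2|$, which only improves the large-$r$ decay, so the conclusion $\le c_1'(1+r)^{-3}$ follows a fortiori.
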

\begin{proof}
The proof of the $m>0$ case, follows the same steps performed in Proposition $7$ of \cite{FredenhagenLindnerKMS_2014} that uses the Paley-Wiener Theorem to estimate the Fourier transform of $f$. The only precaution to take is that $f$ is vector valued. The integrand is of the kind:
\begin{equation*}
    \hat{f}(\omega_p,\mathbf{p})^{\nu} \cancel{p}_{\nu}^{\,\,\,\,\mu}, \qquad \hat{f}(\omega_p,\mathbf{p})^{\nu} m\delta_{\nu}^{\,\,\,\,\mu}
\end{equation*}
However, as componentwise $\hat{f}(\omega_p,\mathbf{p})^{\nu}$ is fastly decreasing, the multiplication by a constant or by a polynomial in the momentum does not change the fastly decreasing property. Therefore, both remain of fast decay and Proposition 7 in \cite{FredenhagenLindnerKMS_2014} ensures the validity of the statement componentwise.\\\\
Let us extensively discuss the $m=0$ case where $\omega_p = | \mathbf{p} |$. First notice that:
\begin{equation*}
    (\sigma\gamma^0 \omega_p - \gamma^i p_i)_{\nu}^{\,\,\,\, \mu} = |\mathbf{p}| \Gamma_{\nu}^{\,\,\,\, \mu},
\end{equation*}
where $\Gamma_{\nu}^{\,\,\,\, \mu}$ is the resulting combination of Dirac $\gamma$ matrices. Therefore, for each fixed $\nu = 0,1,2,3$, focus on:
\begin{equation*}
    I_{f}^{\sigma, s} = \frac{1}{2} \int \di^3\mathbf{p}e^{i \mathbf{p} \mathbf{x} s} e^{-i\sigma x^0 |\mathbf{p}|} \hat{f}^{\nu}(|\mathbf{p}|,\mathbf{p}).
\end{equation*}
The functions $\hat{f}(p)^{\nu}$ are, componentwise, polynomially bounded functions while the remaining integrands are:
\begin{equation*}
    I^{\sigma, s}_{f} : \frac{e^{i \mathbf{p} \mathbf{x} s -i \sigma \Re(z) |\mathbf{p}| + \sigma \Im(z) |\mathbf{p}|}}{2}\, ,
\end{equation*}
where we have denoted by $z$ the complex extension of $x^0$. Therefore, for $\sigma \Im(z) < 0$, the integral has analytic extension in $\mathbb{C}_{-\sigma} \times \mathbb{R}^{3}$. This proves the first statement.\\
For the second statement, the existence of the complex extension, leads to the computation for each fixed $\nu = 0,1,2,3$ of:
\begin{equation}\label{eq: integralepartenza}
    \frac{1}{2}\int \di^3\mathbf{p}e^{i \mathbf{p} \mathbf{x} s} e^{-\sigma u |\mathbf{p}|} \hat{f}^{\nu}(|\mathbf{p}|,\mathbf{p}), 
\end{equation}
where $\sigma u > 0$. Let us introduce a lower bound on the magnitude of momentum:
\begin{equation*}
    \frac{1}{2}\int \di^3\mathbf{p}e^{i \mathbf{p} \mathbf{x} s} e^{-\sigma u |\mathbf{p}|} \hat{f}^{\nu}(|\mathbf{p}|,p)  = \lim_{\chi \to 0} \frac{1}{2} \int_{\chi}^{\infty} \di |\mathbf{p}| \int_{0}^{\pi}\di \theta \sin \theta \int_{0}^{2 \pi} \di \varphi \, |\mathbf{p}|^2 e^{i \mathbf{p} \mathbf{x} s} e^{-\sigma u |\mathbf{p}|} \hat{f}^{\nu}(|\mathbf{p}|,\mathbf{p}).
\end{equation*}
Then, the exponential in $u$ is rewritten using that $|\mathbf{p}| > 0$:
\begin{equation*}
    \frac{1}{2 \pi} \int_{\mathbb{R}} dp_0 \frac{e^{i\sigma p_0 u}}{(p_0)^2 + |\mathbf{p}|^2} = \frac{e^{-\sigma|\mathbf{p}| u}}{2 |\mathbf{p}|} \hspace{20pt} \sigma u > 0.
\end{equation*}
This is proven by the aid of complex integration, using the Jordan Lemma and the Residue Theorem in the upper half plane. Therefore, the above integral becomes: 
\begin{equation*}
    = \lim_{\chi \to 0} \int d^4 p \frac{|\mathbf{p}|}{2 \pi} \frac{e^{i(\sigma p_0 u + \mathbf{p} \mathbf{x} s)}}{(p_0)^2 + |\mathbf{p}|^2}\hat{f}^{\nu}(|\mathbf{p}|,\mathbf{p}).
\end{equation*}
Let us now perform the change of coordinates:
\begin{equation*}
    \mathbf{x} = \mathbf{n} r \cos \alpha \, , \quad u = r \sin \alpha \, , \quad r = \sqrt{u^2 + |\mathbf{x}|^2},
\end{equation*}
where we choose $\mathbf{n} = (\sigma/s,0,0)$ and $0 < \alpha < \pi/2$. Correspondingly, we perform the change of coordinates also in momentum space:
\begin{equation*}
    \begin{pmatrix}
        k_0\\
        k_1\\
        k_2\\
        k_3
    \end{pmatrix} = \begin{pmatrix}
        \sin \alpha & \cos \alpha & 0 & 0\\
        \cos \alpha & -\sin \alpha & 0 & 0\\
        0 & 0 & 1 & 0\\
        0 & 0 & 0 & 1
    \end{pmatrix} \begin{pmatrix}
        p_0\\
        p_1\\
        p_2\\
        p_3
    \end{pmatrix}
\end{equation*}
From which it holds:
\begin{equation*}
    (p_0)^2 + |\mathbf{p}|^2 = (k_0)^2 + |\mathbf{k}|^2\, , \quad \mathbf{p}(\mathbf{k}) = (k_0 \cos \alpha - k_1 \sin \alpha, k_2, k_3),
\end{equation*}
and we insert this change of coordinates in the above integration:
\begin{equation}\label{eq: integraletutto}
    = \lim_{\chi \to 0} \int d^4 k \frac{|\mathbf{p}(\mathbf{k})|}{2 \pi} \frac{e^{i\sigma rk_0}}{(k_0)^2 + |\mathbf{k}|^2}\hat{f}^{\nu}(|\mathbf{p}(\mathbf{k})|,\mathbf{p}(\mathbf{k})).
\end{equation}
We distinguish the $r < 1$ and $r > 1$ cases.\\\\
For $r < 1$, by the fast decay property of $\hat{f}^{\nu}(|\mathbf{p}|,\mathbf{p})$ we know that for any $N \in \mathbb{N}$ it exist a constant $C_N \in \mathbb{R}$ such that for any fixed $\nu = 0,1,2,3$:
\begin{equation}\label{eq: fast decay}
    |\hat{f}^{\nu}(|\mathbf{p}|,\mathbf{p})| \leq \frac{C_N}{(1 + |\mathbf{p}|)^N}.
\end{equation}
Therefore, directly looking at \eqref{eq: integralepartenza}, estimating $|\hat{f}^{\nu}(|\mathbf{p}|,\mathbf{p})|$ with some choice of $N \in \mathbb{N}$ large enough:
\begin{equation*}
    \bigg|\frac{1}{2}\int \di^3\mathbf{p}e^{i \mathbf{p} \mathbf{x} s} e^{-\sigma u |\mathbf{p}|} \hat{f}^{\nu}(|\mathbf{p}|,\mathbf{p}) \bigg| \leq c \qquad r < 1
\end{equation*}
for some constant $c \in \mathbb{R}^+$ depending on $f$.\\\\
For $r > 1$, in order to compute \eqref{eq: integraletutto}, we displace $k_0$ in the complex plane calling it, from now on, $z$. The integral has two poles at $z = \pm i |\mathbf{k}|$ and two branch cuts at:
\begin{equation*}
    z = k_1 \tan \alpha \pm i \frac{\sqrt{k_2^2 + k_3^2}}{\cos \alpha}.
\end{equation*}
So, assuming without loss of generality that $\sigma = +$, we choose a contour of integration $\Gamma$ in the upper half plane (lower if instead $\sigma = -$) of the following shape:
\begin{figure}[H]
 		\centering
 		\includegraphics[width=1.10\columnwidth]{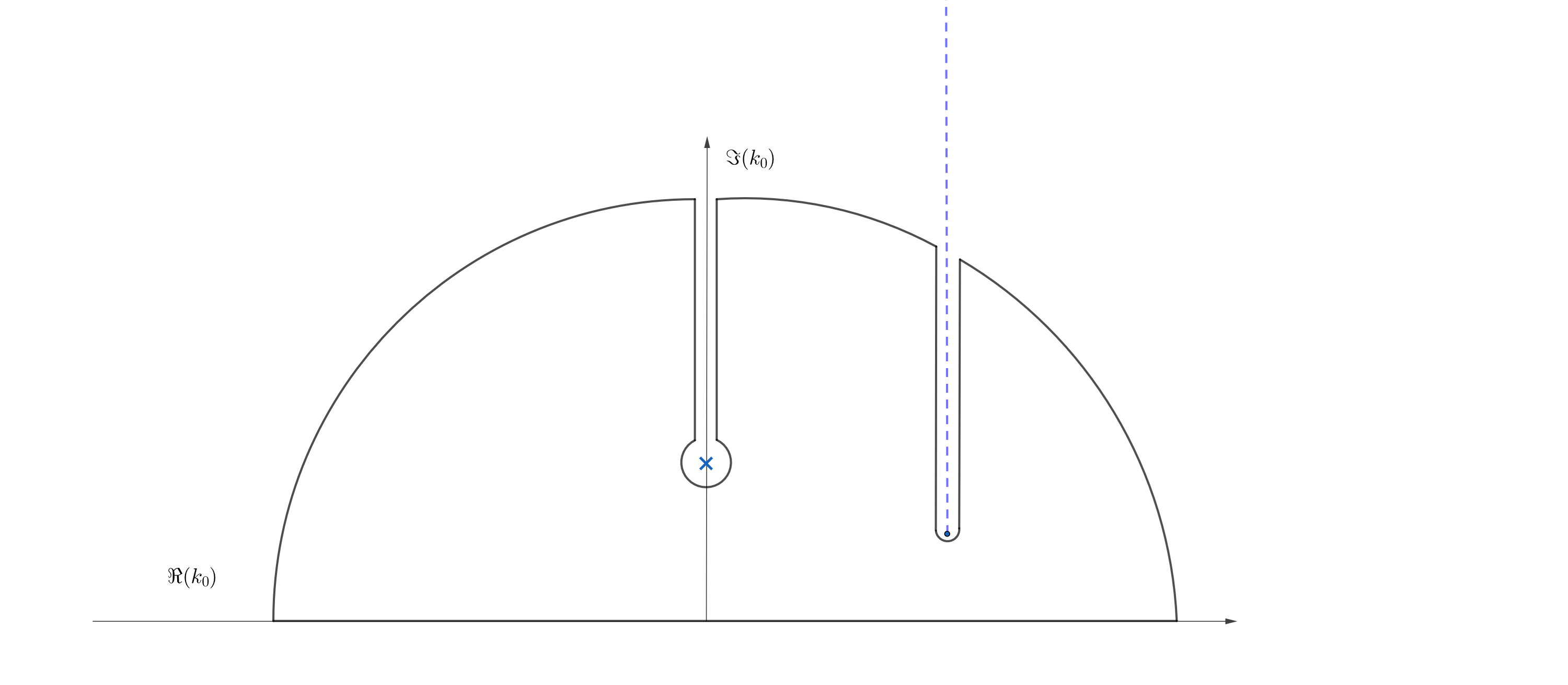}
 		\caption{The blue cross denotes the pole while the blue dashed line the branch cut}
\end{figure}
Therefore, by the Cauchy Integral Theorem:
\begin{equation*}
    0 = \oint_{\Gamma} \di z \frac{|\mathbf{p}(\mathbf{k})|}{2 \pi} \frac{e^{i\sigma rz}}{z^2 + |\mathbf{k}|^2}\hat{f}^{\nu}(|\mathbf{p}(\mathbf{k})|,\mathbf{p}(\mathbf{k})) = I - I_{pole} + I_{branch} + I_{s},
\end{equation*}
where $I$ denotes the integral we started with, $I_{pole}, I_{branch}$ the contributions of the pole and of the branch cut, while $I_s$ is the integral on the outer semicircle. In particular, by Jordan's Lemma and the fast decay assumption \eqref{eq: fast decay}, in the limit in which the radius of the semicircle is sent to infinity we have $I_s = 0$.\\
To compute the contribution of the pole, we use the Residue Theorem:
\begin{equation*}
    I_{pole} = \frac{e^{-r |\mathbf{k}|}}{2 |\mathbf{k}|} \bigg( \sqrt{|i |\mathbf{k}| \, \cos \alpha - k_1 \sin \alpha|^2 + k_2^2 + k_3^2} \bigg)\hat{f}^{\nu}(|\mathbf{p}(\mathbf{k})|, \mathbf{p}(\mathbf{k}))\bigg|_{k_0 = i |\mathbf{k}|}.
\end{equation*}
While, the branch cut contribution is:
\begin{align*}
    I_{branch} = \int_{i\frac{\sqrt{k_2^2 + k_3^2}}{\cos \alpha}}^{i \infty} \di \tau \frac{e^{i r z(\tau)}}{z(\tau)^2 + |\mathbf{k}|^2} \bigg( &\sqrt{|z(\tau) \, \cos \alpha - k_1 \sin \alpha|^2 + k_2^2 + k_3^2} \hat{f}^{\nu}(|\mathbf{p}|, \mathbf{p})\bigg|_{k_0 = z(\tau) + \epsilon} -\\
    &\sqrt{|z(\tau) \, \cos \alpha - k_1 \sin \alpha|^2 + k_2^2 + k_3^2} \hat{f}^{\nu}(|\mathbf{p}|, \mathbf{p})\bigg|_{k_0 = z(\tau) -\epsilon}\bigg),
\end{align*}
parameterized in $\tau$:
\begin{equation*}
    z(\tau) = k_1 \tan \alpha + \tau.
\end{equation*}
Performing the change of variable $\tau' = -i \tau  \cos \alpha$, we get:
\begin{equation*}
    \sqrt{|z(\tau')\cos \alpha - k_1 \sin \alpha|^2 + k_2^2 + k_3^2} = \sqrt{(\tau')^2 + k_2^2 + k_3^2} \,, \qquad \mathbf{p}(\mathbf{k}) = (i\tau', k_2, k_3).
\end{equation*}
This implies that $\sqrt{|z(\tau) \, \cos \alpha - k_1 \sin \alpha|^2 + k_2^2 + k_3^2} \hat{f}^0(|\mathbf{p}|, \mathbf{p})$, in the above integral, does not depend on $k_1$. Adding the integration over $k_1$, anyway present in \eqref{eq: integraletutto}, we get:
\begin{align*}
    \frac{i}{\cos \alpha} \int \di k_1 \int_{\sqrt{k_2^2 + k_3^2}}^{\infty} \di \tau' \frac{e^{i r k_1 \tan \alpha} e^{-r \tau'/ \cos \alpha}}{z(\tau')^2 + |\mathbf{k}|^2} \bigg( &\sqrt{|z(\tau') \, \cos \alpha - k_1 \sin \alpha|^2 + k_2^2 + k_3^2} \hat{f}^{\nu}(|\mathbf{p}|, \mathbf{p})\bigg|_{k_0 = z(\tau') + \epsilon} -\\
    & \sqrt{|z(\tau') \, \cos \alpha - k_1 \sin \alpha|^2 + k_2^2 + k_3^2} \hat{f}^{\nu}(|\mathbf{p}|, \mathbf{p})\bigg|_{k_0 = z(\tau') -\epsilon}\bigg).
\end{align*}
However, extending the integration in $k_1$ in the upper half of the complex plane and evaluating the integral in the upper semicircle $C_+$ we get contributions just from the poles located at:
\begin{equation*}
    w_{pole} = -i \tau' \sin \alpha \pm \cos \alpha \sqrt{(\tau')^2 - k_2^2 - k_3^2}.
\end{equation*}
However, as $0 < \alpha < \pi/2$, both have negative imaginary part and Residue Theorem together with the Jordan's Lemma ensure that: $\int dk_1 I_{branch} = 0$.\\
Summarizing, and using the fast decay assumption \eqref{eq: fast decay} for any $N \in \mathbb{N}$:
\begin{align*}
    \bigg| \lim_{\chi \to 0} \int \di^4 p \frac{|\mathbf{p}|}{2 \pi} \frac{e^{i(\sigma up_0 + \mathbf{p} \mathbf{x} s)}}{(p_0)^2 + |\mathbf{p}|^2}\hat{f}^{\nu}(|\mathbf{p}|,\mathbf{p}) \bigg| &= \bigg| \int \di^3\mathbf{k} \frac{e^{-r |\mathbf{k}|}}{2 |\mathbf{k}|} \bigg( \sqrt{|i |\mathbf{k}| \, \cos \alpha - k_1 \sin \alpha|^2 + k_2^2 + k_3^2} \bigg)\hat{f}^{\nu}(|\mathbf{p}|, \mathbf{p})\bigg|_{k_0 = i |\mathbf{k}|} \bigg|\\
    &\leq C_N \int \di^3\mathbf{k} \frac{e^{-r |\mathbf{k}|}}{2 |\mathbf{k}| (1 + |\mathbf{p}|)^N} \sqrt{|i |\mathbf{k}| \, \cos \alpha - k_1 \sin \alpha|^2 + k_2^2 + k_3^2} \, \bigg|_{k_0 = i |\mathbf{k}|}.
\end{align*}
In spherical coordinates:
\begin{equation*}
    |\mathbf{p}| = |\mathbf{k}| \sqrt{\cos^2 \alpha + \sin^2 \theta \, \cos^2 \varphi \, \sin^2 \alpha + \sin^2 \theta \, \sin^2 \varphi + \cos^2 \theta} \coloneqq |\mathbf{k}| A(\alpha, \theta, \varphi).
\end{equation*}
So, the above becomes:
\begin{equation*}
    \leq \frac{C_N}{2}\int \di^3 \mathbf{k} \frac{e^{-r |\mathbf{k}|}}{(1 + |\mathbf{k}|A(\alpha,\theta, \varphi))^N} A(\alpha, \theta, \varphi).
\end{equation*}
Using that $1 \leq A(\alpha, \theta, \varphi) \leq \sqrt{2}$, including all constants in a single $C \in \mathbb{R}$ and computing the angular integrals:
\begin{equation*}
    \leq C e^{r} r^{N-3} \int_{r\chi + r}^{\infty} \di x \, (x^{2-N} + r^2 x^{-N} - 2rx^{1-N}) e^{-x}.
\end{equation*}
Let us now take $\chi \to 0$, moreover from the definition of the upper incomplete Euler gamma function:
\begin{equation*}
    \Gamma(k,r) = \int_{r}^{\infty} x^{k-1} e^{-x} \di x
\end{equation*}
and the identity:
\begin{equation*}
    \Gamma(k+1,r) = k! e^{-r} \sum_{m=0}^{k}\frac{r^{m}}{m!} \, , \qquad k \in \mathbb{N}^*
\end{equation*}
restricting to $N = 0$ we get:
\begin{equation*}
    \bigg| \lim_{\chi \to 0} \int \di^4 p \frac{|\mathbf{p}|}{2 \pi} \frac{e^{i(sup_0 + \mathbf{p} \cdot \mathbf{x})}}{(p_0)^2 + |\mathbf{p}|^2}\hat{f}^{\nu}(|\mathbf{p}|,p) \bigg| \leq \frac{C}{r^3} \qquad r > 1.
\end{equation*}
Finally, combining the $r > 1$ and $r < 1$ results, we know that it exists a sufficiently big $c'_1 \in \mathbb{R}^+$ depending on $f$ such that:
\begin{equation*}
    \bigg| \lim_{\chi \to 0} \int \di^4 p \frac{|\mathbf{p}|}{2 \pi} \frac{e^{i(sup_0 + \mathbf{p} \cdot \mathbf{x})}}{(p_0)^2 + |\mathbf{p}|^2}\hat{f}^{\nu}(|\mathbf{p}|,p) \bigg| \leq \frac{c'_1}{(1+r)^{3}}.
\end{equation*}
This ends the proof.
\end{proof}
\printbibliography
\end{document}